\newcommand{\tphi}{\tilde\phi}\newcommand{\tx}{\tilde x}
\newcommand{\tz}{\tilde z}
 \newcommand{\cA}{\mathcal A}
 \newcommand{\cR}{\mathcal R}
\newcommand{\sA}{\mathcal {A}}
\newcommand{\partderiv}[2]{\ensuremath{\frac{\partial #1}{\partial #2}}}
\newcommand{\cnacl}{C_{\textrm{\tiny{NaCl}}}}
\newcommand{\fp}{f_{+}}
\newcommand{\bp}{\mathbf p }
\newcommand{\Hplus}{{\textrm H}^+}
\newcommand{\kw}{{K}_{\textrm{\begin{tiny}w\end{tiny}}}}
 \newcommand{\kh}{K_{\textrm{\begin{tiny}h\end{tiny}}}}
\newcommand{\kgzero}{K_{\textrm{\begin{tiny}G\end{tiny}}}^{\textrm{\begin{tiny}
0\end{tiny}}}}
\newcommand{\Kh}{K_{\textrm{\begin{tiny}H\end{tiny}}}}
\newcommand{\nuw}{\nu_{\textrm{\begin{tiny}w\end{tiny}}}}
\newtheorem{remark}{Remark}
\newcommand{\selast} {s_{\textrm{\begin{tiny} {elast} \end{tiny}}}}
\newcommand{\smix}{s_{\textrm{\begin{tiny} {mix} \end{tiny}}}}
\newcommand {\sion}{s_{\textrm{\begin{tiny} {ion} \end{tiny}}}}
\newcommand{\gmix}{g_{\textrm{\begin{tiny} {mix}
\end{tiny}}}}
\newcommand{\bx}{{\mathbf x}}
\newcommand{\Cna}{C_{\textrm{\begin{tiny}Na\end{tiny}}}}
\newcommand{\Cah}{C_{\textrm{\begin{tiny}AH\end{tiny}}}}
\newcommand{\Ch}{C_{\textrm{\begin{tiny}H\end{tiny}}}}
\newcommand{\Ka}{K_{\textrm{\begin{tiny}A\end{tiny}}}}
\newcommand{\Kg}{K_{\textrm{\begin{tiny}G\end{tiny}}}}
\newcommand{\vw}{v_{\textrm{\begin{tiny}w\end{tiny}}}}
\newcommand{\Ccl}{C_{\textrm{\begin{tiny}Cl\end{tiny}}}}
\newcommand{\Ccchamber}{C_{\textrm{\begin{tiny}C\end{tiny}}}^{\textrm{\begin{
tiny}II\end{tiny}}}}
\newcommand{\Cg}{C_{\textrm{\begin{tiny}G\end{tiny}}}}
\newcommand{\Cnacl}{C_{\textrm{\begin{tiny}NaCl\end{tiny}}}}
\newcommand{\Caminus}{C_{\textrm{\begin{tiny}A-\end{tiny}}}}
\newcommand{\ChI}{C_{\textrm{\begin{tiny}H\end{tiny}}}^{\textrm{\begin{tiny}
I\end{tiny}}}}
\newcommand{\ChII}{C_{\textrm{\begin{tiny}H\end{tiny}}}^{\textrm{\begin{tiny}
II\end{tiny}}}}
\newcommand{\kmar}{k_{\textrm{\begin{tiny}mar\end{tiny}}}}
\newcommand{\Chcominus}{C_{{\textrm{\begin{tiny}H\end{tiny}}}{\textrm{\begin{
tiny}CO\end{tiny}}}_3^{\textrm{\begin{tiny}-\end{tiny}}}}}
\newcommand{\Jhcominus}{J_{{\textrm{\begin{tiny}H\end{tiny}}}_2{\textrm{\begin{
tiny}CO\end{tiny}}}_3^{-}}^{\textrm{\begin{tiny}II\end{tiny}}}}
\newcommand{\by}{{\mathbf y}}\newcommand{\eps}{{\epsilon}}
\newcommand{\bt}{{\bar t}}
\newcommand{\gelast}{{g_{\textrm{\begin{tiny}elast\end{tiny}}}}}
\newcommand{\Gelast}{{G_{\textrm{\begin{tiny}elast\end{tiny}}}}}
\newcommand{\Gmix}{{G_{\textrm{\begin{tiny}mix\end{tiny}}}}}
\newcommand{\gelastuni}{{g^{\textrm{\begin{tiny}uni\end{tiny}}}_{\textrm{\begin
{tiny}elast\end{tiny}}}}}
\begin{document}

\author{ 
Lingxing Yao\\
Department of Mathematics, Applied Mathematics and Statistics\\ Case Western Reserve University\\
Cleveland, OH 44106
\\
{ \scriptsize{and}}\\
M.~Carme Calderer\and Yoichiro Mori\\School of Mathematics\\ 
{\scriptsize{and}}\\
 Ronald A.~Siegel
 \\
 Departments of Pharmaceutics and Biomedical Engineering\\
 University of Minnesota, Minneapolis, MN 55455
 }

 \title{Rhythmomimetic drug delivery: modeling, analysis and numerical simulation}

\date{\today}
\maketitle
\bibliographystyle{siam}
 \begin{AMS}
  34C12, 37G15, 74B20, 82B26, 94C45
  \end{AMS}
\thanks{M. Carme Calderer acknowledges the  support of the National Science Foundation through the grants DMS-0909165 and DMS-GOALI-1009181.
Ronald A. Siegel acknowledges support from the National Institutes of Health through grant HD040366.}

\begin{keywords}{polyelectrolyte gel, volume transition, chemical reaction, hysteresis, weak solution, inertial manifold,  limit cycle, competitive dynamical system,
multiscale.}\end{keywords}

\begin{abstract}We develop, analyse and numerically simulate  a model of a prototype, glucose-driven, rhythmic drug delivery device, 
aimed at hormone therapies,  and based on chemomechanical interaction in a polyelectrolyte gel membrane. The pH-driven  interactions   trigger volume phase transitions 
between the swollen and collapsed states of the gel.	For a robust set of material parameters, we find a class of solutions of the governing system that
oscillate between such states, and cause the membrane to rhythmically swell, allowing for transport of the drug, fuel and reaction products across it, and collapse,
hampering all transport across it. The frequency of the oscillations can be adjusted so that it matches the natural frequency of the hormone to be released.  
The work is linked to extensive laboratory experimental studies of the device built by Siegel's team. The thinness of the membrane and its clamped boundary, 
together with the homogeneously held conditions in the experimental apparatus,  justify  neglecting spatial dependence on the fields of the problem.   
Upon identifying the forces and energy relevant to the system, and taking into account its dissipative properties, we apply  Rayleigh's 
variational principle to obtain the governing equations.  The material assumptions guarantee the monotonicity of the system and lead to the existence of a 
three dimensional limit cycle. By scaling and asymptotic  analysis, this limit cycle is  found to be related to a two-dimensional one that encodes the volume phase 
transitions of the model.  The identification of the relevant parameter set of the model is aided by a Hopf bifurcation study of steady state solutions. 
\end{abstract}

\section {Introduction}\label{intro}
Research on drug delivery systems is focused on presenting drug at the right place in the body at the right time \cite{DrugDelivery}. 
To this end, we have introduced a prototype chemomechanical oscillator that releases GnRH in rhythmic pulses, fueled by exposure to a constant level of glucose \cite{dhanarajan2002autonomous,dhanarajan2006,Misra}. Experience with chemical and biochemical oscillators \cite{GrayScott}, \cite{EpsPoj} and \cite{Goldbeter}, and with
electrical and mechanical relaxation oscillators \cite{Adronov}, shows that rhythmic behavior can be driven by a constant rate stimulus, provided proper delay, memory and feedback elements 
are employed in device dynamics.
 The device consists of two fluid compartments, an external cell (I) mimicking the physiological environment, and a closed chamber (II), separated from (I) by a hydrogel membrane.  Cell I, which is held at constant pH and ionic strength, provides a constant supply of glucose to cell II, and also serves as clearance station for reaction products. Cell II contains the drug to be delivered to the body, an enzyme that catalyzes conversion of glucose into hydrogen ions, and a piece of marble to remove excess hydrogen ions that would otherwise overwhelm the system. When the membrane is swollen, glucose flux into Cell II is high, leading to rapid production of hydrogen ions.  However, these ions are not immediately released to Cell I but react, instead, with the negatively charged carboxyl groups of the membrane, which collapses when a critical pH is reached in Cell II, substantially attenuating glucose transport and production of ions.  Subsequent diffusion of membrane attached H-ions increases again the concentration of negative carboxyl groups in the membrane, causing the gel to reswell, and so, the process is poised to repeat itself.  Since drug release can only occur when the membrane is swollen,
 it occurs in rhythmic pulses that are coherent with the pH oscillations in Cell II and swelling oscillations of the membrane. 
 
 While rhythmic hormone release across the membrane is the ultimate goal of this research, a main  purpose of this article is the study of the pH oscillations in Cell II. These oscillations correlate with those of $\Hplus$ concentration in the membrane, which determine its swelling state.
 
 A polyelectrolyte gel is a mixture of polymer and fluid,   the latter containing several species of  ions  and the polymer including  electrically charged (negative) side groups. We model the gel  as an incompressible, saturated  mixture of polymer and fluid.
The hydrogel (polyelctrolyte gel having water as the fluid solvent) of the experimental device is a relatively thin membrane that is laterally restrained by clamping justifying its treatment as a one-dimensional system.
Moreover, 
the experimental apparatus is kept well-stirred at all times,  allowing for further reduction  to a time-dependent,  spatially homogeneous system. 
The system consists of three coupled mechanical-chemical ordinary differential equations for  the time evolution of the  membrane thickness $L=L(t)$,
the hydrogen concentrations $x (C_H^M)$ and $z (C_H^{II})$ inside  the membrane and in the chamber, 
respectively.   The polymer volume fraction $\phi(t)$ is related to $L(t)$ by the equation   of balance of  mass of the polymer, $L=\frac{\phi_0}{\phi}$,
 where $\phi_0$ denotes the corresponding volume fraction  in the reference state of the membrane. 
The governing system also takes into account the algebraic constraint of electroneutrality.
Analysis and simulation of gel swelling in higher dimensional geometries, in the purely mechanical case,  have been carried out in \cite{chabaud-calderer14} and \cite{Suo}.

The swelling force in the hydrogel consists of mixing, elastic and ionic terms \cite{English, Flory}. The first two terms are derived from the Flory-Huggins 
energy of mixing and the neo-Hookean form of the elastic stored  energy, respectively. 
The combination of Lagrangian elasticity and Eulerian mixing energies gives rise to the well known Flory-Rehner theory.  The ionic force follows  Van't Hoff's ideal law.
The latter force depends on the degree of ionization due to acid-base equilibria between pendant ionizable groups in the hydrogel and free hydrogen ion,
as described by a Langmuir isotherm, and the resulting Donnan partitioning of counterions and co-ions into the hydrogel. 
Taking all these factors into account, we call the model for swelling stress the Flory-Rehner-Donnan-Langmuir (FRDL) model.

A scaling analysis reveals that, of the many physical parameters of the model, five dimensionless parameter, $\mathcal A_i$, encode most  mechanical and chemical
properties of the system. These  
together with control parameters such as salt concentration $\cnacl$, reaction rate constant of the marble $\kmar$, degree of ionization of the polymeric side groups $\sigma_0$, reference polymer volume fraction $\phi_0$  and degree of polymer cross-linking $\rho_0$ are sufficient to
fully describe the evolution properties of the system. 

A study of the stationary states of the system shows the hysteretic behavior for appropriate parameter ranges that is consistent with laboratory 
experiments \cite{BakerSiegel1996, FirestoneSiegel1988, BhallaSiegel2014}. 
Moreover, we find that, for every set of parameters  within the relevant  range, there exists a unique steady state that is hyperbolic. 
We study the Hopf bifurcation of the steady state and numerically 
identify the parameter regions within which oscillations occur. We found these regions to be in good agreement with experiment.

Dimensionless   parameter groups determine the relative time scales of the system, with the fast time corresponding to the swelling ratio of the membrane. 
 In particular, this implies  that solutions of the governing system remain arbitrarily close to those of a suitable two dimensional system, for most of the time. 
 Although the original system has  positive $\textrm C^1$-solutions with $\phi$ bounded away from
 0  (and $\phi$ bounded away from $1$, as well), 
the two-dimensional restriction involves multivalued functions, leading to existence of weak solutions, with discontinuous $\phi$ and, consequently,  
the  swelling ratio  being discontinuous as well. 
The latter corresponds to a volume phase transition taking place that drives the permeability changes of the membrane. 
The Poincar{\'e}-Bendixon theory for plane systems applies to the current  problem 
from which existence of a two-dimensional limit cycle follows.  It turns out that the limit cycle is also the omega-limit set of positive semi-orbits of the full system. 

In order to show existence of a limit-cycle of the three dimensional system, we must  appeal to the theory of monotone dynamical systems. 
Specifically, we show that our three dimensional system is competitive
with respect to a properly defined alternate cone, that results from the intersection of half-spaces. We also find that the Jacobian matrix of the 
system satisfies the properties of {\it sign-stability} and 
{\it sign-symmetry}, from which  existence of a three-dimensional limit cycle follows.  
It turns out that our system is qualitatively analogous to that modeling the dynamics of the HIV  in the regime where the concentrations of infected and uninfected T-cells and the virus follow a 
periodic holding pattern,  away from the fully infected state represented by a hyperbolic steady state \cite{virus}.

The outline of this paper is as follows.  In section \ref{hormone-delivery}, we  describe the prototype oscillator and outline its basis for operation.  In section \ref{Parameter-Model}, we 
study the mechanical and chemical properties of the system. The assumptions of the model and the subsequent  formulation of a precursory system  of ordinary differential equations are presented 
in section \ref{chemo-mechanical model}. In section \ref{scaling},  we describe the parameters of the system and  explore its scaling
properties, from which the  system of ordinary differential equations that  model the dynamics of the oscillator follows. The Hopf bifurcation analysis of the unique hyperbolic equilibrium point is presented in 
section \ref{numerical-simulations}. In section \ref{inertial-manifold}, we construct an inertial manifold of the three-dimensional system and  
analyze the  corresponding two-dimensional flow
in the manifold, demonstrating  existence of an asymptotically stable limit cycle. 
   In section \ref{3D-limit-cycle},  we study the monotonicity properties of the system and prove the existence of a three-dimensional limit cycle.  A multiscale, asymptotic analysis  is developed in section \ref{multiscale} that yields decay estimates of solutions as they approach the two-dimensional manifold, where hysteresis properties manifest themselves, including the can{\`a}rd  feature of the system. 
   We conclude with a discussion of the benefits and deficiencies of the present model.

\section{ Rhythmic Hormone Delivery: A Simple Experimental System}\label{hormone-delivery}
A simplified schematic representation of the experimental oscillator is depicted in Figure 1 \cite{dhanarajan2002autonomous, Misra}. The apparatus consists of two well stirred fluid compartments, or cells, separated by a sweallable hydrogel membrane.  Cell I, which is meant to mimic the external physiological fluid environment, contains glucose in  saline solution, with fixed glucose concentration maintained by introduction of fresh medium and removal of reaction products by flow, and fixed pH 7.0 enforced by a pH-stat servo (autotitrating burette). Cell II, which simulates the interior of the rhythmic delivery device, contains the hormone to be released (e.g. GnRH) and the enzymes glucose oxidase, catalase, and gluconolactonase, which catalyze conversion of glucose to gluconic acid.  The latter rapidly dissociates into hydrogen ion (${\textrm H}^+$) and gluconate ion:
%	$$ \textrm{Glucose} +{\textrm O}_2  {\quad
%\longrightarrow}_{\textrm{\begin{tiny}enzymes\end{tiny}}}         \quad        
%{\textrm H}^+ +{\textrm {Gluconate}}^- + \frac{1}{2}{\textrm O}_2$$
$$ \textrm{Glucose} +{\textrm O}_2 \stackrel{\textrm{\begin{tiny}\emph{enzymes}\end{tiny}}}{\longrightarrow} {\textrm H}^+ +{\textrm {Gluconate}}^- + \frac{1}{2}{\textrm O}_2    \textrm{   (I)}$$
 \begin{figure}
 \label{Figure1}
\centerline{
{\includegraphics[width=90mm]{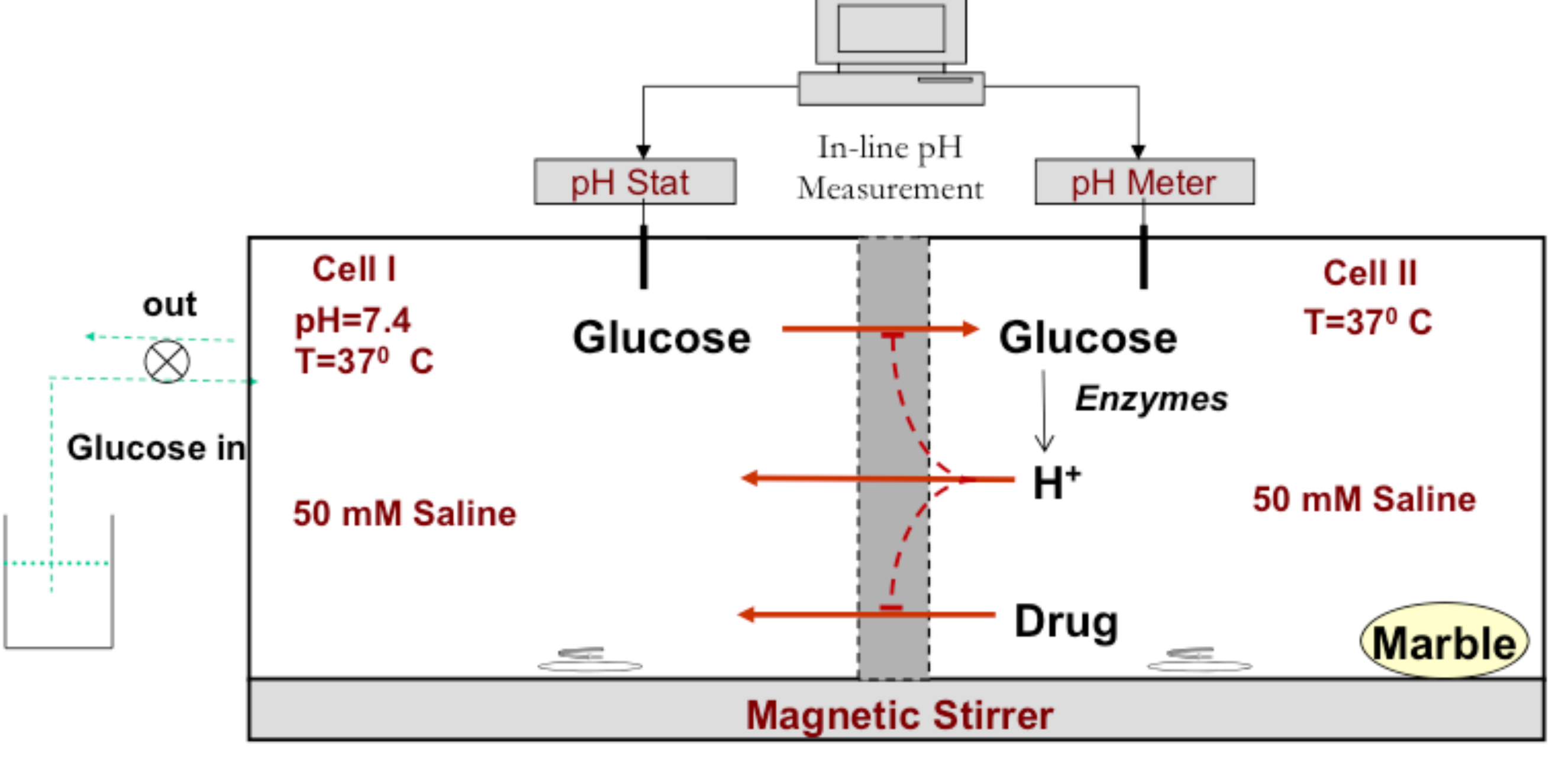}}}
\caption{\small{Diagram of the Experimental Oscillator  \cite{dhanarajan2002autonomous, Misra} }}
\end{figure}
\noindent Cell II also contains physiologic saline, which is exchanged with Cell I through the membrane, and a piece of solid marble.  Marble is solid calcium carbonate,
 ${\textrm{ CaCO}}_3(\textrm{s}), $ which reacts with $\textrm{H}^+$ according to $$2{\textrm H}^+ + {\textrm{ CaCO}}_3(\textrm{s})    \longrightarrow        
{\textrm{ Ca}}^{2+} + {\textrm{ CO}}_2\uparrow  \,\, +{\textrm{
H}}_2\textrm{O} \textrm{ (II)}$$
The hydrogel membrane is clamped between Cells I and II.  The degree of swelling of the  membrane and permeabilities to
glucose and GnRH, depend on the internal concentration of ${\textrm H}^+$  ions, through the reaction \cite{English, grimshaw1990kinetics, ricka-tanaka}.
 $$\vdash{\textrm{COO}}^-  + {\textrm{H}}^+ \rightleftharpoons {\textrm{ }} \vdash{\textrm{COOH}} \textrm{  (III)} $$
\noindent At low ${\textrm{H}}^+$ concentration, the membrane is charged, swollen, and highly permeable to both glucose and GnRH,
but permeability to these compounds is substantially attenuated at higher $\Hplus$ concentrations where the membrane has less charge and is relatively collapsed.
The placement of the membrane between a  source of fuel, glucose, and its converting enzymes creates  a dynamic  environment with competing effects. On the one hand, the enzyme reaction produces $ \Hplus, $ which affects the permeability of the membrane; it is  removed from the system by reacting with marble, and also by diffusing out to the environment. This creates a negative feedback mechanism between the enzyme reaction and the membrane permeability to glucose. Under proper conditions, this arrangement can lead to oscillations.

%Placement of the membrane between a replenishable source of fuel, glucose, and its converting enzymes, the effect of the enzymes' reaction product, $ \Hplus,$
% on membrane permeability, and availability of marble and the waste stream for removal of $\Hplus,$ creates a non equilibrium environment 
% with delayed negative feedback between the enzyme reaction and the membrane's permeability to glucose. 
% 
%  Under proper conditions, this arrangement can lead to oscillations.  

When the membrane is ionized and swollen, glucose permeates from Cell I to Cell II and is converted to  $\Hplus,$ which diffuses back into the membrane,
 binds and neutralizes the $\vdash\textrm{COO}^-$ groups, and causes the hydrogel membrane to collapse.  The membrane is now impermeable to glucose, and enzymatic production of
 $\Hplus$ is attenuated.  Eventually the $\Hplus$ ions bound to the membrane diffuse into Cell I, where they are neutralized by the pH-stat  and removed in the waste stream.  The membrane then reionizes and reswells, and the system is primed to repeat the previous sequence of events.

In order to achieve sustained oscillations, a steady state in which flux of glucose, enzyme reaction rate, and flux of $\Hplus$ are balanced and equal at all times,
 must be avoided.  As it will be seen, bistability, or hysteresis of membrane swelling response to $\Hplus,$ provides a means for destabilizing such a steady state.

The reader is referred to experimental details and results in previous publications
\cite{dhanarajan2002autonomous, dhanarajan2006, Misra}.
%  While rhythmic GnRH release across the membrane is the goal, an adequate marker for present purposes is oscillations in pH measured in Cell II. These oscillations correlate with oscillations of $\Hplus$ concentration in the membrane, which determine swelling state of the membrane.

\section{ A Lumped Parameter Model}\label{Parameter-Model}

A full mathematical description of the experimental system just described would require a detailed account of 1) diffusional and convective fluxes of solvent and solutes, 2) the spatial three-dimensional mechanics of the hydrogel membrane which, though constrained by clamps, exhibits swelling and shrinking which is both time and position dependent, 3) the kinetics of enzymatic conversion of glucose to $\Hplus$, and 4) the kinetics of binding and dissociation of $\Hplus$ with $\vdash\textrm{COO}^-$ groups.  An accurate, verifiable model of this sort, which would require partial differential equations to describe intramembrane processes, does not yet exist.  Here we simplify the problem by assuming that the membrane is a lumped, uniform element. All mechanical and chemical variables are homogenized to single values representing the whole membrane.  We recognize that some potentially important consistencies are lost in this approximation.  First, there will always be a difference in pH between Cells I and II, which will lead to intramembrane gradients in the chemical and mechanical variables.  Second, self consistent boundary conditions, which would follow naturally from a  PDE model, must be replaced by somewhat \emph{ad hoc} assumptions.

As a second simplification, we assume that the enzyme reactions, and the process of distribution of the dominant background electrolyte, NaCl, between Cells I and II and the hydrogel membrane, are very fast compared to the other dynamical processes.  These assumptions can be justified, respectively, by the excess of enzyme used in the experiments, and the fact that the capacity of the membrane for acidic protons ($\Hplus,$ or $\vdash$COOH) relative to Cells I and II, far exceeds its relative capacities for $\textrm{Na}^+$ and $\textrm{Cl}^-$.  Many experimental studies with polyacidic hydrogels have confirmed that $\Hplus$ dynamics and poroelastic relaxations are much slower than those of NaCl \cite{FirestoneSiegel1988, BakerSiegel1996, BhallaSiegel2014}.
\subsection{Swelling of Hydrogels}
The membranes considered in this work are crosslinked networks of polymer
chains, or hydrogels, which absorb substantial amounts of water.  Depending on
water content, or degree of swelling, the hydrogel will be more or less
permeable to solutes such as glucose and $\Hplus$.  Hydrogels have a long history
of application in drug delivery and medicine due to their mechanical and
chemical compatibility with biological tissues and their ability to store and
release drugs in response to environmental cues \cite{Peppas}.

In the present system, we utilize the hydrogen ions $\Hplus$ that are
enzymatically generated from glucose to control hydrogel swelling and hence release of hormone.   In polyacid hydrogels, swelling is controlled by degree of ionization, which results from dissociation of acidic side groups that are attached to the polymer chains. When NaCl is present in the aqueous fraction of the hydrogel, the ionization equilibrium is represented by
$$\vdash\textrm{COOH} + {\textrm{ Na}}^+ +  {\textrm{Cl}}^-   \rightleftharpoons\,
\vdash{\textrm{COO}}^-{\textrm{ Na}}^+ +  {\textrm{H}}^+  + {\textrm{Cl}}^-  $$ 
Swelling of a polyacidic hydrogel results from three thermodynamic
driving forces \cite{English, Flory,Katchalsky,RickaTanaka}.  First there is the tendency of solvent (water) to enter the
hydrogel and mix with polymer in order to increase translational entropy. 
The mixing force also depends on the relative molecular affinity or aversion of the
polymer for water compared to itself due to short range van der Waals, hydrogen
bonding, and hydrophobic interactions.   Second, there is an elastic force, which is a response to the change in conformational entropy of the polymer chains that occurs during swelling or shrinking.  The third force is due to ionization of the acidic pendant groups, which leads to an excess of mobile counterions and salt inside the
hydrogel compared to the external medium, promoting osmotic water flow into the
hydrogel.  The ion osmotic force acts over a much longer range than the direct polymer/water interaction.

In the present work, it is assumed that the hydrogel is uniform in composition
when it is prepared.  The hydrogel at preparation is taken as the reference
state for subsequent thermodynamic model calculations.  At preparation, the
volume fraction of polymer in the hydrogel is denoted by $\phi_0$.  Crosslinking leads to an initial density of
elastically active chains, $\rho_0$.  The initial density of ionizable acid groups, fixed to
the polymer chains, is denoted by $\sigma_0$.  Both  densities  have units mol/L, and are
referred to the total volume of the hydrogel (polymer+water) at preparation.
%%%%%%%%%%%%
\subsection{Mechanics of Swelling}
The swelling state of a hydrogel is characterized by the principal stretches or elongation ratios $ (\alpha_x,\alpha_y,\alpha_z)$, and the volume fraction $\phi$ of the polymer. 
We  let $0<\phi_0<1$ and $L_0>0$ denote  the volume fraction of  polymer and the thickness of the membrane, respectively,  in the reference state.   The   equation of conservation of mass of polymer in the gel  is 
\begin{equation}\phi=\phi_0(\alpha_x\alpha_y\alpha_z)^{-1}. \label{mass-balance-equation} \end{equation} 
The swelling ratio of the membrane relative to its undeformed reference state is $\phi_0/\phi=\alpha_x\alpha_y\alpha_z$, that is,  the Jacobian of the deformation map from the reference to the deformed membrane.  We assume that the elongation ratios and volume fraction vary with time. 
Since in the present system the hydrogel is a relatively thin membrane that is laterally restrained by clamping, we  assume that the main swelling effect occurs along the thickness direction.  Hence, we consider uniaxial deformations 
\begin{equation}
\alpha:=\alpha_x, \alpha_y=\alpha_z=1. \label{uniaxial}
\end{equation}
\begin{remark}  A rigorous  justification of the uniaxiality assumption on the membrane deformation may follow from  a dimensional reduction analysis. However, a main drawback 
of the one-dimensional treatment is that it neglects  undulating  surface instabilities observed in gel free surfaces,  which may ultimately hamper  sustained oscillatory behaviour. 
\end{remark}
Denoting by  $L=L(t)$  the thickness of the membrane at time $t\geq 0$, 
%and let $L_0>0$ denote the corresponding reference quantity. 
equation (\ref{mass-balance-equation}) reduces to 
\begin{equation}
L\phi=L_0\phi_0. \label{mass-balance-1d}
\end{equation}
%\end{document}
In the model presented below, the three swelling forces, mixing, elastic and
ionic,  yield  the  uniaxial  {\it swelling stress}, 	
\begin{equation}	
s=s_{\textrm{\begin{tiny} {mix} \end{tiny}}} +s_{\textrm{\begin{tiny} {\textrm elast}\end{tiny}}}+  s_{\textrm{\begin{tiny} {\textrm ion} \end{tiny}}}. \label{ras1}
\end{equation} 
which reflects the excess free energy density of the  hydrogel  relative to equilibrium, at a
given state of swelling in a prescribed aqueous medium.  At equilibrium, $s=0$. 
Let $\gmix $ and $ \gelast $  denote the Flory-Huggins mixing and  elastic  energy densities with respect to deformed volume, respectively. The corresponding  total energy quantities, per unit cross-sectional area, are
\begin{equation}
\Gmix= L\gmix(\phi), \quad 
\Gelast=L\gelast(\phi), \label{ElasticAndMixingEnergies}
\end{equation} 
A standard variational argument gives the dimensionless stress components as 
\begin{equation}
\smix=-\frac{d\Gmix}{dL}, \quad \selast=-\frac{d\Gelast}{dL}.
\end{equation}
 Consequently, 
							\begin{equation}s_i=-(g_i-\phi\frac{\partial g_i}{\partial \phi}), \,\, i=\textrm{\scriptsize{mix}}, \textrm{\scriptsize{elast}}. \label{ras2}
\end{equation}
The mixing free energy density of solvent (water) with the hydrogel chains is
modeled according to the Flory-Huggins expression
\begin{equation} \gmix = \frac{RT}{\vw}[(1- \phi)\ln(1- \phi) + \chi\phi(1-
\phi)], \label{ras3}   \end{equation}
where $R $ is the gas constant, $T$ is absolute (Kelvin) temperature, and $\vw$ is
the molar volume of water.  The first term in the square brackets accounts
for the translational entropy change of solvent molecules as they move from the
external environment into the hydrogel, temporarily assuming that the bath is
pure solvent.  The second term accounts for short range contact interactions
between polymer and solvent, summarized by the dimensionless parameter $\chi$,
which is the molar free energy required to form solvent/polymer contacts,
normalized by $RT$.  Introducing (\ref {ras3}) into (\ref{ras2}) yields
		\begin{equation}
\smix=-\frac{RT}{\vw}[\ln(1-\phi)+\phi+\chi\phi^2].\label{ras4}
\end{equation}
In the simplest form of Flory-Huggins theory $\chi$  is constant, but in many hydrogel systems, especially those that undergo sharp swelling/shrinking transitions, a pseudo-virial series in $\phi$ is used.  In the present work the series is truncated at the linear term, taking the form \cite{ErmanFlory}, \cite{Hirotsu},							
	\begin{equation}
\chi=\chi_1+ \chi_2\phi. \label{ras5}
\end{equation}
\noindent With  $\chi_2>0$, polymer and solvent become increasingly incompatible as polymer
concentration increases, with hydrogel shrinking, and mixing pressure decreasing.

We model the elastic energy density according to the Neo-Hookean expression
\begin{equation}
\gelast=\frac{RT}{2}\rho_0(\frac{\phi}{\phi_0}) (\alpha_x^2+\alpha_y^2+\alpha_z^2-3-\ln\alpha_x\alpha_y\alpha_z). \label{ras6}
\end{equation}
The logarithmic term accounts for change in entropy of localization of
crosslinks in the hydrogel due to swelling. For the uniaxial swelling under consideration,  
\begin{equation}
\gelastuni= RT\rho_0(\frac{\phi}{\phi_0})[(\frac{\phi_0}{\phi})^2-\ln\frac{\phi_0}{\phi}-1]	\label{ras7}
\end{equation}
 which upon introduction into (\ref{ras2}) yields
\begin{equation}
\selast=-RT\rho_0(\frac{\phi_0}{\phi}-\frac{\phi}{2\phi_0}).\label{ras8}
\end{equation}
Summing (\ref{ras4}) and (\ref{ras8}) we obtain the Flory-Rehner expression for the nonionic component of the swelling stress (see also (\ref{equation-mechanics}), below.  
We now derive the evolution equation of the membrane according to the minimum rate of dissipation principle.
\begin{proposition}
Suppose that  $\phi=\phi(t)$ and $L=L(t), \, t\geq 0$ satisfy equation (\ref{mass-balance-1d}).
  Let us define the rate of dissipation function and the total energy as 
\begin{equation}
 W=\frac{K}{2} (\frac{dL}{dt})^2\, {\textrm{\small{and}}} \, \, \, E=\Gmix+\Gelast,
\end{equation}
respectively,  with $K>0$ denoting a friction coefficient.  Then the function  $v=\frac{dL}{dt}$ that  minimizes the Rayleighian functional $R= \dot E +W,$ among spatially homogeneous fields, satisfies the equation
\begin{equation}
 \smix+ \selast -K\frac{dL}{dt}=0, \label{motion-equation}
\end{equation}
with $\smix$ and $\selast$ as in (\ref{ras4}) and (\ref{ras8}), respectively, and $\Gelast$ and $\Gmix$ as in  (\ref{ElasticAndMixingEnergies}). 
 \end{proposition}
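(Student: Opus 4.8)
The plan is to use the holonomic constraint \eqref{mass-balance-1d} to reduce the energy $E$ to a function of the single scalar $L$, and then to recognise that, under the homogeneity hypothesis, the Rayleighian is nothing more than a strictly convex quadratic in the rate $v=dL/dt$ whose unique minimiser is the asserted force balance \eqref{motion-equation}. First I would note that, since $\phi$ is determined by $L$ through $L\phi=L_0\phi_0$, the total energy $E=\Gmix+\Gelast$ depends on the configuration only through $L$; by the chain rule its time rate is therefore linear in the velocity, $\dot E=\frac{dE}{dL}\,v$.

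Next I would invoke the stress--energy relations $\smix=-d\Gmix/dL$ and $\selast=-d\Gelast/dL$ already recorded in the excerpt (the displayed definition immediately preceding \eqref{ras2}), together with $\Gmix=L\gmix(\phi)$, $\Gelast=L\gelast(\phi)$ from \eqref{ElasticAndMixingEnergies}. These give $dE/dL=-(\smix+\selast)$, so that
\[
\dot E=-(\smix+\selast)\,v .
\]
Substituting this and $W=\tfrac{K}{2}v^2$ into $R=\dot E+W$ yields
\[
R(v)=\frac{K}{2}\,v^2-(\smix+\selast)\,v ,
\]
in which the swelling stresses are treated as fixed data: they are functions of the instantaneous configuration $(L,\phi)$ and carry no dependence on $v$.

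The minimisation is then elementary. Because $K>0$, the map $v\mapsto R(v)$ is a strictly convex quadratic and hence possesses a unique critical point that is its global minimiser. Imposing stationarity $dR/dv=0$ gives $Kv-(\smix+\selast)=0$, i.e.
\[
\smix+\selast-K\,\frac{dL}{dt}=0 ,
\]
which is precisely \eqref{motion-equation}; the second-order condition $d^2R/dv^2=K>0$ confirms a minimum rather than a saddle or maximum.

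The one point I would treat with care --- and the only place I expect any conceptual difficulty --- is the precise meaning of minimising \emph{among spatially homogeneous fields}. The Rayleighian is to be minimised over the rate $v$ at a frozen instantaneous configuration, with the stresses held fixed during the variation; this is exactly the content of Rayleigh's principle, and it is the homogeneity assumption \eqref{uniaxial} that collapses what would otherwise be a variational problem over velocity fields into this scalar minimisation. The only supporting computation, namely $dG_i/dL=g_i-\phi\,\partial g_i/\partial\phi$ obtained from $d\phi/dL=-\phi/L$, is routine and is already encoded in \eqref{ras2}, so no further calculation is required.
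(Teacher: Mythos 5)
Your proposal is correct and follows essentially the same route as the paper: both compute $\dot E=-(\smix+\selast)\,v$ via the stress--energy relations and then find the stationary point of the quadratic $R(v)=\tfrac{K}{2}v^2-(\smix+\selast)v$, yielding \eqref{motion-equation}. Your write-up is in fact slightly more complete than the paper's one-line argument, since you verify strict convexity ($d^2R/dv^2=K>0$), which justifies the word \emph{minimizes} in the statement rather than merely identifying a critical point.
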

The proof results from the  simple calculation
\begin{equation}
\frac{dE}{dt}=\frac{d}{dt}\big(\gmix+\gelast\big), \quad
= -(\smix+\selast)v, \quad v=\frac{dL}{dt}. \label{crit-pointR} 
\end{equation}
So,  the critical points $v=v(t)$ of $R$ satisfy equation (\ref{motion-equation}),  for all $t\geq 0$.
%\begin{equation}
%K\frac{dL}{dt}=\smix+\selast, \,\, {\textrm{for all}} \, t>0. \label{dyn-eqn},
%\end{equation}
Defining the {\it permeability coefficient} as
$
\kw= \frac{RT}{K\nu_w},$
 and using equations (\ref{ras4}) and (\ref{ras8}),  equation  (\ref{motion-equation}) becomes
\begin{equation}
\frac{dL}{dt}=   -K_w[ \ln(1-\phi)+\phi+\chi\phi^2 + \nu_w\rho_0(\frac{\phi_0}{\phi}-\frac{\phi}{2\phi_0})].\label{equation-mechanics}
\end{equation}
Equation (\ref{motion-equation}) should be expanded to include the force $\sion$   as in (\ref{ras1}).
The calculation of such a contribution is given next. 
\subsection{Chemical reactions}
While the mixing and elastic terms represent important contributions to swelling
pressure, the ionic term responds to $\Hplus$
 concentration inside the membrane, and this is what
forms the basis for the oscillator's dynamic behavior.  As described above, 
ionization of the hydrogel occurs by dissociation of pendant carboxylic acid
groups.  The fraction of these groups that are ionized, $f$, is modeled
according to a Langmuir isotherm relation,
			\begin{equation} f=\frac{\Ka}{\Ka+\Ch}.
\label{ras9}\end{equation}	 
where $ \Ch $ is the concentration of  $ \Hplus $
 in the aqueous portion of the hydrogel and $K_A$
is the dissociation constant of the carboxylic acid.  The concentration of
ionized groups, referenced to the aqueous portion of the hydrogel, is then given by
\begin{equation}
\Caminus=f\sigma_0(\phi/\phi_0), \label{ras10}
\end{equation}	
where  $\sigma_0$ denotes the reference density of ionized groups fixed to the polymer chains. 
Letting $\Cah $ denote the concentration of $H^+$ linked to carboxyl groups, the conservation of intra-membrane hydrogen ions  is given by the equation
\begin{equation}
L(\Ch+\Cah)= L_0\sigma_0. \label{polymer-charge-concentration}
\end{equation}
Combining equations (\ref{ras10}) and (\ref{polymer-charge-concentration})
yields
\begin{equation}
\Cah= \frac{\phi}{\phi_0}(1-f)\sigma_0. \label{d4}
\end{equation}  
Ionization of hydrogel side-chains, plus the requirement for
quasi-electroneutrality over any distance exceeding a few Debye lengths ($\sim 5$ nm
in physiological systems) leads to a distribution of mobile ions in the hydrogel
that differs from that in the external bath. Ignoring very small contributions
from $ \Hplus $ and $ \textrm{OH}^{-} $ ions, a quasineutrality condition is
		\begin{equation}\Cna-\Ccl-\Caminus=0. \label{ras11} \end{equation}
\noindent where $\Cna$ and $ \Ccl$ are the concentrations, respectively, of $\textrm{Na}^{+}$ and $\textrm{Cl}^{-}$ in the aqueous portion of the hydrogel.

For simplicity, we assume that Cells I and II contain fully dissociated NaCl, with
%at physiological concentration $\Cnacl'=0.155 M.$ Fixing external salt ion
ion concentrations  $\Cna'=\Ccl'=\Cnacl'$.
%,  with salt fully dissociated, ionic
Ionic swelling stress in the membrane is modeled using van't Hoff's ideal law:
						
\begin{equation}\sion=RT(\Cna+\Ccl-2\Cnacl').     	\label{ras12}  \end{equation}

\noindent Derivation of this term from a free energy density expression is possible but
not carried out here since the basis for van't Hoff's law is well
understood.
	
It can be shown that diffusion of salt inside the hydrogel is very
fast compared to diffusion of $ \textrm{H}^+$ and mechanical relaxation of swelling
pressure.  We may therefore assume that at any point in the hydrogel a Donnan
quasi-equilibrium, where $ \Cna=\lambda\Cna' $  and  $ \Ccl=\lambda^{-1}\Ccl '$,
with $\lambda$ the Donnan ratio, determined by combining
(\ref{ras9})-(\ref{ras11}), giving
		\begin{equation}(1-\phi)(\lambda-\frac{1}{\lambda})\Cnacl'
-\sigma_0(\phi/\phi_0)f=0.	
\label{ras13} \end{equation}
 With $\lambda$ in hand, (\ref{ras12}) becomes
	\begin{equation}\sion=RT(\lambda+\frac{1}{\lambda}-2)\Cnacl'	.
\label{ras14} \end{equation}
Introducing (\ref{ras4}, \ref{ras5}, \ref{ras8}, \ref{ras14}) into (\ref{ras1}) gives
\begin{equation} \frac{\vw
s}{RT}=-[\ln(1-\phi)+\phi+(\chi_1+\chi_2\phi)\phi^2]-\vw\rho_0(\frac{\phi_0}{\phi}-\frac{\phi}{
2\phi_0})+(\lambda+\frac{1}{\lambda}-2)\vw\Cnacl', 	\label{ras15}
\end{equation}
 which, combined with (\ref{ras13}), determines the total swelling
stress. Rewriting equation (\ref{equation-mechanics}) taking the total stress into account leads to the force balance equation stated in (\ref{d1}).  
%\begin{eqnarray}
%\frac{dL}{dt}&&= -\kw(1-\phi)[\ln(1-\phi)+\phi +(\chi_1+\chi_2\phi)\phi^2 +\nuw\rho_0(\frac{\phi_0}{\phi}-\frac{\phi}{2\phi_0})\nonumber\\
% &&\,\,-\nuw\Cnacl(\lambda+\frac{1}{\lambda}-2)]. 
%\end{eqnarray}

Because (\ref{ras15}) and (\ref{ras13}) include elements of Flory-Rehner theory,
and Donnan and Langmuir quasi-equilibria, we call these two equations the FRDL
model for uniaxial swelling stress.  The behavior of this model depends on the hydrogel parameters $\phi_0$, $\rho_0$, $\sigma_0$, $\chi_1$, $\chi_2$  and $p\Ka$, and the external salt concentration $\Cnacl'$, which is assumed to be constant.  These, together with relevant dimensionless parameter groups, determine the swelling and permeability properties of the hydrogel membrane. 
%%%%%%%%%%%%%%%%%%%%%%%%%%%%%%%%%%
\subsection{Swelling equilibria}\label{swelling-equilibria}
Before pursuing dynamics, it is useful to examine homogeneous uniaxial swelling equilibria for a hydrogel in a bath of fixed $pH=-\log\Ch'$.  In this case $ s=0
$ in (\ref{ras15}) and $\Ch=\lambda\Ch'=\lambda 10^{-pH}$.  Setting $p\Ka:=-\log
\Ka$, equation (\ref{ras13}) becomes
\begin{eqnarray}
&&10^{-(pH-p\Ka)}(1-\phi)\Cnacl'\lambda^3+
(1-\phi)\Cnacl'\lambda^2-[10^{-(pH-p\Ka)}(1-\phi)\Cnacl'\nonumber\\
&&\quad\quad\quad+\sigma_0(\frac{\phi}{\phi_0})]\lambda-(1-\phi)\Cnacl'=0.
\label{ras18} \end{eqnarray}
We discuss the solvability of equation (\ref{ras18}) with respect to $\lambda$.
Since the physical parameters contributing to the polynomial coefficients are positive and $\phi\leq1$, there is only one sign change in the descending polynomial coefficients, and Descartes' rule of signs mandates a single positive real root.  (Negative or complex values of $\lambda$ are physically meaningless as they would predict negative or complex ion concentrations inside the hydrogel.) 
Moreover, substituting $\lambda=\nu+1$  in (\ref{ras18}) and rearranging terms, we obtain a cubic polynomial in $\nu$
that also has only one sign change, hence $\nu\geq 0$ and $\lambda\geq1 $.  This makes physical sense, since for
negatively charged hydrogel the internal cation concentration must exceed its
external counterpart.  

The first and third terms of  (\ref{ras18}) can vary over several orders of magnitude with pH, indicating that pH, which affects ionization, will have a strong influence on salt ion partitioning and hence ion
swelling pressure.  When $ pH<<p\Ka$, the polymer is uncharged, $f\to 0$ and $\lambda \to 1$, and swelling takes on a minimum value determined by the mixing and elastic stresses. When $ pH>> p\Ka$, all acid groups are ionized with  $f\to 1$, and  swelling is maximal with $\lambda $ substantially greater than $1$.

Figure \ref{Ronfig1}a displays the effect of fraction ionized, $f,$ on uniaxial swelling ratio $\alpha$=%\frac{\phi_0}{\phi}$% \phi_0/\phi$ for three values of $\Cnacl'$, with all other parameters fixed to values that are close to those that are relevant to our previous experiments: $\phi_0 =0.3, $ $\rho_0=14\textrm{mM},$ $ \sigma_0 =280\textrm{mM},$
$\chi_1 =0.48,$ $\chi_2 =0.60.$  This figure was generated using equation (\ref{ras15})
with $ s=0$ and a version of (\ref{ras13}).
\begin{figure}
\begin{minipage}[t]{\textwidth}
{
\includegraphics[scale=.50]{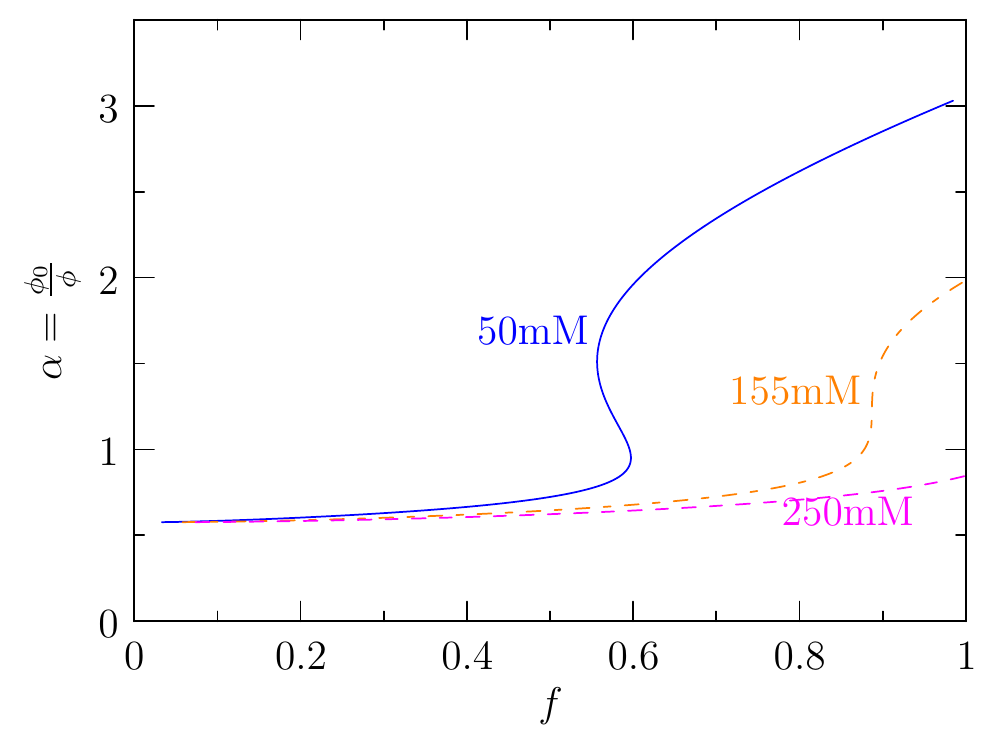}
\hspace{.4cm}
\includegraphics[scale=.50]{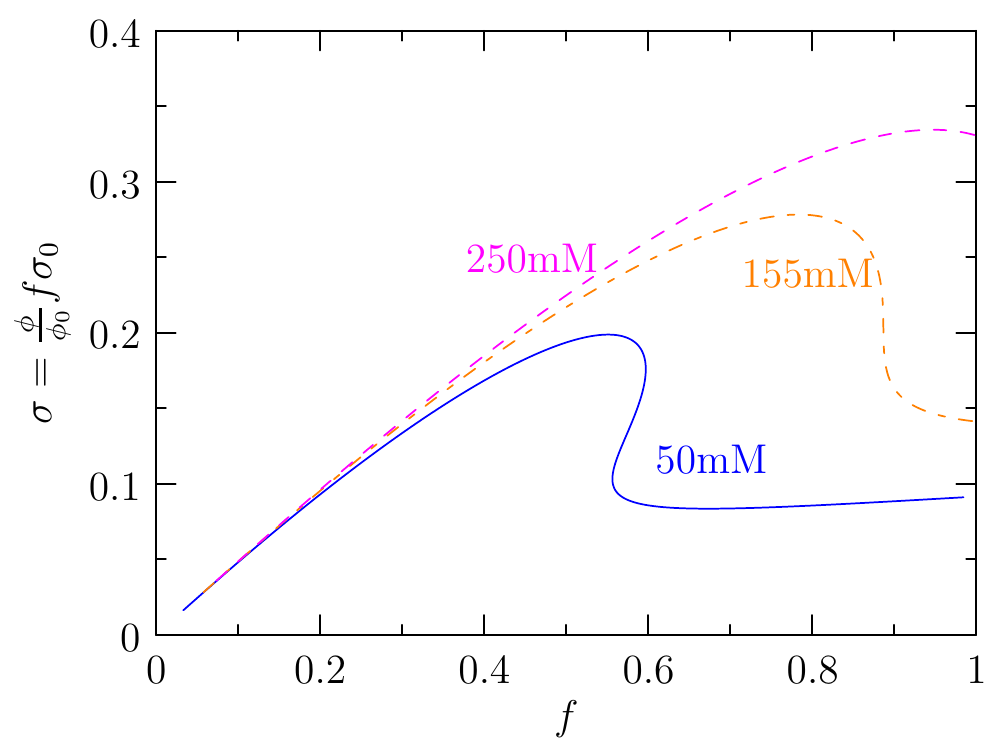}
\hspace{.4cm}
\includegraphics[scale=.49]{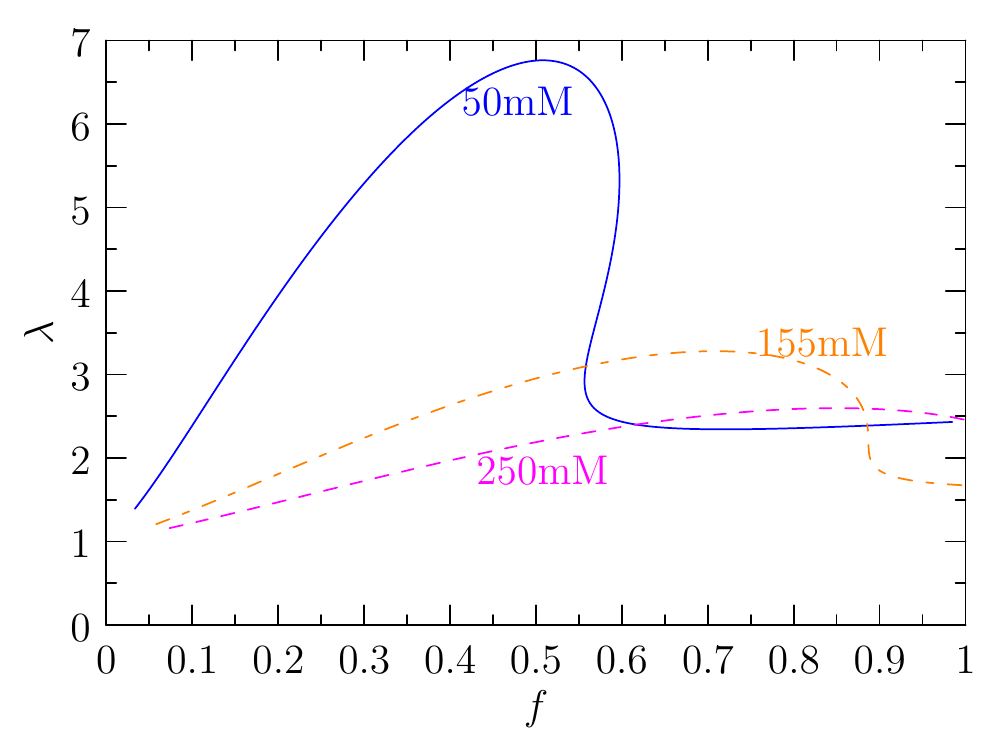}
}
\end{minipage}
\caption{The left, middle and  right figures show the {\it uniaxial swelling ratio} $\alpha$,  the {\it fixed charge density} $\sigma$ and the {\it Donnan ratio} $\lambda$ respectively, versus {\it ionized fraction}   $f$,  for different salt concentration.}\label{Ronfig1}
\end{figure} As expected, swelling increases with decreasing salt concentration.  For  $\Cnacl'= 250\textrm{mM}$,
the membrane remains in its essentially collapsed state with $\alpha<1$. In this case hydrophobicity is the dominant force, and the effect of ion osmotic stress is relatively weak. For  $\Cnacl'= 155\textrm{mM}$ and $50\textrm{mM},$ an initially shallow relation between $\alpha$ and $f$ is punctuated by a rather
sharp rise, indicating initial dominance of hydrophobicity that is overcome by
ion osmotic stress at higher $f$.  As ionic strength (salt concentration)
decreases, the sharp rise occurs at lower ionization degree.

For  $\Cnacl'= 250\textrm{mM}$ and $155\textrm{mM},$ $ f$ uniquely determines $\alpha$.  For
$\Cnacl'= 50\textrm{mM}, $ however, a range of bistability and hysteresis is observed. 
Over this range in $f,$  total stress $s$ vanishes at three values of $\alpha$,
corresponding to two free energy minima and one maximum in between.  The latter,
which corresponds to the negative slope branch of the swelling curve between the
turning points, is unstable, and need not be considered in the discussion of
equilibria. 
Figure \ref{Ronfig1}b shows the fixed charge density, $\sigma=(\phi/\phi_0)f\sigma_0$, as a function of $f$ for the
three salt concentrations.  In all cases, this quantity initially rises with
increasing $f$,
%%*****ASK RON
 since $f$ is increasing but swelling does not change
significantly.  The rise is followed by a drop attributed to the sudden increase in swelling.  Bistability is observed for
$\Cnacl' = 50\textrm{mM} $ over the same interval of $f$ as before.

Figure \ref{Ronfig1}c exhibits the calculated Donnan ratio $\lambda$ as a function
of $f$ for the three salt concentrations. 
 While these ratios provide information regarding the ion osmotic swelling force via (\ref{ras14}), they
also provide a link between external and internal pH. The curves are all
nonmonotonic, with bistability for $\Cnacl'= 50\textrm{mM}.$  The swings in $\lambda$
more or less follow those of $\sigma$ and can be
explained in essentially the same way.
 
Figure \ref{Ronfig2} displays the relationship between pH and swelling for the
three salt concentrations, considering both internal and external pHs (dotted and continuous lines, respectively).  These
values are determined according to (\ref{ras13}) and  (\ref{ras15}), with $s=0$, and $pH(int)=-\log\Ch$ and
$pH(ext)=pH(int)+\log\lambda$.    Evidently, increasing salt concentration leads
to an alkaline shift (higher pH) at which the major transition in swelling
occurs.  With increasing salt, a larger degree of ionization and hence higher pH
is needed to effect the transition.  The three pairs of graphs exhibit
qualitatively different behaviors.  For $\Cnacl'= 250\textrm{mM},$ both curves are
monotonic and single valued, although internal pH is clearly lower than external
pH, as expected based on Donnan partitioning of hydrogen ion.  For $\Cnacl' =
50mM,$  both curves exhibit bistability.  At the intermediate salt
concentration, $\Cnacl' = 155\textrm{mM}$, the internal pH curve is single valued, while
the external pH curve shows bistability.  This difference can be attributed to
the Donnan effect, which nudges external pH to higher values at low degrees of
ionization and swelling.  Of course, the real control variable is external pH.

We conclude that there are two potential mechanisms underlying bistability, one
originating from the relation between fixed charge density and degree of
swelling, and the other stemming from the Donnan effect.  While in these studies
we looked at qualitative behaviors as a function of  $\Cnacl'$, these behaviors
might also be observed by altering structural parameters of the hydrogel.
\begin{figure}
\centerline{
 {\includegraphics[scale=.6]{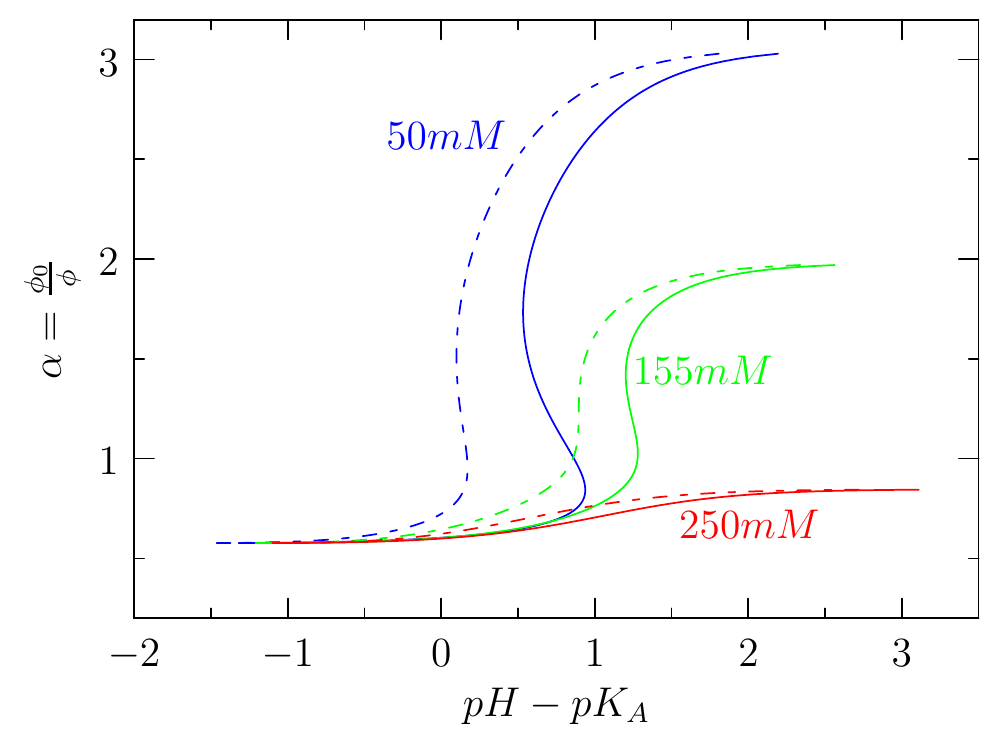}}}
 \caption{\small{Swelling Ratio $\alpha$}  versus  $pH-pK_A$ at different salt concentrations. }\label{Ronfig2}
\end{figure}
To simplify the notation, from now on, we will remove the 'prime' symbol from  the external salt concentration  variable  $C'_{\textrm{\tiny{NaCl}}}$
 and agree in writing $\cnacl$ instead.
 \section{Spatially homogeneous chemomechanical model}{\label{chemo-mechanical model}}
 Now that the axially restricted equilibrium swelling properties of the hydrogel membrane have been explored, we introduce a lumped parameter model of the chemomechanical  oscillator illustrated in Figure 1.
 Variants of this model have been presented previously \cite{dhanarajan2002autonomous, dhanarajan2006}. The dynamic model is based on the following assumptions:
  
\noindent
{\bf a.\,} Membrane properties, including the fixed charge concentration,  swelling state and ion concentrations
inside the membrane are assumed to be homogeneous in space. Upon suppressing lateral swelling and shrinking, the  swelling state of the membrane is determined by its thickness $L=L(t)$.   This homogeneity assumption is made even though the membrane is subjected to a pH gradient.  External pH is "homogenized" by averaging the constant $\textrm{H}^+$ concentration in Cell I, $C_H^I$ and the variable $\textrm{H}^+$ concentration in Cell II, $C_H^{II}$,  that is, $pH=\log[(C_H^I+C_H^{II})/2]$.

 \noindent
%\item 
{\bf b.\,} Chemical potentials inside the membrane are determined by the
1-D FRDL equation of state for one-dimensional swelling, as described above.
% \cite{(1,2)}.
From these, the total %osmotic pressure as  in (\ref{(2.4)}), with  
swelling stress, including the elastic, mixing and ionic contributions, is given by (\ref{ras15}), with $\lambda$
determined according to the electroneutrality condition (\ref{ras13}).  
Electroneutrality is enforced by rapid exchange of ${\textrm{Na}}^+$ and ${\textrm{Cl}}^-$ between the membrane and Cells I and II. 
$\textrm{NaCl}$ concentrations in Cells I and II are assumed constant and equal \cite{Hirotsu}.

\noindent
{\bf {c.}\,} Enzymatic conversion of glucose to gluconic acid is assumed to be instantaneous. This assumption is valid when the concentration of enzyme in Cell II is sufficiently high that transport of glucose into Cell II is
rate limiting. Also, gluconate and bicarbonate, produced according to chemical reactions (I) and (II), respectively, are presumed to not perturb the system's dynamics.  The latter assumption is expected to hold best during early oscillations.

\noindent
{\bf {(d)}\,} The rate of change of fixed, negative charge concentration is controlled by
the rate of transport of hydrogen ions, which reversibly bind to pendant carboxylates as they
diffuse through the membrane \cite{grimshaw1990kinetics}.
 
 \noindent
 {\bf {(e)}\,} Permeability of the membrane to glucose is expressed as 
 $\kgzero e^{-\beta\phi}$, as suggested by free volume theory \cite{LamazePeterlin1971}, \cite{YasudaLamazePeterlin1971}. The parameters $\kgzero$ and $\beta$ represent,
respectively, the hypothetical permeability with vanishing polymer
concentration, and the sieving effect of the polymer, which depends on both
polymer chain diameter and radius of the diffusant (glucose in this case).
For hydrogen ions or water, which are much smaller than glucose, the
sieving factor is assumed to be negligible, and we simply multiply the
respective permeability coefficients, $\kh$ and $\kw$,  by $1- \phi$ to account for the aqueous space in the hydrogel that is available for transport of solutes.
\begin{remark} In this model, diffusion of $\textrm{Na}^+$ and $\textrm{Cl}^-$ are regarded as instantaneous, while diffusion of
 $\textrm{H}^+$ is regarded as rate determining, even though the diffusion coefficient of $\textrm{H}^+$ is decidedly larger.  This assumption is 
justified in part due to the reversible binding of $\textrm{H}^+$ to the pendant carboxylates, which does not occur with $\textrm{Na}^+$ and 
$\textrm{Cl}^-$, and partly due to the very low $\textrm{H}^+$ concentration, which qualifies it for "minority carrier" status.  Changes in concentration of $\textrm{H}^+$ in the membrane are rapidly "buffered" by readjustments of $\textrm{Na}^+$  and $\textrm{Cl}^-$ concentrations, preserving electroneutrality.
\end{remark}

Based on these assumptions, we write the following differential equations  for $L$, the flux of hydrogen ions into the
membrane, which then become either free intramembrane hydrogen ions 
or protons bound to pendant carboxyls, with respective concentrations $\Ch$ and $\Cah$, and the flux of $\rm{H}^+$ to cell II:
\begin{eqnarray}
 \frac{dL}{dt}=&& -\kw(1-\phi)[ln(1-\phi)+\phi +(\chi_1+\chi_2\phi)\phi^2 +\nuw\rho_0(\frac{\phi_0}{\phi}-\frac{\phi}{2\phi_0})\nonumber\\
 &&\,\,-\nuw\Cnacl(\lambda+\frac{1}{\lambda}-2)], \label{d1}\\
 \frac{d}{dt}[L(\Ch+\Cah)]&&=2\kh{(1-\phi)}[\lambda\frac{(\ChI+\ChII)}{2}-\Ch],
\label{d2}\\
 \frac{d}{dt}\ChII&&= \frac{A\kgzero}{V}e^{-\beta\phi}\Cg- 
\frac{A\kh}{V}{(1-\phi)}(\lambda\ChII-\Ch)-\kmar\ChII, \label{d3}
  \end{eqnarray}
where $V$ is the volume of Cell II and $\Cg$ is the glucose concentration in Cell I.  Then, the governing system consists of equations (\ref{d1})-(\ref{d3})  together with the algebraic constraints (\ref{mass-balance-1d}), (\ref{ras9}), (\ref{d4}) and (\ref{ras13}).  With an additional scaling argument, we obtain  the equations to analyze. 
\section{Model scaling and  the governing system}{\label{scaling}}
In this section, we identify the relevant parameters of the system and their numerical values.  This will allow us to rigorously derive a reduced model from the system of equations (\ref{d1})-(\ref{d3}). 
%The complete set of parameters of the model  is given at the end of this section. 
Five dimensionless groups give the system multiscale structure and properties, such as the role of the lower dimensional manifold discussed in section \ref{inertial-manifold}.
 However, the full dynamics cannot be explained in terms of the dimensionless parameter groups only. We find that the range of the oscillatory behavior is further determined by a few individual parameters, such as $\cnacl$, $\sigma_0$ and $\phi_0$, with the remaining ones held fixed. These values are in full agreement with the experiments and are also those used in the simulations of sections \ref{swelling-equilibria} and \ref{numerical-simulations}.

For convenience, we relabel the variable fields of the problem and define dimensionless variables:
 \begin{eqnarray}
 && x=\Ch, \,\, y=\Cah, \,\, z=\ChII, \,\, h=\ChI, \label{fieldsnew}\\
&&\bar x=\frac{x}{c}, \quad \bar y=\frac{y}{c}, \quad \bar z=\frac{z}{c}, \quad \bar L=\frac{L}{L_0}, \quad \bar t=\frac{t}{T},  \label{scaled-variables}
\end{eqnarray}
where $T$ and $c$ are typical time and concentration variables, respectively. 
For now, we use the superimposed {\it bar} notation to represent 
dimensionless quantities.
We choose
\begin{equation}
 T=\frac{L_0\phi_0}{\kw}, \quad c=\Ka.
\end{equation} 
Note that $T$ corresponds to the time scale of equation (\ref{d1}), which will   allow us to properly separate the dynamics of the membrane from that of the chemical reactions.  The choice of  $c$ will lead to  a reduced model resulting from the simplification of equation (\ref{d2}) as we shall see later in the section.  

We now introduce five relevant dimensionless parameter groups consistent with the proposed scaling:
\begin{gather}
\mathcal A_1=\frac{A}{V}\frac{\Cg}{\Ka}\Kg^0 T, \quad \mathcal A_2=\frac{A}{V} T \Kh, \quad \mathcal A_3=\kmar T,  \quad \mathcal A_4= 2\frac{\Kh}{K_w}\phi_0\frac{\Ka\phi_0}{\sigma_0}, \quad \mathcal A_5=\frac{\Ka\phi_0}{\sigma_0}. \label{A}
\end{gather}
The scaled form of equation (\ref{d4}) combined with (\ref{ras9}) is then
 \begin{equation} \bar y=\mathcal A_5^{-1} \frac{\phi\bar x}{1+\bar x}, \,\,\, {\textrm{so}} \,\,\, \,
\bar x+\bar y=\bar x(1+ \mathcal A_5^{-1}\frac{\phi}{1+\bar x})=\mathcal A_5^{-1}\frac{\bar x\phi}{1+\bar x}+O(1). \label{x+y}
\end{equation}
The latter approximation holds for the range of parameters of the  problem, for which $
\mathcal A_5^{-1}\approx 3*10^4. $ Applying it to 
 the scaled version of equation (\ref{d2}), which together with (\ref{d1}) and (\ref{d3}), give the set of equations of the model that we analyse: 
\begin{eqnarray}
\frac{d\phi}{d\bar t}=&& \cR_1,\label{dphi}\\
\frac{d\bar x}{d\bar t}=&&\sA_4\cR_2+ \mathcal A_5(1+\bar x)^2 \cR_1, \label{dx}\\
\frac{d\bar z}{d\bar t}=&& \mathcal A_1\cR_3, \label{dz}\\
&&\lambda=p(\phi)f+\sqrt{p^2(\phi)f^2+1},\quad f=(1+{\bar x})^{-1}, \label{electroneutrality-soln}
\end{eqnarray}
where
 \begin{eqnarray}
\cR_1(\phi,\lambda)=&& (1-\phi)\phi^2\big( H(\phi) -R(\lambda)\big), \quad %\label {R1}
\cR_2(\phi, x, \lambda)=(1-\phi)(1+x)^2(\frac{\lambda}{2}\bar z-\bar x), \label {R2}\\
\cR_3(\phi, x, z, \lambda)=&&  e^{-\beta\phi} -\mathcal A_2\sA_1^{-1}{(1-\phi)}(\lambda \bar z-\bar x)- \mathcal A_3\sA_1^{-1} \bar z, \,\,\,\, \textrm{and} \label{R3}\\
H(\phi)= && \ln(1-\phi)+
\phi+(\chi_1+\phi\chi_2)\phi^2+\bar\nuw\bar\rho_0(\frac{\phi_0}{\phi}-\frac{\phi}{
2\phi_0}), \label{Hlambda}\\
R(\lambda)= &&  \nuw \cnacl(\sqrt{\lambda}-\frac{1}{\sqrt{\lambda}})^2, \quad p(\phi)= \frac{\gamma\phi}{2(1-\phi)}, \label{Rlambda}\\
q(\phi)= &&1+\frac{1}{2\gamma_0}H(\phi), \quad %\label{p-q}\\
\gamma=\frac{\sigma_0}{\cnacl\phi_0}, \quad \gamma_0=\vw \cnacl. \label{gamma-gamma0}
\end{eqnarray}
The graph of $H(\phi)$ is shown in figure \ref{Hfig}.
%Note that  equation (\ref{electroneutrality-soln}) can be replaced with 
%$$
%f=\frac{\phi_0 \cnacl}{\sigma_0}\frac{(1-\phi)}{\phi}(\lambda-\frac{1}{\lambda}). $$
%So, the positivity of $f$ requires $\lambda\geq 1$. 

Parameters of the problem, shown in tables (S2)-(S4) of the Supplementary Materials section, are of three main types: device specifications, hydrogel parameters and rate quantities. It should be noted that while some of
these values either reflect the experimental conditions or are literature
parameters for NIPA/MAA hydrogels \cite{PEN}, some parameters were chosen to
produce results that are in line with experimental observations. 
Specifically, 
the size of the marble component in Cell II was chosen with  a sufficiently large surface area so as to provide the necessary reaction sites to remove excessive hydrogen ions from the system. For the given parameter values, we find that 
\begin{equation} \mathcal A_1\approx 0.16*10^{1}, \,\, \mathcal A_2\approx 0.52* 10^{-2}, \,\, \mathcal A_3\approx 0.75*10^{-2}, \,\, \mathcal A_4\approx 0.33 *10^{-5}. \label{parameter-values} \end{equation}
We point out that $
0<\mathcal A_4<< 1, \quad \mathcal A_1^0:= A_1e^{-\beta\phi_0}= 0.016\,\,\,\, \textrm {and} \,\,\, \, \mathcal A_1^0>\mathcal A_2, \mathcal A_3. \,$  \, %\label{parameter-properties}
So, the  first term of the  right hand side of  (\ref{dz}) is of the order  $\mathcal A_1^0$, since
\begin{equation}
\mathcal A_1\mathcal R_3= \mathcal A_1^0 e^{-\beta(\phi-\phi_0)} + \dots \label{eqn-dz0}
\end{equation}%\end{document}
These parameter properties are relevant to obtain estimates of the solutions.
\begin{definition}\label{parameter-space}
 We let $\mathcal P$ denote the set of parameters in tables (S2)-(S4) and with
\begin{equation}
\cnacl\in [20*10^{-3}, 155*10^{-3}] \textrm{M},  \quad \sigma_0= [0.10, 0.40], \quad \phi_0=[0.1, 0.3]. \label{parameters}
\end{equation}
\end{definition}
Experiments have been carried out with values of $\cnacl$ in the list
$ \{350, 250, 150, 100, 50, 40, 25, 20\}*10^{-3}$M. However for salt concentration values  above $155*10^{-3}$M, no oscillatory behaviour has been found. 

%\subsubsection{Properties of functions of the governing system}
From now on, we suppress the superimposed {\it bar} in the equations and assume  that all variables are scaled. 
We now study the properties of the functions in the governing system.  
 \begin{figure}\label{Hphi}
 \begin{minipage}[t]{\textwidth}
\centerline{
\includegraphics[scale=.55]{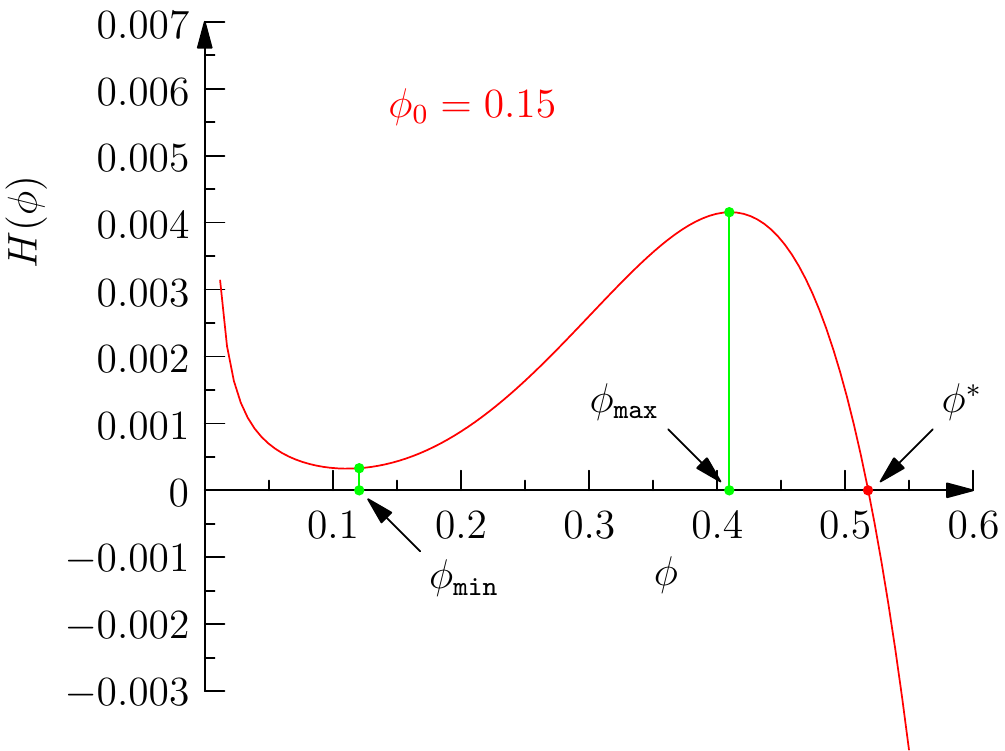}% [width=60mm]
\hspace{.50cm} 
\includegraphics[scale=.55]{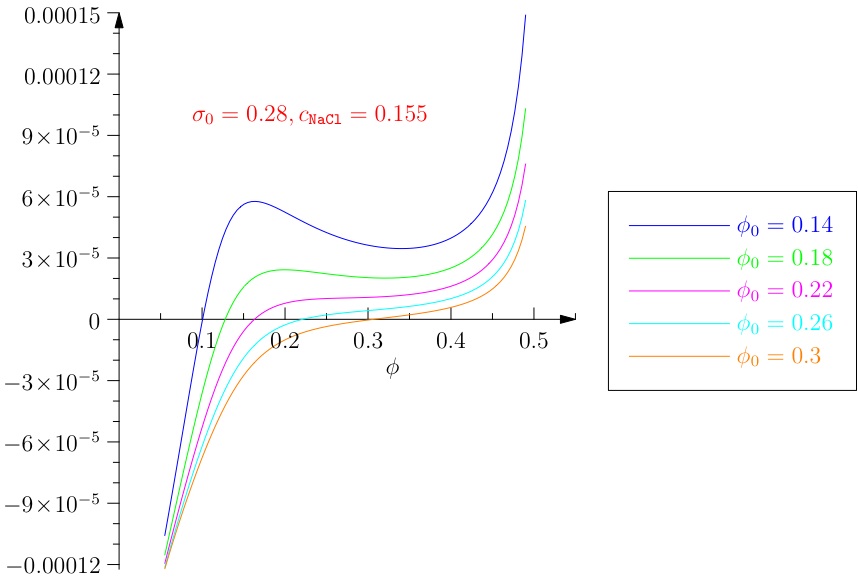}}
\end{minipage}
\caption{\small{(Left) Plot of  the mechanical compliance $H(\phi)$ for  parameters values: $\chi_1=.48, \chi_2=.6$ and
$\phi_0=.3$;  $\phi^*=0.5226$.  (Right) Plot of the left hand side function of equation (\ref{equilibrium-phi})}}\label{Hfig}
\end{figure}
\begin{lemma}
Suppose that the parameters in $(\ref{Hlambda})$ belong to $\mathcal P$.  Then the  function $H(\phi)$  is non-monotonic in $(0,1)$ and has the following properties:
\begin{equation}
 \lim_{\phi\to 0^{+}}H(\phi)=\infty, \quad H(\phi)=O(\frac{1}{\phi}) \,\, {\textrm {and}}\, \, 
 \lim_{\phi\to 1^{-}}H(\phi)=-\infty, \quad H(\phi)=O(\log{(1-\phi)}).
\end{equation}
Moreover, there exists a unique $\phi^*\in(0,1)$ such that 
$
H(\phi^*)=0.$ %\label{phi*}
Furthermore,  $H(\phi)$ has a unique local maximum,  $\phi_{\textrm{\begin{tiny}{max}\end{tiny}}}$  and a unique local minimum,  $\phi_{\textrm{\begin{tiny}{min}\end{tiny}}}$ in $(0,\phi^*)$, as shown in Figure \ref{Hphi}.
\end{lemma}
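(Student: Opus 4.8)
The plan is to reduce the entire statement to the shape of the first derivative $H'$, treating the limits as immediate. Reading off the dominant terms of the function $H$ in (\ref{Hlambda}): as $\phi\to 0^{+}$ the elastic contribution $\bar\nuw\bar\rho_0\,\phi_0/\phi$ dominates while every other term stays bounded, giving $H(\phi)\to+\infty$ with $H(\phi)=O(1/\phi)$; as $\phi\to 1^{-}$ the entropic term $\ln(1-\phi)$ dominates and forces $H(\phi)\to-\infty$ with $H(\phi)=O(\log(1-\phi))$. The real content is the count of critical points and the single zero, and for that I would study $H'$.

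The central structural observation is that $H'$ is \emph{concave} on $(0,1)$. Differentiating three times gives
\begin{equation}
H'''(\phi)=-\frac{2}{(1-\phi)^{3}}+6\chi_2-\frac{6\,\bar\nuw\bar\rho_0\,\phi_0}{\phi^{4}},
\end{equation}
so I must show $\dfrac{2}{(1-\phi)^{3}}+\dfrac{6\,\bar\nuw\bar\rho_0\,\phi_0}{\phi^{4}}>6\chi_2$ throughout $(0,1)$. I would split the interval: on the right $\frac{2}{(1-\phi)^{3}}$ is increasing and already exceeds $6\chi_2=3.6$ once $\phi$ passes a threshold near $0.18$, while on the left the term $\frac{6\,\bar\nuw\bar\rho_0\,\phi_0}{\phi^{4}}$ blows up and dominates, using the lower bounds on $\bar\nuw\bar\rho_0$ and $\phi_0$ recorded for $\mathcal P$. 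Hence $H'''<0$, i.e.\ $H'$ is strictly concave. Since $H'(\phi)\to-\infty$ at both endpoints, a strictly concave function has a unique interior maximizer $\phi_c$; checking $H'(\phi_c)>0$ (equivalently, exhibiting one point where $H'>0$) then forces $H'$ to have exactly two zeros $\phi_{\min}<\phi_c<\phi_{\max}$, with $H'<0$ on $(0,\phi_{\min})$, $H'>0$ on $(\phi_{\min},\phi_{\max})$, and $H'<0$ on $(\phi_{\max},1)$. These are precisely the unique local minimum and local maximum of $H$, and their existence already shows $H$ is non-monotone.

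Finally, for the zero I would exploit the sign pattern just obtained: $H$ decreases from $+\infty$ to $H(\phi_{\min})$, increases to $H(\phi_{\max})$, and then decreases to $-\infty$. If $H(\phi_{\min})>0$, then $H$ remains positive on $(0,\phi_{\max}]$ and, being strictly decreasing from the positive value $H(\phi_{\max})$ to $-\infty$ on $(\phi_{\max},1)$, vanishes at a single point $\phi^{*}\in(\phi_{\max},1)$; in particular $\phi_{\min}<\phi_{\max}<\phi^{*}$, which places both critical points in $(0,\phi^{*})$. The hard part is exactly the two positivity checks — that the interior maximum of $H'$ is positive and that $H(\phi_{\min})>0$ — since neither follows from the soft concavity argument and both must be verified uniformly over the parameter set $\mathcal P$. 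Indeed, were $H(\phi_{\min})$ allowed to be negative the function would acquire three zeros, so this estimate is what genuinely pins down uniqueness and reflects the admissible parameter range rather than general structure.
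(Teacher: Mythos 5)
Your proposal is sound in outline, and it is genuinely different from the paper for a simple reason: the paper gives no proof of this lemma at all. The statement is asserted bare, supported only by the plot of $H$ in Figure~3 for representative parameter values ($\chi_1=0.48$, $\chi_2=0.6$, $\phi_0=0.3$, yielding $\phi^*=0.5226$). Your third-derivative computation is correct, and the concavity reduction is a real addition to what the paper offers: strict concavity of $H'$, together with $H'(\phi)\to-\infty$ at both endpoints (driven by $-\bar\nu_w\bar\rho_0\,\phi_0/\phi^2$ on the left and $-1/(1-\phi)$ on the right), gives unconditionally that $H'$ has at most two zeros, hence that $H$ has at most the decrease--increase--decrease shape seen in the figure, with the local minimum preceding the local maximum. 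The whole lemma is thereby reduced to two scalar inequalities, $\max_{(0,1)} H'>0$ and $H(\phi_{\min})>0$, which are finitely checkable over $\mathcal P$; that is a cleaner and more verifiable justification than the paper's appeal to a graph.

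What keeps your argument from being complete is that you name these two positivity checks as the hard part but do not carry them out, and the step $H'''<0$ itself invokes lower bounds on $\bar\nu_w\bar\rho_0\,\phi_0$ from the supplementary tables without quoting them (the inequality can fail in a window near $\phi\approx 0.14$ if $\bar\nu_w\bar\rho_0\,\phi_0$ is of order $10^{-5}$ or smaller, so the bound is genuinely needed). Nor are the checks removable formalities: $H$ does not involve the salt concentration, but it does involve $\rho_0$, and for $\rho_0$ large the elastic term makes $H'<0$ throughout $(0,1)$, so $H$ is monotone and the non-monotonicity claim is false; likewise if $H(\phi_{\min})\le 0\le H(\phi_{\max})$ the zero of $H$ is not unique. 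So the parameter restriction to $\mathcal P$ enters the lemma exactly through your two deferred inequalities, as you correctly observe. To close the proof you should exhibit one explicit $\phi$ with $H'(\phi)>0$ and bound $H(\phi_{\min})$ below by a positive quantity uniformly over $\phi_0\in[0.1,0.3]$ with the stated $\chi_1,\chi_2$; since your structural argument has already pinned the shape of $H$, these are two elementary, if tedious, evaluations. With them included, your proof would establish rigorously what the paper only displays numerically.
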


%For the  parameters corresponding to Figure \ref{Hfig},   $\phi^*=0.5226$.  
We represent the governing system (\ref{dphi})-(\ref{dz}) and denote the corresponding domain as
  \begin{equation}
\dot{\mathbf x}= \boldsymbol f(\bx) \quad \textrm{and} \quad \mathcal I=\{\bx=(\phi, x, z): \, \phi\in(0, 1), x\in(0, \infty), z\in(0, \infty)\}, \label{I}
\end{equation}
respectively, with the components of $\boldsymbol f$ given by the right hand side functions of the system. It is easy to check that $\boldsymbol f$ is continuously differentiable in the convex set $\mathcal I$, also the set of initial data of the problem. In order to  denote  solutions corresponding to specific initial data in $\mathcal I$, we write
\begin{eqnarray}
&&\phi(t)= \phi(\phi^0, x^0, z^0)(t), \, \, x(t)= x(\phi^0, x^0, z^0)(t), \,\, z(t)=z(\phi^0, x^0, z^0)(t), \nonumber\\
&&\phi_M(t)= \phi(\phi^0, x^+(\phi^0), z^0)(t), \, \, x(t)= x(\phi^0, x^+(\phi^0), z^0)(t), \,\, z(t)=z(\phi^0, x^+(\phi^0), z^0)(t), \nonumber
\end{eqnarray}
with the latter expression referring to restrictions to the two-dimensional manifold $\mathcal M$ defined in  (\ref{Mminus}).

\subsubsection{Time scales of the model}
The  relative sizes of the  dimensionless parameters $\mathcal A_i$, in particular,  the fact that $\mathcal A_4<< 1$  in equation (\ref{dx}) indicates that   $x$  is the slow field of the problem.  Accordingly, we define the time scale associated with $x$ as
\begin{equation}
\tau=\sA_4 \bar t= \frac{\sA_4}{T} t, \quad  \epsilon:=\sA_4, \quad \mu=\frac{\mathcal A_4}{\mathcal A_1}. \label{time-scales}
\end{equation}
So, the original dimensionless time variable $\bar t$ corresponds to the {\it fast } dynamics, whereas $\tau$ gives the {\it slow } time, that now we take as reference.
In the new time scale, the governing system reduces to 
\begin{eqnarray}
\cA_4\frac{d\phi}{d\tau}=&& \cR_1(\phi, \lambda)\label{phi333}\\
\frac{d\bar x}{d\tau}=&&\cR_2(\phi,\bar x,\lambda)+ \frac{\Ka\phi_0}{\sigma_0\mathcal A_4}(1+\bar x)^2\mathcal R_1 , \label{x333}\\
\frac{\cA_4}{\cA_1}\frac{d\bar z}{d\tau}=&& \cR_3(\phi, \bar x, \bar z,\lambda), \label{z333}
\end{eqnarray}
together with equations  (\ref{electroneutrality-soln}).  
For our typical parameter values, $\mathcal A_4=0.33*10^{-5}$, $\frac{\mathcal A_4}{\mathcal A_1^0}=0.12*10^{-3} $  and $\frac{\Ka\phi_0}{\mathcal A_4\sigma_0}=10.$
This motivates identifying a reduced model  consisting of equations (\ref{x333})-(\ref{z333}) together with the algebraic equation
$
0=\cR_1(\phi, \lambda),%\label{phi4}
$
and equations  (\ref{electroneutrality-soln}).
We also point out  that the scaling falls through when $\phi>0$ is arbitrarily small, that is, for highly swollen regimes. This follows from estimating equation (\ref{dphi}) for $\phi>0$ small as well as the observation that $\lambda=O(1) $ and $R(\lambda)\approx 0$. Specifically, 
 for  given $\varepsilon>0$, and for $\phi\in(0, \varepsilon)$:
\begin{equation}
 \label{asymptotic}
p(\phi)=\gamma\phi+o(\varepsilon),\quad
\lambda=1+ o(\varepsilon), \quad R(\lambda)= o(\varepsilon), \quad H(\phi)=\vw\rho_0\frac{\phi_0}{\phi}+ o(\varepsilon). 
 \end{equation}
 This allows us to establish the following lemma:
\begin{lemma}
Let $\tilde T>0$ be such that $\phi=\phi(t)$ and $\lambda=\lambda(t)$ satisfy equation (\ref{dphi}) for $t\in [0,\tilde T]$. Then
\begin{equation}
\frac{d\phi}{dt}= \frac{\nu_w\rho_0}{\cA_4}O(\frac{1}{\phi})\,\, {\textrm{as}}\,\, \phi\to 0^+.
\end{equation}
\end{lemma}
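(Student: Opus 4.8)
The plan is to carry out the estimate in the slow time $\tau=\epsilon\bar t$ with $\epsilon=\mathcal A_4$, introduced in (\ref{time-scales}), since the factor $\mathcal A_4^{-1}$ in the stated bound is precisely the amplification produced by this rescaling; here the $t$ in the statement is read as the slow reference time $\tau$. In that variable (\ref{dphi}) becomes (\ref{phi333}), i.e. $\mathcal A_4\,\frac{d\phi}{d\tau}=\mathcal R_1(\phi,\lambda)$, and inserting the definition of $\mathcal R_1$ from (\ref{R2}) I would write
\[
\frac{d\phi}{d\tau}=\frac{1}{\mathcal A_4}\,(1-\phi)\,\phi^2\,\bigl(H(\phi)-R(\lambda)\bigr).
\]
The whole statement then reduces to controlling the compliance $H(\phi)-R(\lambda)$ as $\phi\to0^+$ along the given solution on $[0,\tilde T]$, together with a careful accounting of the geometric weight $(1-\phi)\phi^2$.

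Next I would feed in the small-$\phi$ expansions (\ref{asymptotic}), valid for $\phi\in(0,\varepsilon)$: there $\lambda=1+o(\varepsilon)$, so by (\ref{Rlambda}) the ionic term is negligible, $R(\lambda)=\nu_w\cnacl(\sqrt\lambda-1/\sqrt\lambda)^2=o(\varepsilon)$, while the preceding lemma on $H$ together with (\ref{Hlambda}) shows that the logarithmic, mixing and $-\phi/(2\phi_0)$ contributions stay bounded and are dominated by the single divergent elastic term, giving $H(\phi)=\nu_w\rho_0\,\phi_0/\phi+o(\varepsilon)$. Hence the compliance is governed entirely by the elastic stress, $H(\phi)-R(\lambda)=\nu_w\rho_0\,\phi_0\,\phi^{-1}\bigl(1+o(1)\bigr)$, and this is the sole source of both the constant $\nu_w\rho_0$ and the $O(1/\phi)$ growth recorded in the statement. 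Substituting back,
\[
\frac{d\phi}{d\tau}=\frac{\nu_w\rho_0}{\mathcal A_4}\,\frac{\phi_0}{\phi}\,(1-\phi)\,\phi^2\,\bigl(1+o(1)\bigr),
\]
and the asserted estimate $\frac{d\phi}{dt}=\frac{\nu_w\rho_0}{\mathcal A_4}\,O(1/\phi)$ as $\phi\to0^+$ follows, the divergence being inherited directly from the elastic compliance $H(\phi)=O(1/\phi)$.

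The step requiring the most care, and the genuine obstacle, is the bookkeeping of the polynomial weight $(1-\phi)\phi^2$ against the $\phi^{-1}$ singularity of $H$: this weight is exactly what distinguishes the rate of $\phi$ from that of the swelling ratio $\phi_0/\phi$ (equivalently $\bar L$), so the order statement must be attributed to the divergent elastic factor $H(\phi)=O(1/\phi)$ and not to a naive cancellation of powers. I would therefore keep the factor $\phi_0/\phi$ displayed throughout, isolate the amplifying constant $\nu_w\rho_0/\mathcal A_4$ exactly as in (\ref{asymptotic}), and emphasize that it is this $O(1/\phi)$ blow-up that signals the breakdown of the slow--fast scaling $\tau=\mathcal A_4\bar t$ in the highly swollen regime. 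A minor preliminary point is to verify that the solution actually enters $\phi\in(0,\varepsilon)$ on $[0,\tilde T]$, so that (\ref{asymptotic}) is applicable; this is guaranteed by the hypothesis that $\phi(t),\lambda(t)$ solve (\ref{dphi}) on that interval, using the boundary behaviour $\lim_{\phi\to0^+}H(\phi)=\infty$ and the non-monotonicity of $H$ established in the preceding lemma.
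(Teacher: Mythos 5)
Your proposal follows essentially the same route as the paper, whose entire argument for this lemma consists of substituting the small-$\phi$ asymptotics (\ref{asymptotic}) --- $\lambda=1+o(\varepsilon)$, $R(\lambda)=o(\varepsilon)$, $H(\phi)=\nu_w\rho_0\,\phi_0/\phi+o(\varepsilon)$ --- into the slow-time form (\ref{phi333}) of equation (\ref{dphi}), so that the amplifying constant $\nu_w\rho_0/\cA_4$ and the $O(1/\phi)$ factor arise exactly as you derive them. Your extra bookkeeping of the weight $(1-\phi)\phi^2$ --- observing that after cancellation the sharp order of $d\phi/d\tau$ itself is $O(\phi)$, with the $O(1/\phi)$ divergence residing in the elastic compliance $H$ (equivalently in the rate of the swelling ratio $\phi_0/\phi$) --- is in fact more careful than the paper, which leaves this point implicit, and it is consistent with how the estimate is later used in Proposition \ref{boundedness}, where the right-hand side near $\phi=0$ yields exponential growth of $\phi$ rather than blow-up.
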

The behavior of the constitutive functions in (\ref{asymptotic}) yields the following: 
\begin{proposition} \label{boundedness} Solutions of (\ref{dphi}) corresponding to initial data in $
\mathcal I$ have the property  that $\phi$ remains bounded away from $\phi=0$ and $\phi=1$ for all time. Moreover, there exist positive numbers $t_{\textrm{\begin{tiny}m\end{tiny}}}$  and $t_{\textrm{\begin{tiny}M\end{tiny}}}$,  depending on the initial data,   
such that $\phi(t_{\textrm{\begin{tiny}m\end{tiny}}})>0$ and $\phi({t_\textrm{\begin{tiny}M\end{tiny}}})>0$ are minimum and maximum values of  $\phi$, respectively.
\end{proposition}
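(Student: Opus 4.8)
The plan is to read the sign of $d\phi/dt$ directly off (\ref{dphi}). Since $(1-\phi)\phi^2>0$ on $(0,1)$, the sign of $\cR_1(\phi,\lambda)$ equals that of $H(\phi)-R(\lambda)$, so the whole question reduces to comparing the mechanical compliance $H$ against $R$. I first record two uniform facts about $R$. By (\ref{Rlambda}), $R(\lambda)=\nuw\cnacl(\sqrt{\lambda}-1/\sqrt{\lambda})^2\ge 0$ for every state, and it is in fact strictly positive on $\mathcal I$ because (\ref{electroneutrality-soln}) gives $\lambda=p(\phi)f+\sqrt{p^2(\phi)f^2+1}>1$ whenever $p(\phi)=\gamma\phi/(2(1-\phi))>0$ and $f=(1+x)^{-1}>0$. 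Second, since $p(\phi)\to 0$ as $\phi\to 0^+$ and $f\le 1$, we get $\lambda\to 1$ and hence $R(\lambda)\to 0$ \emph{uniformly in} $x$ as $\phi\to 0^+$.

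Next I set up the two barriers. Lower barrier: by the Lemma on $H$, $H(\phi)\to+\infty$ as $\phi\to 0^+$; combined with the uniform decay $R(\lambda)\to 0$ this yields $\underline\phi\in(0,1)$ with $\cR_1(\phi,\lambda)>0$ for every $\phi\in(0,\underline\phi]$ and every $x>0$, so $d\phi/dt>0$ whenever $\phi\le\underline\phi$. Upper barrier: the same Lemma provides the unique zero $\phi^*$ and $H\to-\infty$ as $\phi\to 1^-$; uniqueness of $\phi^*$ together with $H(0^+)=+\infty$ forces $H<0$ on $(\phi^*,1)$, so with $R>0$ we obtain $\cR_1<0$ on $[\phi^*,1)$ (at the endpoint $\cR_1(\phi^*,\lambda)=-(1-\phi^*)(\phi^*)^2R(\lambda)<0$), i.e. $d\phi/dt<0$ whenever $\phi\ge\phi^*$. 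Since $\boldsymbol f$ is $C^1$, solutions are $C^1$, and a last-crossing-time argument applies at each barrier: if $\phi(t_1)<\underline\phi$, set $t_0=\sup\{t<t_1:\phi(t)=\underline\phi\}$ and contradict the strict increase on $(t_0,t_1]$, and symmetrically at $\phi^*$. This gives $\min(\phi^0,\underline\phi)\le\phi(t)\le\max(\phi^0,\phi^*)$ for all $t\ge 0$. Writing $\phi_-=\min(\phi^0,\underline\phi)$ and $\overline\phi=\max(\phi^0,\phi^*)$, both in $(0,1)$, this is precisely the first assertion.

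For the extrema, the two barriers push the flow strictly inward ($\cR_1>0$ at the low end, $\cR_1<0$ at $\phi^*$ and above), so $\phi'=\cR_1$ takes both signs and $\phi$ cannot be monotone; being confined to the compact interval $[\phi_-,\overline\phi]$ with continuous derivative, $\phi'$ must vanish at interior times, producing a local minimum at some $t_m>0$ and a local maximum at some $t_M>0$ with $\phi(t_m),\phi(t_M)\in(0,1)$, in particular strictly positive. The delicate point, which I expect to be the main obstacle, is upgrading these turning points to the \emph{global}-in-time minimum and maximum, that is, excluding an asymptotic creep of $\sup_t\phi$ up to $\phi^*$ or of $\inf_t\phi$ down to $\underline\phi$. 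The inward push near $\phi^*$ degenerates as $R(\lambda)\to 0$ (equivalently $\lambda\to 1$), so I plan to close the argument by establishing an a priori lower bound on $R(\lambda)$ along the orbit, equivalently an a priori upper bound on $x$ keeping $f$ bounded below and $\lambda$ bounded away from $1$, which renders the inward drift quantitatively nonzero and forces the extrema to be attained at finite positive times.
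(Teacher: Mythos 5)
Your confinement argument is correct and is essentially the paper's own route. The paper bounds orbits away from $\phi=1$ by observing that $H(\phi)\le 0$ for $\phi\ge\phi^*$ while $R(\lambda)\ge 0$, so $\frac{d\phi}{dt}<0$ there; and away from $\phi=0$ by integrating (\ref{dphi}) with the asymptotics (\ref{asymptotic}), which give $\phi(t)=\phi_i\, O(e^{t/(\vw\rho_0\phi_0)})$ and hence contradict any decrease of $\phi$ near $0$. Your two barriers encode exactly this sign information; indeed your write-up is somewhat more careful than the paper's on two points it leaves implicit: the decay $R(\lambda)\to 0$ as $\phi\to 0^+$ must be \emph{uniform in} $x$ (since $\lambda$ in (\ref{electroneutrality-soln}) depends on $x$ through $f$), which you verify via $p(\phi)\to 0$ and $f\le 1$; and the passage from the local sign of $\cR_1$ to the global bound $\min(\phi^0,\underline\phi)\le\phi(t)\le\max(\phi^0,\phi^*)$, which you close with an explicit last-crossing argument where the paper argues more loosely by contradiction from initial data at the barriers.

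The genuine gap is in the extremum part, and it is twofold. First, the inference ``the two barriers push the flow strictly inward, so $\phi'=\cR_1$ takes both signs and $\phi$ cannot be monotone'' is invalid: for initial data in $(\underline\phi,\phi^*)$ the orbit need never enter either barrier region, and nothing in the sign analysis excludes $\phi$ increasing (or decreasing) monotonically toward an interior value along which $H(\phi)-R(\lambda(t))\to 0$ dynamically — in that scenario $\sup_t\phi$ is a limit never attained at finite time, and your claimed turning points $t_m,t_M$ need not exist. Second, the step you yourself flag as the crux — an a priori upper bound on $x$ along the orbit keeping $f$ bounded below and $\lambda$ bounded away from $1$, so that the inward drift near $\phi^*$ is quantitatively nonzero — is only announced, not carried out, so the second assertion of the proposition remains unproven in your proposal. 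For fairness: the paper's own proof is likewise confined to the boundedness claim and offers no argument whatsoever for the attainment of the minimum and maximum at finite positive times (which in this model is really tied to the oscillatory dynamics established in the later sections, e.g.\ the limit cycle, rather than to the barrier analysis); so your explicit acknowledgment of the difficulty is more forthcoming than the source, but as a proof of the statement as written, the gap stands.
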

 \begin{proof}Let us first recall that $\phi^*$ denotes the largest value of $\phi$ such that $H(\phi^*)=0$.  From the governing equation for $\phi$, we see that for initial data $\phi_i\geq \phi^*$, $\frac{d\phi}{dt}<0$. This proves the boundedness of the orbit away from $\phi=1$. 
 Let us consider an orbit with initial data $\phi_i\in(0,\eps)$ and such that $\frac{d\phi}{dt}<0$ at some $t>0$.   Integrating the governing equation for $\phi$ while taking into account the
 estimates in (\ref{asymptotic}) on its right hand side function yields
 $\phi(t)=\phi_i O(e^{\frac{t}{\vw\rho_0\phi_0}}).$
 This contradicts the assumption that $\frac{d\phi}{dt}<0$ at some $t>0$, and so proving the statement of the proposition.
   \end{proof}
     \begin{remark}
   Proposition (\ref{boundedness}) states that the solution $\phi$ of the governing system remains in the interval $(0, 1)$, for all time  of existence. However, due to the separation of time scales, only disconnected subintervals of $(0,1)$  are admissible.  This is the topic of the next section. 
   \end{remark}
   
      The following lemma is used in the estimate of bounds of solutions. Let us  rewrite equation (\ref{dz}) as  
\begin{equation}
\frac{dz}{dt}+w(t)z=g(t)\, \, \textrm {\,with\,}\,  w:= \cA_2(\lambda(1-\phi)+ 1), \quad g:=\cA_1e^{-\beta\phi}+\cA_2(1-\phi)x.\label{z-auxiliary}
\end{equation}
    The proof follows by integration of the previous auxiliary equation.  
\begin{lemma}
Suppose that $z_0>0$,  $x=x(t)$, $\phi=\phi(t)$ and $\lambda=\lambda(t)$ are prescribed, continuous functions with $ t\in[0, \hat T]$, for some $\hat T>0$. Then, the solution of (\ref{dz}) satisfies the equation
\begin{equation}
z(t)= E^{-1}(t)\big(z_0+ \int_0^t E(s)g(s)\,ds\big), \,\, t\in [0, \hat T], \,z_0=z(0), \quad {\rm{and}}\,\, E(t):= \exp{\int_0^t}w(s)\,ds.
\end{equation}
\end{lemma}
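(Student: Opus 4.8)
The plan is to treat the rewritten equation (\ref{z-auxiliary}), namely $z' + w(t)z = g(t)$, as a scalar linear first-order inhomogeneous ODE and to recover the stated representation by the classical integrating-factor method. First I would note that, since $\phi$, $x$ and $\lambda$ are prescribed continuous functions on $[0,\hat T]$, the coefficient $w(t)=\cA_2(\lambda(1-\phi)+1)$ and the forcing $g(t)=\cA_1 e^{-\beta\phi}+\cA_2(1-\phi)x$ are themselves continuous on $[0,\hat T]$. Hence the integral $\int_0^t w(s)\,ds$ is well defined and the integrating factor $E(t)=\exp\int_0^t w(s)\,ds$ is a strictly positive $C^1$ function satisfying $E(0)=1$ and $E'(t)=w(t)E(t)$.

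Next I would multiply both sides of (\ref{z-auxiliary}) by $E(t)$ and recognize the left-hand side as an exact derivative: $E z' + E w z = (E z)'$, so that the equation collapses to $\frac{d}{dt}\big(E(t)z(t)\big)=E(t)g(t)$. Integrating this identity over $[0,t]$ and using $E(0)=1$ together with $z(0)=z_0$ yields $E(t)z(t)-z_0=\int_0^t E(s)g(s)\,ds$. Since $E(t)>0$ throughout $[0,\hat T]$, I may divide through by $E(t)$ to obtain precisely the asserted formula $z(t)=E^{-1}(t)\big(z_0+\int_0^t E(s)g(s)\,ds\big)$.

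There is essentially no obstacle here: this is the standard variation-of-parameters solution of a linear ODE, and every step is reversible, so the representation is both necessary and sufficient; in particular this also delivers uniqueness of the solution on $[0,\hat T]$. The only point worth flagging is that all manipulations are justified precisely because the prescribed data $\phi$, $x$, $\lambda$ are assumed continuous, so that $w$ and $g$ are continuous and the fundamental theorem of calculus applies; no sign condition on the coefficients nor any smallness of the parameters $\cA_i$ is required. As a byproduct, positivity of the solution is immediate: for $z_0>0$ one has $E^{-1}(t)>0$ and $g(t)>0$ on the physical range $\phi\in(0,1)$, $x>0$, whence $z(t)>0$ for all $t\in[0,\hat T]$.
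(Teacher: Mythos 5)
Your proof is correct and coincides with the paper's argument: the paper proves the lemma simply by integrating the auxiliary equation (\ref{z-auxiliary}), which is exactly your integrating-factor computation with $E(t)=\exp\int_0^t w(s)\,ds$. Your added remarks on continuity of $w$, $g$ and on positivity of $z$ are sound but go slightly beyond what the paper records.
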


Next, we show that solutions of (\ref{dx}) remain bounded away from $x=0$. 
\begin{lemma}
Suppose that $\phi=\phi(t)$  and $z=z(t)>0$ are prescribed continuous functions for $t\geq 0$, and such that $\phi(t)$ and $z(t)$ remain bounded away from $\phi=0$ and $z=0$, respectively. 
Then the solution $x=x(t)$ of equation (\ref{dx}) corresponding to initial data $x(0)>0$ remains bounded away from  $x=0$ for all $t>0$. 
\end{lemma}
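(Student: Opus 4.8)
The plan is to rule out $x\downarrow 0$ by an invariant–region argument, after first disposing of the one term in (\ref{dx}) whose sign is indefinite. Using (\ref{R2}) and $\cR_1=(1-\phi)\phi^2(H(\phi)-R(\lambda))$, equation (\ref{dx}) may be written
\[
\frac{dx}{dt}=(1+x)^2(1-\phi)\Big[\mathcal{A}_4\big(\tfrac{\lambda}{2}z-x\big)+\mathcal{A}_5\phi^2\big(H(\phi)-R(\lambda)\big)\Big].
\]
The influx term $\mathcal{A}_4(\tfrac{\lambda}{2}z-x)$ is positive for small $x$, but the mechanical term $\mathcal{A}_5\phi^2(H-R)$ is \emph{negative} exactly in the swelling regime $H(\phi)<R(\lambda)$, and $R(\lambda)$ is largest precisely as $x\to0^+$ (since $\lambda$ is increasing in $f=(1+x)^{-1}$, hence decreasing in $x$). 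So a naive barrier ``$\dot x>0$ at $x=0$'' is unavailable, and this is the main obstacle. I would circumvent it by noting, via (\ref{dphi}), that $\mathcal{A}_5(1+x)^2\cR_1=\mathcal{A}_5(1+x)^2\dot\phi$ is an exact derivative in disguise.

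Accordingly, introduce $m:=(1+x)^{-1}\in(0,1)$, so that $x$ is bounded away from $0$ iff $m$ is bounded away from $1$. A direct computation gives
\[
\dot m=-\frac{\dot x}{(1+x)^2}=-\mathcal{A}_4(1-\phi)\big(\tfrac{\lambda}{2}z-x\big)-\mathcal{A}_5\dot\phi,
\]
whence, setting $U:=m+\mathcal{A}_5\phi$,
\[
\dot U=-\mathcal{A}_4(1-\phi)\big(\tfrac{\lambda}{2}z-x\big).
\]
The essential gain is that $H$ and $R$ have dropped out entirely, leaving an evolution for $U$ whose right–hand side has a definite sign in the regime of interest.

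Now I would run a trapping argument. By hypothesis there is $z_*>0$ with $z(t)\ge z_*$, and (for genuine solution components, by Proposition \ref{boundedness}) a constant $b<1$ with $\phi(t)\le b$; moreover $\lambda\ge1$ from (\ref{electroneutrality-soln}). Hence whenever $x\le z_*/2$ one has $\tfrac{\lambda}{2}z-x\ge\tfrac{z_*}{2}-x\ge0$, so $\dot U\le0$. Setting $\bar U:=(1+z_*/2)^{-1}+\mathcal{A}_5$, the inequality $U>\bar U$ forces $m>(1+z_*/2)^{-1}$, i.e. $x<z_*/2$, so $U$ is nonincreasing on $\{U>\bar U\}$; the standard last–crossing argument then yields $U(t)\le\max\{U(0),\bar U\}$ for all $t$. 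Since $\mathcal{P}$ makes $\mathcal{A}_5\approx 3\times10^{-5}$ small (so that $\mathcal{A}_5<\tfrac{z_*/2}{1+z_*/2}$, giving $\bar U<1$), and since $m(t)\le U(t)$, we obtain $x(t)\ge\big(\max\{U(0),\bar U\}\big)^{-1}-1>0$ whenever $U(0)<1$, i.e. whenever $x(0)$ is not of order $\mathcal{A}_5$. The only case needing extra care is $x(0)=O(\mathcal{A}_5)$, where $U(0)$ may exceed $1$: there I would use the quantitative rate $\dot U\le-\mathcal{A}_4(1-b)(z_*/2-x)$ valid while $x\le z_*/4$; as $U\ge0$ is bounded below it cannot decrease at this fixed rate indefinitely, so $U$ drops below $\bar U<1$ in finite time and is trapped there, while on the short transient $m(t)\le U(t)\le U(0)$ and $x(0)>0$ keeps $x$ away from zero. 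In summary, the substitution $U=m+\mathcal{A}_5\phi$ turns the indefinite–sign coupling into an exact derivative, after which $\lambda\ge1$ and $z\ge z_*$ do the work; the only genuine subtleties are the smallness of $\mathcal{A}_5$ and the boundedness of $\phi$ away from $1$, both guaranteed on $\mathcal{P}$ and by Proposition \ref{boundedness}.
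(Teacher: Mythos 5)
Your route is genuinely different from the paper's, and it is aimed at a real issue. The paper's own proof is a short barrier/contradiction argument: assuming $0\le x(t)\le x(t_1)\le\varepsilon$ for $t\ge t_1$, it computes $\lambda(t_1)=p(\phi(t_1))+\sqrt{p^2(\phi(t_1))+1}+O(\varepsilon)$ from (\ref{electroneutrality-soln}), deduces $\cR_2(t_1)\ge(1-\phi(t_1))(1+x(t_1))^2\big(\tfrac{\lambda(t_1)}{2}z(t_1)-\varepsilon\big)$, positive and bounded away from zero since $z\ge z_*>0$, and concludes $\frac{dx}{dt}(t_1)>0$. In doing so it passes from $\cR_2>0$ to $\dot x>0$ without controlling the second term $\mathcal A_5(1+x)^2\cR_1$ of (\ref{dx}), which by (\ref{R1-sign}) is negative for $x<x^+(\phi)$ --- exactly the regime $x\to0^+$ --- and is not negligible, since $\mathcal A_5/\mathcal A_4=10$ for parameters in $\mathcal P$. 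You correctly single this out as the main obstacle, and your substitution $U=(1+x)^{-1}+\mathcal A_5\phi$, which turns the troublesome term into an exact derivative via (\ref{dphi}) and leaves $\dot U=-\mathcal A_4(1-\phi)\big(\tfrac{\lambda}{2}z-x\big)$, is a clean repair; the trapping argument using $\lambda\ge1$, $z\ge z_*$ and $\phi\le b<1$ is correct as far as it goes. What the paper's argument buys is brevity and literal applicability to prescribed $\phi$; what yours buys is that it actually accounts for the indefinite-sign coupling --- at the price of requiring $\phi$ to be a genuine solution of (\ref{dphi}) (the lemma posits $\phi$ merely prescribed and continuous, in which case $\cR_1\neq\dot\phi$ and your cancellation is unavailable), and of the quantitative condition $\mathcal A_5<\frac{z_*/2}{1+z_*/2}$, which ties the parameter set to the lower bound on $z$ although the lemma imposes no such relation.

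The one genuine gap in your proposal is the corner case $x(0)=O(\mathcal A_5)$, i.e.\ $U(0)\ge1$. There the clause ``on the short transient $m(t)\le U(t)\le U(0)$ \dots keeps $x$ away from zero'' is vacuous: $m=(1+x)^{-1}<1$ always, so a bound $m\le U(0)$ with $U(0)\ge1$ says nothing, and within the $U$-ledger alone one cannot exclude $x\downarrow0$ during the transient. Indeed, if $x(t^\dagger)=0$ then $U(t^\dagger)=1+\mathcal A_5\phi(t^\dagger)$, and your decay rate $\dot U\le-c$ only forces $\mathcal A_5\big(\phi(0)-\phi(t^\dagger)\big)\ge\frac{x(0)}{1+x(0)}+c\,t^\dagger$, which is consistent with a sufficiently fast decrease of $\phi$ --- and $\phi$ \emph{does} decrease below the nullcline, where $\dot\phi=\cR_1<0$. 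Closing this case needs an ingredient outside your argument: as $\phi$ decreases with $x$ small, the asymptotics (\ref{asymptotic}) give $\lambda\to1$, $R(\lambda)\to0$ and $H(\phi)\to+\infty$, so $\cR_1$ must change sign before $\phi$ collapses, after which $\dot x\ge\mathcal A_4\cR_2>0$ and the barrier reasserts itself. To be fair, in this regime the paper's proof fails too, and more fundamentally, since it ignores the $\cR_1$ term altogether; your proposal is the stronger of the two, but as written it does not cover all initial data $x(0)>0$ claimed by the lemma.
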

\begin{proof}
We argue by contradiction and assume that for a prescribed arbitrarily small $\epsilon>0$,  there exits $t_1=t_1(\epsilon)>0$ such that $0\leq x(t)\leq x(t_1)\leq\varepsilon$, 
for $t\geq t_1$.  A simple calculation using 
(\ref{electroneutrality-soln}) gives 
\begin{equation}
\lambda(t_1)= p(\phi(t_1))+ \sqrt{p^2(\phi(t_1))+1}+ O(\varepsilon).
\end{equation} 
Hence, $\mathcal R_2(t_1)\geq (1-\phi(t_1))(1+x(t_1))^2(\frac{\lambda(t_1)}{2}z(t_1)-\varepsilon).$
So, $\frac{dx}{dt}(t_1)>0$ and bounded away from 0, and therefore, $x(t)$ cannot further decrease to 0. 
\end{proof}

Using the previous lemmas on boundedness of solutions,  we can now state the following.
\begin{proposition} \label{bounded-orbits}
Suppose that the parameters of the governing system belong to $\mathcal P$.  Let $\{\phi=\phi(t), \,x=x(t)$,  $z= z(t)\}$, $t\in [0, T)$, denote a solution of the 
system (\ref{dphi})-(\ref{dz}) and (\ref{electroneutrality-soln}) corresponding to initial data in $\mathcal I$, and with $T>0$ representing the maximal time of existence. 
 Then $\{\phi(t), x(t), z(t)\} $ are bounded. Moreover, the lower bounds are strictly positive and  the upper bound of $\phi$ is strictly less than 1.  Furthermore, $\mathcal I$
is invariant under the flow of the governing system.  %Furthermore, the scaling with respect to the slow time $\mathcal A_4$ holds in $\mathcal I$.
\end{proposition}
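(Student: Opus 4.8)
The plan is to derive the stated bounds field by field, relying on Proposition~\ref{boundedness} and the two preceding lemmas, and supplying one genuinely new argument for the coupled upper bounds of $x$ and $z$. I would proceed in the order: $\phi$ (and hence $\lambda$), then the lower bounds of $z$ and $x$ simultaneously, then their upper bounds simultaneously, and finally invariance and global existence. Proposition~\ref{boundedness} already confines $\phi(t)$ to a compact subinterval $[\phi_{\min},\phi_{\max}]\subset(0,1)$, supplying both the strictly positive lower bound and the strictly sub-unit upper bound for $\phi$. Once $\phi$ is so confined, formula (\ref{electroneutrality-soln}) with $f=(1+x)^{-1}\le 1$ gives $1\le\lambda\le 1+2p(\phi_{\max})=:\lambda_{\max}$, a bound that is \emph{uniform in $x$}; consequently $H(\phi)$, $R(\lambda)$, and the coefficients $w=\mathcal A_2(\lambda(1-\phi)+1)$, $g=\mathcal A_1e^{-\beta\phi}+\mathcal A_2(1-\phi)x$ of (\ref{z-auxiliary}) are all controlled on $[\phi_{\min},\phi_{\max}]\times[1,\lambda_{\max}]$.

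Next, the lower bounds. Whenever $x\ge 0$ one has $g\ge \mathcal A_1e^{-\beta\phi_{\max}}>0$, and $w\le\bar w$ for some constant $\bar w$; feeding these into the representation lemma for (\ref{dz}) yields $z(t)\ge z_0e^{-\bar w t}+\tfrac{\mathcal A_1e^{-\beta\phi_{\max}}}{\bar w}(1-e^{-\bar w t})\ge\min\!\big(z_0,\mathcal A_1e^{-\beta\phi_{\max}}/\bar w\big)>0$, a uniform positive lower bound that needs no control on the \emph{size} of $x$. The lemma asserting that solutions of (\ref{dx}) stay bounded away from $x=0$ then gives $x(t)\ge x_*>0$, since $\phi$ and $z$ are bounded away from $0$. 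The apparent circularity — the $z$-bound uses $x\ge 0$, the $x$-bound uses $z\ge z_*$ — I would remove by a continuation argument on the largest subinterval of $[0,T)$ on which $x>0$ and $z>0$: on it both uniform bounds hold and do not degenerate, so positivity cannot be lost and the subinterval exhausts $[0,T)$.

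The main obstacle is the coupled upper bounds, which I would obtain by constructing a forward-invariant rectangle $\{0\le x\le X_*,\ 0\le z\le Z_*\}$ for the $(x,z)$-flow, with $\phi\in[\phi_{\min},\phi_{\max}]$ and $\lambda\in[1,\lambda_{\max}]$ treated as coefficients. On the face $z=Z_*$, discarding the nonpositive term $-\mathcal A_2(1-\phi)\lambda z$ in (\ref{dz}) gives $\dot z\le \mathcal A_1+\mathcal A_2 X_*-\mathcal A_3 Z_*<0$ once $Z_*>(\mathcal A_1+\mathcal A_2 X_*)/\mathcal A_3$. On the face $x=X_*$, (\ref{dx}) and (\ref{R2}) give $\dot x=(1+X_*)^2(1-\phi)\big[\mathcal A_4(\tfrac{\lambda}{2}Z_*-X_*)+\mathcal A_5\phi^2(H(\phi)-R(\lambda))\big]$, and the decisive observation is that $f=(1+X_*)^{-1}\to 0$ forces $\lambda(\phi,X_*)\to 1$ as $X_*\to\infty$: the very Donnan coupling $\tfrac{\lambda}{2}z$ that could drive $x$ upward is self-limiting, its coefficient collapsing toward $\tfrac12$. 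Hence $\dot x<0$ holds for $X_*>\tfrac{\lambda(\phi,X_*)}{2}Z_*+(\mathcal A_5/\mathcal A_4)\max\phi^2|H-R|$, and the two threshold conditions are jointly solvable for arbitrarily large $X_*,Z_*$ precisely when $\tfrac{\lambda(\phi,X_*)}{2}\cdot\tfrac{\mathcal A_2}{\mathcal A_3}<1$ — which holds for $X_*$ large, since $\lambda(\phi,X_*)\to 1$ and $\mathcal A_2<2\mathcal A_3$ throughout $\mathcal P$. Choosing $(X_*,Z_*)$ in this consistent regime and large enough to contain $(x^0,z^0)$ makes the vector field point strictly inward on both faces, so the rectangle is forward invariant and $x,z$ are bounded above. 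I expect the point needing the most care to be the uniformity of the slope condition over all $\phi\in[\phi_{\min},\phi_{\max}]$.

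Finally, the bounds place the whole orbit in a compact set $\mathcal K=[\phi_{\min},\phi_{\max}]\times[x_*,X_*]\times[z_*,Z_*]\subset\mathcal I$ for every $t\in[0,T)$. Since $\boldsymbol f$ is $C^1$ on $\mathcal I$ and the trajectory never approaches $\partial\mathcal I$, it can neither leave $\mathcal I$ nor blow up in finite time; thus $T=+\infty$ and $\mathcal I$ is invariant under the flow, which completes the argument.
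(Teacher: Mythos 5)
Your proposal is correct, and on the parts the paper actually addresses it follows the same skeleton: the paper gives no written proof of this proposition at all, stating only that it follows ``using the previous lemmas'' --- i.e., Proposition \ref{boundedness} for the bounds on $\phi$, the integral representation of (\ref{z-auxiliary}) for $z$, and the lemma keeping $x$ away from zero. Where you genuinely go beyond the paper is the pair of \emph{upper} bounds for $x$ and $z$, which the paper never establishes: your forward-invariant rectangle, and in particular the observation that $\lambda(\phi,x)=p(\phi)f+\sqrt{p^2(\phi)f^2+1}\to 1$ uniformly on $\phi\in[\phi_{\min},\phi_{\max}]$ as $x\to\infty$ (since $f=(1+x)^{-1}\to 0$ and $p(\phi)\le p(\phi_{\max})<\infty$), so that the Donnan source $\tfrac{\lambda}{2}z$ in (\ref{R2}) is self-limiting, is a sound and genuinely new ingredient that the statement needs. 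Your consistency condition $\tfrac{\lambda}{2}\,\mathcal A_2/\mathcal A_3<1$ does hold uniformly over $\mathcal P$: since $\mathcal A_2=\tfrac{A}{V}T\Kh$ and $\mathcal A_3=\kmar T$ both scale with $T$, the ratio $\mathcal A_2/\mathcal A_3$ is independent of the varying parameters $\cnacl$, $\sigma_0$, $\phi_0$ and is $\approx 0.69$ for the tabulated values, so the thresholds for $(X_*,Z_*)$ are jointly solvable as you claim. Your bootstrap on the maximal subinterval where $x>0$ and $z>0$ correctly dissolves the apparent circularity that the paper leaves implicit, and the closing compactness/continuation step giving $T=+\infty$ and invariance of $\mathcal I$ is standard. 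One caveat you inherit from the paper rather than introduce: the lemma you invoke for the lower bound on $x$ controls only the $\mathcal R_2$ contribution to (\ref{dx}), whereas near $x=0$ with $\phi$ close to $\phi^*$ the term $\mathcal A_5(1+x)^2\mathcal R_1$ is negative by (\ref{R1-sign}) (there $x<x^+(\phi)$) and is not obviously dominated, given $\mathcal A_5/\mathcal A_4=O(10)$; so that lemma --- on which both your argument and the paper's rest --- relies on an estimate the paper does not fully justify.
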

\section{ Hopf bifurcation: a numerical study}{\label{numerical-simulations}}
  We numerically investigate the following properties of the solutions corresponding to the parameter set $\mathcal P$: %For these set of parameters, we  study  three related properties:
\begin{enumerate}
\item Uniqueness of the steady state solution and its stability.
\item Occurrence of  Hopf bifurcation.
\item Non-monotonicity of the graph $x=x^+(\phi)$  in (\ref{lambdapm}) and  the reduced system (section (\ref{inertial-manifold})).
\end{enumerate}
We have seen that the third property guarantees the construction of  closed orbits of the approximate two-dimensional system. This together with existence of 
 a unique, unstable  steady state provide sufficient conditions for the Poincare-Bendixon theorem to apply, from which existence of  a limit cycle follows. 

 Steady state values of the variables $x$, $z$  and $\phi$ satisfy 
\begin{eqnarray}
&&x=\frac{\lambda}{2} z, \quad z=   \cA_1e^{-\beta\phi}(\cA_2(1-\phi)\frac{\lambda}{2}+\cA_3)^{-1}, \nonumber\\
&& \frac{p}{\sqrt{q^2-1}}-1- \frac{\cA_1}{2}\lambda e^{-\beta\phi}(\cA_2(1-\phi)\frac{\lambda}{2}+\cA_3)^{-1}=0, \label{equilibrium-phi} 
 \end{eqnarray}
where the last equation follows from the previous expressions of $x$ and $z$, together with  relations (\ref{lambdapm})  and (\ref{gamma-gamma0}). 
%yield the steady state equation for $\phi$:
%\begin{equation}
%\frac{p}{\sqrt{q^2-1}}-1- \frac{\cA_1}{2}\lambda e^{-\beta\phi}(\cA_2(1-\phi)\frac{\lambda}{2}+\cA_3)^{-1}=0, \label{equilibrium-phi}
%\end{equation}
%with $p$ and $q$ as in 
The second graph in Figure  \ref{Hphi}
 shows plots of the function on the left hand side of equation (\ref{equilibrium-phi}) with the corresponding unique root,  which after substitution into equations 
for $x$ and $z$,  yields their  steady state values. The computational results summarized next are in full agreement with those of laboratory experiments.

\noindent {{\bf 1.}   For $ \cnacl=0.155$ M:\,}

{\bf a.\,}We found that for each $\phi_0=0.1, 0.15, 0.2, 0.295$,  there is a unique physically relevant  steady state, that is, with $\phi<\phi^*$. 
We point out that this result is valid for $\phi_0\in [0.1, 0.3]$. This is justified by the implicit function theorem applied to the left hand side function in (\ref{equilibrium-phi}) with respect to each root.
In this proof, we also use the fact that the slope of the tangent line to the graph at the intersection with the $\phi$-axis is non-horizontal, as shown   in figure \ref{Hphi} (Right).

 {\bf {b.}}  The stability of the unique equilibrium solution is represented in the diagrams of figures \ref{Hopf-bifurcation}.  We find a region in  parameters plane determined by the Hopf bifurcation curves,
within which the equilibrium is unstable; outside, the solution is stable. The unstable equilibrium has  a pair of complex conjugate eigenvalues with positive real part and a negative eigenvalue. 
Consequently, the existence of a Hopf bifurcation is guaranteed by the {\it Hopf bifurcation theorem } \cite{Marsden1976} (section 3).

{\bf {c.}\,} The  graph $x=x(\phi)$  of the function (\ref{lambdapm}) is found to be non-monotonic inside the regions determined by the Hopf bifurcation curves in figures \ref{Hopf-bifurcation} and monotonic outside.
Oscillatory behaviour is numerically  found inside these regions. 

\noindent{\bf 2.}
We have also found that decreasing $\cnacl$ gives a qualitative behavior analogous to decreasing $\phi_0$, that is, it promotes a single unstable stationary state  of the system and the occurrence of Hopf bifurcation. 
In particular, the following behaviors have been found:

{\bf {a.}\,} For $\cnacl=0.145$M and $\phi_0=0.3$, the steady state solution corresponds to $\phi= 0.145$. 

{\bf {b.}\,} For $\cnacl=0.148$M and $\phi_0=0.3$, the maximum and minimum values  of $x=x(s)$ occur at $\phi=0.24$ and $\phi=0.26$.   In both, these cases, a Hopf bifurcation has also been found.

{\bf {c.}\,} Hopf bifurcations have also been found for $\cnacl=0.125$M and $\cnacl=0.115$M. 

 {\bf {d.}\,} For $\phi_0=0.3$, $\cnacl=0.155$M and $\sigma_0=0.28$, no Hopf bifurcation occurs. The simulations do not show oscillatory behaviour and  there is at least one stable stationary state. 
\begin{figure} \label{Hopf-bifurcation1}
\begin{minipage}[t]{\textwidth}\centerline
{
\includegraphics[scale=.50]{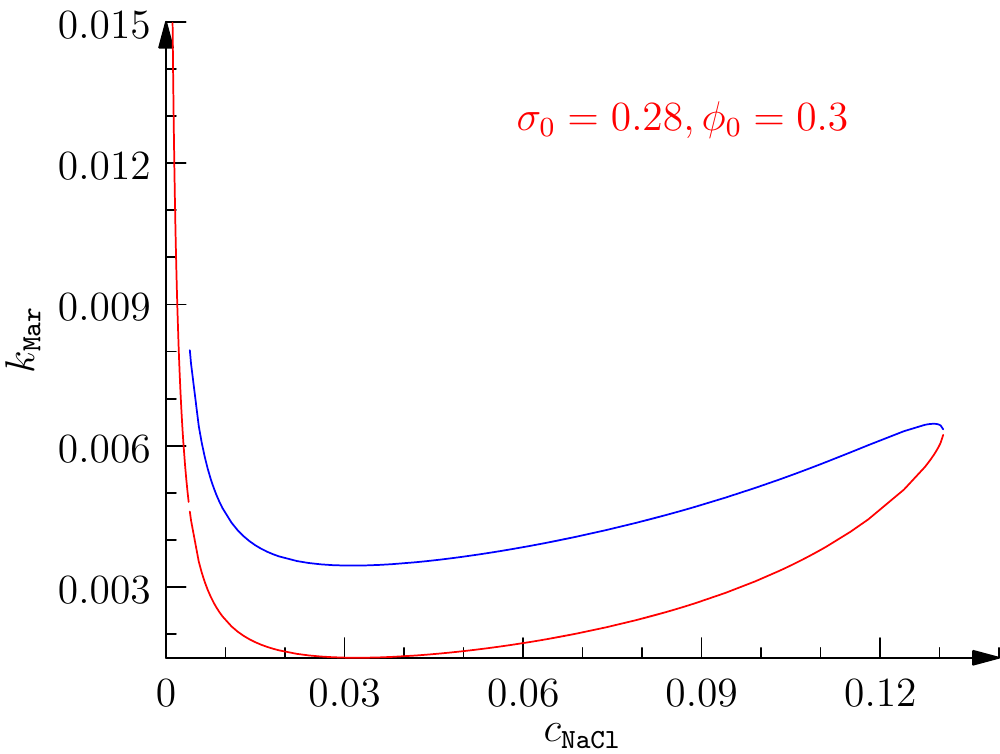}
\hspace{.30cm}
\includegraphics[scale=.50]{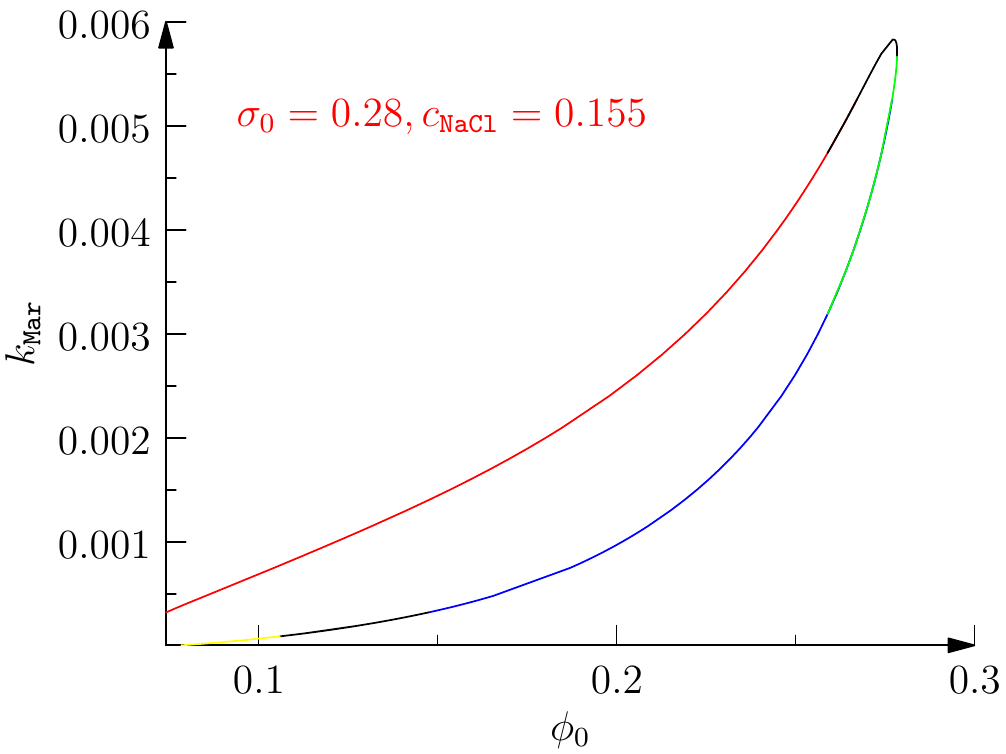}
%}
\hspace{.30cm}
\includegraphics[scale=.5]{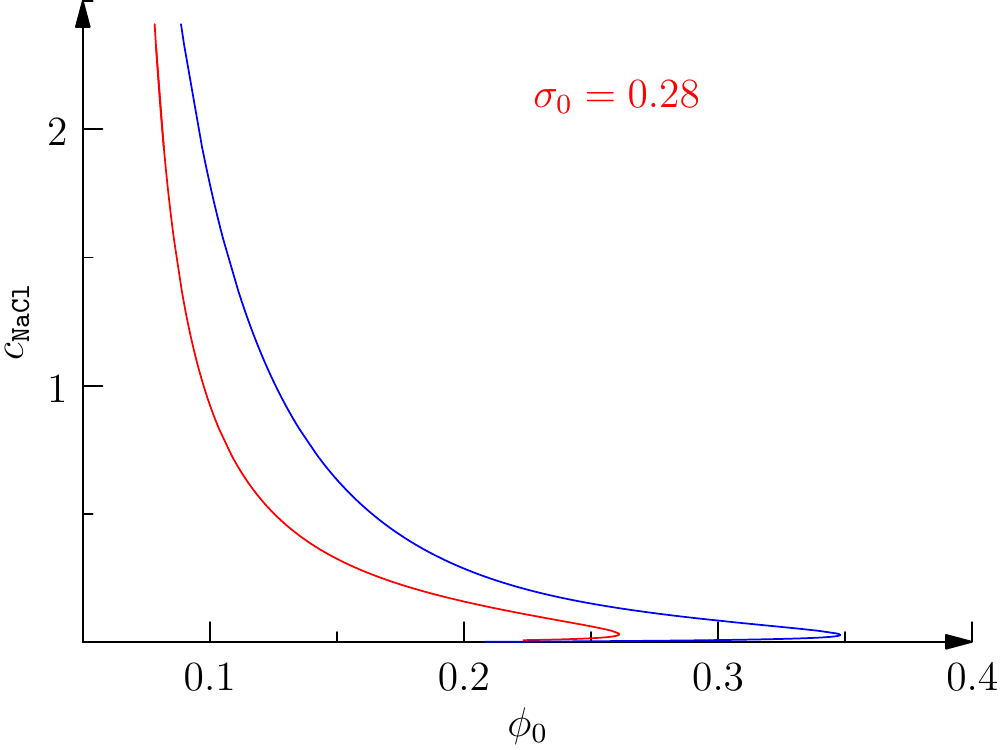}}
\end{minipage}
\caption{Hopf bifurcation diagrams in different parameter planes. The red line denotes the instability threshold. }\label{Hopf-bifurcation}
\end{figure}
%%%%%%%%%%%%%%%%%%%%%%%%%%%%
\section{ Inertial  manifold of the governing  system}{\label{inertial-manifold}}
%We now show that solutions of the three dimensional system belong to a two-dimensional manifold for most of their time of existence.  
We now derive and study a reduced two-dimensional system such that, solutions of initial value problems of the original three-dimensional system remain arbitrarily close to those of reduced one, for most of the time of existence. Moreover, we shall see that, for the parameters of interest, the functions defining the 2-dimensional system are specified in two separate branches. So, 
 it is necessary to extend the concept of solution giving it a weak interpretation as shown in the next section.  Weak solutions also give a good physical description of the phase transition features of the model. 

In order to  identify the slow manifold, we consider the nullcline
\begin{equation}
H(\phi)=R(\lambda)\label{HR}
\end{equation} of the original system.
This corresponds to setting $\epsilon=0$ in (\ref{phi333}), so that $\mathcal R_1\equiv 0$  while keeping $\mu$ fixed. Note that the right hand side of equation (\ref{x333}) is simplified accordingly. 

We now list the properties of the solutions of the equation (\ref{HR}) that follow from the positivity of $R(\cdot)$ and the fact that $\lambda\geq 1$.
\begin{lemma} Let $H(\phi)$ and $R(\lambda)$ be as above. Then, solution pairs $(\phi, \lambda)$ of  equation (\ref{HR})  satisfy:
 $$0<\phi\leq \phi^* \,\, \textrm {so that}\,\, H(\phi)\geq 0 \quad \, \textrm{and} \quad \,
 \lambda>1, \,\, \textrm {with } \, \,\lambda=1\, \, \textrm {when}\,\, \phi=\phi^*.$$
\end{lemma}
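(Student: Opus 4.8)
The plan is to reduce the claim to an elementary sign analysis of the two scalar functions occurring in the nullcline equation (\ref{HR}). First I would record the two facts that drive everything. On the one hand, since $R(\lambda)=\nuw\cnacl(\sqrt{\lambda}-\frac{1}{\sqrt{\lambda}})^2$ is a nonnegative constant times a perfect square, one has $R(\lambda)\geq 0$ for every admissible $\lambda$, with $R(\lambda)=0$ if and only if $\lambda=1$. On the other hand, $\lambda\geq 1$: this is either taken from the earlier discussion or read off directly from (\ref{electroneutrality-soln}), where $\lambda=p(\phi)f+\sqrt{p^2(\phi)f^2+1}$ with $p(\phi)\geq 0$ on $(0,1)$ and $f>0$, so that $\lambda\geq\sqrt{1}=1$. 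Hence $R(\lambda)>0$ holds precisely when $\lambda>1$, and $R(\lambda)=0$ precisely when $\lambda=1$.

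Second, I would upgrade the preceding structural lemma on $H$ to the sign statement that $H(\phi)>0$ on $(0,\phi^*)$, $H(\phi^*)=0$, and $H(\phi)<0$ on $(\phi^*,1)$. This is immediate from that lemma: $\phi^*$ is the \emph{unique} zero of $H$ in $(0,1)$, while $H\to+\infty$ as $\phi\to0^+$ and $H\to-\infty$ as $\phi\to1^-$, and the only interior critical points $\phim$ and $\phiM$ lie in $(0,\phi^*)$. Consequently $H$ has no critical point in $(\phi^*,1)$, so it is strictly decreasing there and therefore strictly negative; and on the connected interval $(0,\phi^*)$ a continuous function with no zero cannot change sign, so the limiting value $+\infty$ at the left endpoint pins $H$ to be positive throughout.

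Finally, I would combine the two observations. The equation $H(\phi)=R(\lambda)$ forces $H(\phi)=R(\lambda)\geq 0$, which by the sign description of $H$ is possible only for $\phi\in(0,\phi^*]$; this gives the bound $0<\phi\leq\phi^*$ together with $H(\phi)\geq 0$. When $\phi=\phi^*$ we have $H(\phi^*)=0$, hence $R(\lambda)=0$ and therefore $\lambda=1$. When $\phi<\phi^*$ we have $H(\phi)>0$, hence $R(\lambda)>0$, which together with $\lambda\geq 1$ forces $\lambda>1$. This is exactly the stated dichotomy.

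I do not anticipate a genuine obstacle: the whole content is the sign behaviour of $H$ across its single zero, inherited from the earlier lemma, plus the fact that $R$ vanishes only at $\lambda=1$. The one point deserving a word of care is the assertion that $H>0$ on \emph{all} of $(0,\phi^*)$ despite the interior dip at the local minimum $\phim$; this is precisely where the uniqueness of the zero $\phi^*$ is used, since a continuous function positive near $0$ and vanishing only at $\phi^*$ cannot dip below zero before $\phi^*$.
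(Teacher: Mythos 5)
Your proof is correct and follows exactly the route the paper indicates (the paper states the lemma without details, noting only that it ``follows from the positivity of $R(\cdot)$ and the fact that $\lambda\geq 1$''): you combine $R\geq 0$ vanishing only at $\lambda=1$ with the sign of $H$ across its unique zero $\phi^*$. Your added care about $H>0$ on all of $(0,\phi^*)$ despite the interior dip at $\phim$ is a worthwhile elaboration of the same argument, not a different approach.
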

This allows us to characterize the nullcline (\ref{HR}) as 
\begin{eqnarray}
\mathcal N=&&\{(\phi, \lambda, z)\in {\mathbf R}^3: 0<\phi<1, \, z>0, \lambda>1,  \, H(\phi)=R(\lambda), \, {\textrm{and such that (\ref{electroneutrality-soln}) holds}}\}, \label{N}\\
\mathcal N:=&& \mathcal N^+\cup \mathcal N^{-},\,\,\,
\mathcal N^+= \{(\phi, \lambda, z)\in \mathcal N: H'(\phi)>0\}, \quad \mathcal N^-= \{(\phi, \lambda, z)\in \mathcal N: H'(\phi)\leq 0\}. \label{N+}
\end{eqnarray}
Since relations (\ref{electroneutrality-soln})  define $\lambda $ as a monotonic function of $x$, we can exchange the roles of $x$ and $\lambda$ in (\ref{N})-(\ref{N+}) at convenience. 

Let us obtain an explicit  representation of $\mathcal N$ and the governing equations of the reduced system. For this, 
we first solve equations  (\ref{electroneutrality-soln}) and (\ref{HR}) using (\ref{Hlambda}) and (\ref{Rlambda})  and  (\ref{gamma-gamma0})  as follows 
\begin{equation}
 \frac{\gamma\phi}{(1-\phi)} f= \lambda-\frac{1}{\lambda}, \quad % \label{lambda-electroneutrality} \\
 \frac{H(\phi)}{\gamma_0}= \lambda+\frac{1}{\lambda}-2. \label{lambda-RH}
\end{equation}
Addition and subtraction of these equations yields
\begin{equation}
 \lambda=q(\phi)+p(\phi) f, \quad
\frac{1}{\lambda}=q(\phi)- p(\phi) f,  \label{lambda-p-q}
\end{equation}
respectively,   which, in turn,  give the equation
%\begin{equation}
%f^2= \frac{1}{p^2}(q^2-1). \label{f-square}
%\end{equation}
%Equation (\ref{f-square}) yields $f$ as a function of $\phi$ as  the single branch: 
%\begin{eqnarray}
$\fp=\frac{1}{p}\sqrt{q^2-1},$ {for} $ q\geq 1$ and with $ H(\phi)\geq 0$. %\label{f-phi1}.\end{eqnarray}
The corresponding values of the concentration $x$ and the Donnan ratio $\lambda$ are
\begin{equation}
 x^{+}= \frac{p(\phi)}{\sqrt{q^2(\phi)-1}} -1, \quad % \label{xpm}\\
 \lambda^{+}= q(\phi)+ \sqrt{q^2(\phi)-1}, \label{lambdapm}
\end{equation}
respectively. Figures \ref{LdP-XdP} present the graphs of these functions, for  parameters in the class $\mathcal P$, showing their non-monotonicity.  We label the  critical points  of $x^+(\phi)$ as
\begin{equation}
 0<\phi_s^1<\phi_s^2: \,\, \frac{dx^+}{d\phi}(\phi_s^i)=0, \, i=1,2, \quad {\textrm{and}}\quad  x_s^i:=x^+(\phi_s^i),
 \label{crit-points}
\end{equation}
and consider the strictly increasing branches $(0,\phi_s^1)\cup(\phi_s^2, \phi^*)$. Let $\phi^+(x)$ denote the inverse of the restrictions of $x^+(\phi)$ to the monotonic branches, and define
\begin{eqnarray}
\mathcal M_1:=
&& \{(\phi, x, z):  \, z\geq 0, \, x=x^+(\phi), \, \phi\in(0,  \phi_s^1)\}, \quad
 \mathcal M_2
 :=\{(\phi, x, z):  \, z\geq 0, \, x=x^+(\phi), \, \phi\in(\phi_s^2, \, \phi^*)\}, \nonumber\\
 \mathcal M:=&& \mathcal M_1\cup \mathcal M_2, \quad %\label{M}\\
 \mathcal M^-:=\{(\phi, x, z):  \, z\geq 0, \, x=x^+(\phi), \, \phi\in(\phi_s^1, \, \phi^2_s)\}. \label{Mminus}
\end{eqnarray}
That is, $\mathcal M$ consists of two  branches where $x^+(\phi)$ increases monotonically. 
 We point out that this graph is the cross section with respect to $z$ of the surface $H(\phi)=R(\lambda)$. This surface divides the whole space into upper and lower regions, with $\cR_1>0$ and $\cR_1<0$, respectively. 

We now study the governing system restricted to $\mathcal M$.
In terms of the original dimensionless time variable (which  we still denote by $t$) and  dependent variables, the governing system in $\mathcal M$
%the nullcline $\mathcal N$ 
reduces to
\begin{equation}
 \frac{dx}{dt}= \mathcal A_4\phi(1-\phi)(1+\gamma\phi f^2)^{-1}\big(\frac{\lambda}{2} z-x\big), \quad % \label{x-3d}\\
 \frac{dz}{dt}= \mathcal A_1e^{-\beta\phi} -\mathcal A_2{(1-\phi)}(\lambda z-x)-\mathcal A_3 z, \label{z-3d}
\end{equation}
together with equations (\ref{lambdapm}).
\begin{figure} 
\begin{minipage}[t]{\textwidth}
\centerline
{
\includegraphics[scale=.40]{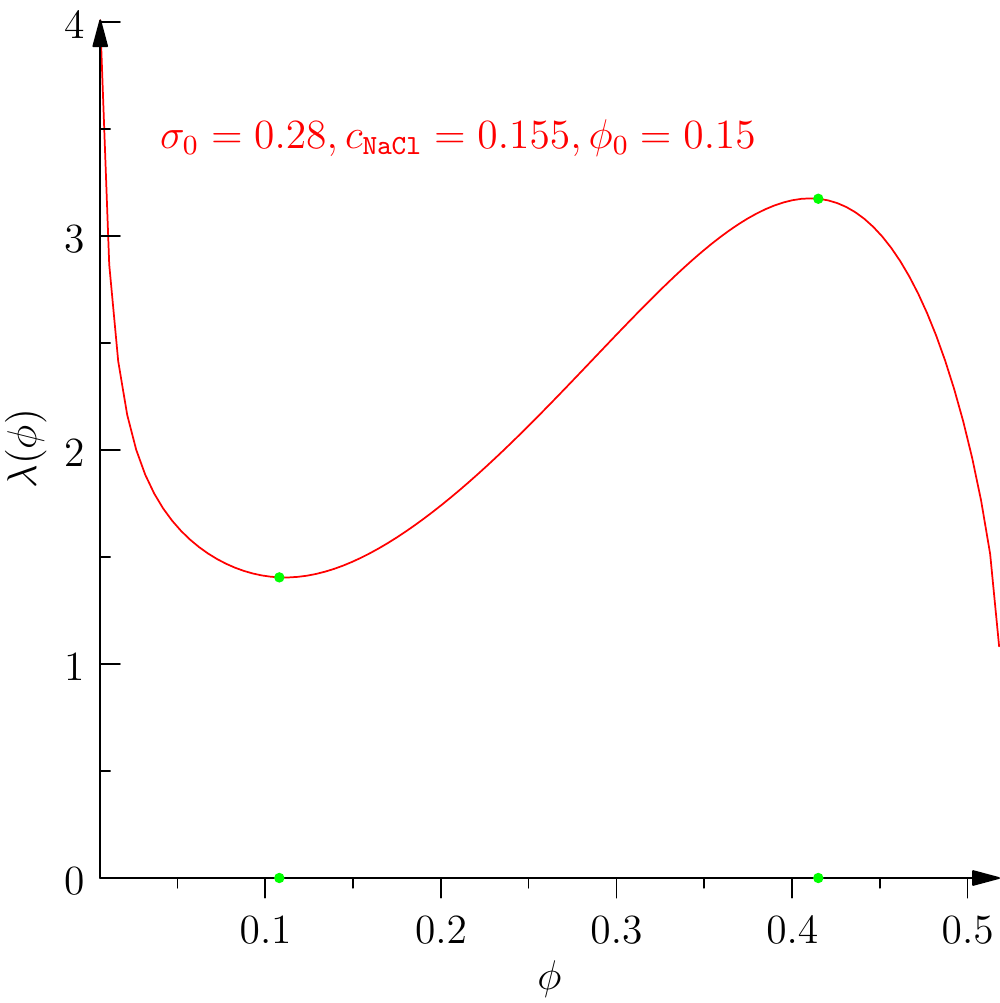}
\hspace{.80cm}
\includegraphics[scale=.40]{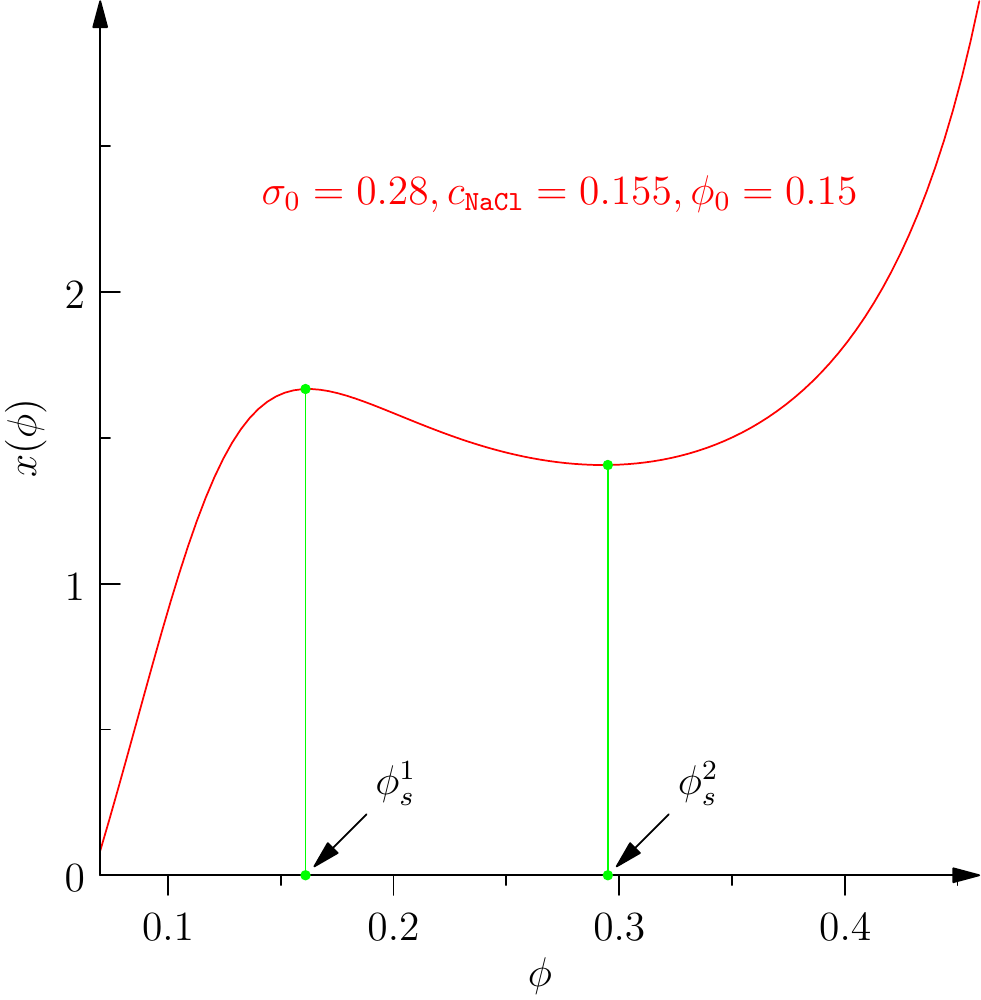}
}
\end{minipage}
\caption{ Plot of $\lambda=\lambda^+(\phi)$  and  $x=x^+(\phi)$ from equations (\ref{lambdapm}) that enter in the definition of $\mathcal M$ }\label{LdP-XdP}
\end{figure}
\begin{proposition} Suppose that the parameters of the problem belong to the class $\mathcal P$. Then $\mathcal M\cup \mathcal M^-$ is a  two-dimensional invariant manifold of the three-dimensional system. Furthermore, the vector field of the two dimensional system is Lipschitz in $\mathcal M$.
 Moreover,
$\mathcal N^-\subset \mathcal M.$ % \label{NinM}

\end{proposition}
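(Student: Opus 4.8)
The plan is to read the governing system (\ref{dphi})--(\ref{dz}) as a singularly perturbed (slow--fast) system: $\phi$ is the fast variable, whose right-hand side $\cR_1$ is $O(1)$, while $x$ and $z$ drift slowly, at rates $O(\cA_4)$ and $O(\cA_1^0)$ respectively, both $\ll 1$. Writing $\epsilon$ for the resulting ratio of slow to fast rates, the zero set $\{\cR_1=0\}$ is the \emph{critical manifold}; since $(1-\phi)\phi^2>0$ on $(0,1)$ it coincides with the nullcline $H(\phi)=R(\lambda)$, which (\ref{lambdapm}) exhibits as the graph $x=x^+(\phi)$ parametrised by $(\phi,z)$, hence a two-dimensional surface, namely $\mathcal M\cup\mathcal M^-$. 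I would prove the three assertions in the order $\mathcal N^-\subset\mathcal M$, then invariance, then the Lipschitz property, since the first supplies the derivative sign used throughout.

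For $\mathcal N^-\subset\mathcal M$, recall from (\ref{N+}) that $\mathcal N^-$ is the part of the nullcline with $H'(\phi)\le0$, and from the preceding lemma that there $0<\phi\le\phi^*$, so $q(\phi)=1+\tfrac{1}{2\gamma_0}H(\phi)>1$. Differentiating $x^+(\phi)=p(\phi)(q^2-1)^{-1/2}-1$ from (\ref{lambdapm}), with $p,q$ as in (\ref{Rlambda})--(\ref{gamma-gamma0}), gives
\[
\frac{dx^+}{d\phi}=(q^2-1)^{-1/2}\Big(p'-\frac{p\,q\,q'}{q^2-1}\Big),\qquad q'=\tfrac{1}{2\gamma_0}H'.
\]
Here $p>0$ and $p'=\tfrac{\gamma}{2}(1-\phi)^{-2}>0$, and where $H'\le0$ one has $q'\le0$, so both bracketed terms are nonnegative and the first is strictly positive; thus $\tfrac{dx^+}{d\phi}>0$. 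Every nullcline point with $H'\le0$ therefore lies on a strictly increasing branch of $x^+$, i.e.\ in $\mathcal M$ by (\ref{crit-points})--(\ref{Mminus}); this is exactly $\mathcal N^-\subset\mathcal M$. The same formula locates the fold points $\phi_s^1,\phi_s^2$ as the zeros of $\tfrac{dx^+}{d\phi}$ and shows $\tfrac{dx^+}{d\phi}<0$ on $\mathcal M^-$, facts reused below.

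For invariance under the three-dimensional flow I would not assert that the exact field (\ref{dphi})--(\ref{dz}) is tangent to $\mathcal M\cup\mathcal M^-$; instead I would prove it in the geometric singular perturbation sense, which is the content relevant to the full system. The key step is normal hyperbolicity. Freezing the slow variable $x$, the transverse linearisation of the fast equation at the critical manifold is $\partial_\phi\cR_1|_x=(1-\phi)\phi^2\big(H'(\phi)-R'(\lambda)\,\partial_\phi\lambda|_x\big)$, since $H-R=0$ there. Implicit differentiation of $G(\phi,x):=H(\phi)-R(\lambda(\phi,x))=0$ gives $\tfrac{dx^+}{d\phi}=-\partial_\phi G/\partial_x G$; using (\ref{electroneutrality-soln}), $\lambda$ is strictly decreasing in $x$ (as $f=(1+x)^{-1}$ decreases and $p>0$) and $R'>0$ for $\lambda>1$, so $\partial_x G>0$, whence $\partial_\phi\cR_1|_x$ has sign opposite to $\tfrac{dx^+}{d\phi}$. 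With the first step this yields $\partial_\phi\cR_1|_x<0$ on $\mathcal M$ (attracting) and $>0$ on $\mathcal M^-$ (repelling), the only degeneracies being at the excluded fold points. Hence each of the three open branches is normally hyperbolic, and Fenichel's invariant-manifold theorem applies branch by branch: for $0<\epsilon\ll1$ the full three-dimensional system carries a locally invariant two-dimensional manifold that is $O(\epsilon)$-close to that branch of $\mathcal M\cup\mathcal M^-$ in the $C^1$ topology and converges to it as $\epsilon\to0$, with induced flow equal to the reduced system (\ref{z-3d}) up to $O(\epsilon)$. This is the precise sense in which $\mathcal M\cup\mathcal M^-$ is the invariant manifold of the three-dimensional system, matching the assertion that orbits remain arbitrarily close to it for most of their time of existence.

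The Lipschitz claim is then routine. On each open branch $\tfrac{dx^+}{d\phi}>0$, so the inverse $\phi=\phi^+(x)$ is $C^1$ with derivative bounded on compact sub-branches; substituting it, together with $\lambda=\lambda^+(\phi)$ and $f=(1+x)^{-1}$, into the smooth algebraic and exponential terms of (\ref{z-3d}) produces a $C^1$, hence locally Lipschitz, vector field on $\mathcal M$. The Lipschitz property fails only as one approaches the fold points, where $\tfrac{dx^+}{d\phi}\to0$ forces $\tfrac{d\phi^+}{dx}\to\infty$, but these are excluded from $\mathcal M$. I expect the invariance step to be the main obstacle: because the exact critical manifold is not flow-invariant for $\epsilon>0$, the honest content is the normal-hyperbolicity estimate combined with the persistence theorem, and one must keep the two fold points outside the hyperbolic analysis---indeed the loss of normal hyperbolicity there is exactly what later drives the fast jumps and canard behaviour of the model.
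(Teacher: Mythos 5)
Your proof of $\mathcal N^-\subset\mathcal M$ is essentially the paper's own argument: both of you differentiate $x^+(\phi)$ (the paper's (\ref{dxdphi}), your bracket formula, identical up to the constants $\gamma/2$ and $1/(2\gamma_0)$ absorbed into the paper's display) and read off that $H'(\phi)\leq 0$ forces $\frac{dx^+}{d\phi}>0$ for $\phi<\phi^*$; your Lipschitz argument likewise matches the paper's intent, with $C^1$ regularity away from the fold points $\phi_s^1,\phi_s^2$, which (\ref{Mminus}) excludes from $\mathcal M$, and with the boundedness arguments doing the compactness work. The invariance step is where you genuinely diverge. The paper asserts \emph{exact} invariance: it constructs (without detail) a solution of the three-dimensional system lying in $\mathcal M\cup\mathcal M^-$ and concludes by uniqueness of the initial-value problem. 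You decline to do this and instead prove normal hyperbolicity of each branch plus Fenichel persistence of a nearby locally invariant slow manifold. Your instinct here is sound: on $\{\cR_1=0\}$ the exact field (\ref{dphi})--(\ref{dx}) has $\dot\phi=\cR_1=0$ while $\dot x=\cA_4\cR_2\neq 0$ away from the set $\frac{\lambda}{2}z=x$, so trajectories of the full system leave the graph $x=x^+(\phi)$ at rate $O(\cA_4)$, and the paper's uniqueness argument cannot deliver exact invariance; the claim holds literally only for the reduced flow (\ref{z-3d}) on the manifold, i.e.\ in the singular limit, which is evidently how the paper uses it (the section is titled ``inertial manifold,'' and the following proposition claims only $O(e^{-t/\epsilon})$ attraction to $\mathcal M$). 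Your transverse linearization is moreover exactly the paper's later quantity: $\partial_\phi\cR_1$ along the nullcline is $-M(\Phi,\Lambda)$ of (\ref{M1})--(\ref{M2}), and your attracting/repelling dichotomy between $\mathcal M$ and $\mathcal M^-$ is precisely what underwrites the decay estimates and the can{\`a}rd discussion of section \ref{multiscale}. Two cautions: Fenichel's theorem needs compact normally hyperbolic pieces, and the branches are noncompact ($x^+\to\infty$ as $\phi\to\phi^{*-}$, and $z$ ranges over an unbounded interval), so the persistence argument must be run on compact subsets, which you make explicit only in the Lipschitz step; and your conclusion is invariance of an $O(\epsilon)$-close manifold rather than of $\mathcal M\cup\mathcal M^-$ itself, so strictly speaking you prove a reinterpretation of the stated proposition rather than its literal wording --- though, as your own transversality computation shows, the literal wording is not provable by the paper's route either.
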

\begin{proof}
Let us consider initial data $(\phi_0, x_0, z_0)\in\mathcal M\cup\mathcal M^-$. It is easy to see that we can construct a solution of the three dimensional system belonging to $\mathcal M\cup\mathcal M^-$. So, by uniqueness, solutions with initial data satisfying $x_0=x^+(\phi_0)$, $\phi_0\in(0, \phi^*)$ belong to $\mathcal M\cup\mathcal M^-$, for all time of existence.
Moreover, applying  the same arguments as in Proposition \ref{bounded-orbits},  we find  that $x(t) $ remains bounded away from $x=0$, and so $\phi(t)>\phi^{**}$, for all $t>0$.
The last statement of the proposition follows by taking the derivative with respect to $\phi$ of the function $x=x^+(\phi)$ in equation  (\ref{lambdapm}), to give
\begin{equation}
\frac{dx}{d\phi}=\frac{1}{\sqrt{q^2-1}}\big((1-\phi)^{-2}-\frac{pq}{q^2-1}H'(\phi)\big). \label{dxdphi}
\end{equation}
So, $H'(\phi)<0$ implies that $\frac{dx}{d\phi}>0$, for $\phi<\phi^*$.
\end{proof}

In order to study solutions with initial data  outside $\mathcal M\cup \mathcal M^-$, we observe the following properties of the term $\mathcal R_1$:
\begin{equation}
 \mathcal R_1>0  \,\,\textrm{ for } \,\, x>x^+(\phi) \quad \textrm{and}\quad
\mathcal R_1<0 \,\,\textrm{ for } \,\, x<x^+(\phi). \label{R1-sign}
\end{equation}
\begin{proposition}
Suppose that the parameters of the system belong to the class $\mathcal P$. Then the following statements on  the three-dimensional system hold: 
\begin{enumerate}
\item Solutions with initial data such that $x^0\neq x^+(\phi^0)$ have the property that 
$|\mathcal R_1(\phi(t), x(t))|$ decreases with respect to $t$, for sufficiently large $t>0$. In particular, $|\mathcal R_1|$ is strictly decreasing for $\phi^0\in(0, \phi_s^1)\cup (\phi_s^2, \phi^*)$, for all $t>0$. 
\item Let $\epsilon>0$ be sufficiently small. Then  $\mathcal M$ is asymptotically stable, that is, there exists $(\hat\phi(t), \hat x(t), \hat z(t))\in \mathcal M$ such that, for sufficiently large $t$, solutions corresponding to initial data  
$(\phi^0, x^0, z^0)\in \mathcal I$  satisfy
\begin{equation}
\phi(t, \epsilon)= \hat\phi(t)+ O(e^{-\frac{t}{\epsilon}}),\,  \, x(t, \epsilon)= \hat x(t)+ O(e^{-\frac{t}{\epsilon}}),\,\,  z(t, \epsilon)= \hat z(t)+ O(e^{-\frac{t}{\epsilon}}).  \label{estimate}
\end{equation}
\end{enumerate}
\end{proposition}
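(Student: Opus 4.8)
The plan is to treat the system as a singular perturbation with small parameter $\epsilon=\mathcal{A}_4$ and to monitor a single \emph{transverse coordinate} measuring the signed distance to the critical manifold. Set $\Psi(\phi,x):=H(\phi)-R(\lambda(\phi,x))$, with $\lambda=\lambda(\phi,x)$ given by (\ref{electroneutrality-soln}). By the definition of $\mathcal R_1$ in (\ref{R2}) we have $\mathcal R_1=(1-\phi)\phi^2\Psi$, and since $(1-\phi)\phi^2>0$ on $\mathcal I$ by Proposition \ref{bounded-orbits}, the sign and decay of $|\mathcal R_1|$ are governed entirely by $\Psi$; indeed (\ref{R1-sign}) is just $\operatorname{sign}\mathcal R_1=\operatorname{sign}\Psi$, and $\mathcal M$ is the attracting part of the zero set $\{\Psi=0\}$. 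The whole proposition thus reduces to a scalar estimate for $\Psi$ along the flow, with $\phi$ fast and $(x,z)$ slow.

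The first step is to record the transversal hyperbolicity of the attracting branches. Differentiating the identity $\Psi(\phi,x^+(\phi))\equiv 0$ gives $\partial_\phi\Psi=-\,\partial_x\Psi\,\tfrac{dx^+}{d\phi}$, while from (\ref{electroneutrality-soln}), (\ref{Rlambda}) one checks $R'(\lambda)>0$ for $\lambda>1$ and $\partial\lambda/\partial x<0$, so that $\partial_x\Psi=-R'(\lambda)\,\partial\lambda/\partial x>0$. Hence $\partial_\phi\Psi<0$ precisely on the increasing branches $\phi\in(0,\phi_s^1)\cup(\phi_s^2,\phi^*)$ defining $\mathcal M$ (and, by continuity, in a tubular neighborhood of them), while $\partial_\phi\Psi>0$ on $\mathcal M^-$. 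Differentiating $\Psi$ along (\ref{dphi})--(\ref{dx}) and using $\mathcal R_1=(1-\phi)\phi^2\Psi$,
\[
\frac{d\Psi}{dt}=(1-\phi)\phi^2\big[\partial_\phi\Psi+\mathcal A_5(1+x)^2\partial_x\Psi\big]\Psi+\mathcal A_4\,\partial_x\Psi\,\mathcal R_2 .
\]
Because $\mathcal A_5\approx 3\times 10^{-5}$ is negligible against $|\partial_\phi\Psi|=O(1)$, the bracket is strictly negative on the attracting region, so the first term equals $-c(\phi,x)\Psi$ with $c\ge c_0>0$ on compact subsets, while the second is an $O(\epsilon)$ forcing. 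Multiplying by $\Psi$ yields $\tfrac12\tfrac{d}{dt}\Psi^2\le -c_0\Psi^2+O(\epsilon)|\Psi|$, which proves Part 1: $|\mathcal R_1|$ decreases, strictly so while $|\Psi|$ exceeds its $O(\epsilon)$ floor, and throughout the motion when $\phi^0$ lies in the attracting range where $\partial_\phi\Psi<0$ from the outset.

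For Part 2, Gronwall applied to the inequality above gives $|\Psi(t)|\le e^{-c_0 t}|\Psi(0)|+O(\epsilon)$, i.e.\ exponential contraction onto an $O(\epsilon)$ neighborhood of $\{\Psi=0\}$ at an $O(1)$ rate in the fast variable; read in the slow time $\tau=\epsilon t$ of (\ref{time-scales}) this is exactly the $O(e^{-t/\epsilon})$ rate asserted in (\ref{estimate}). Invoking Tikhonov/Fenichel theory, the $O(\epsilon)$ residual is absorbed by replacing the critical manifold $\mathcal M$ with the true invariant slow manifold at distance $O(\epsilon)$: one takes $(\hat\phi,\hat x,\hat z)$ to be the trajectory of the reduced planar system (\ref{z-3d}) on that manifold with the slow data inherited from $(\phi^0,x^0,z^0)$, so $\phi$ is slaved to $\hat\phi=\phi^+(\hat x)$ up to the exponentially small error; equation (\ref{dz}) then transfers the same decay to $z$, giving (\ref{estimate}). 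Substituting $\phi=\phi^+(x)$ back into (\ref{dx})--(\ref{dz}) produces the two-dimensional reduced flow analyzed next.

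The main obstacle is uniformity near the fold points $\phi_s^1,\phi_s^2$, where $\tfrac{dx^+}{d\phi}=0$ and hence $\partial_\phi\Psi\to 0$: there the transversal hyperbolicity degenerates, $c_0\to 0$, and the Gronwall rate is lost. This is precisely the canard regime flagged in the introduction and handled in section \ref{multiscale}; away from an arbitrarily small neighborhood of the folds the bound $c\ge c_0>0$ is uniform. A secondary difficulty is general initial data in $\mathcal I$ starting near the repelling branch $\mathcal M^-$, where $\partial_\phi\Psi>0$ and $|\Psi|$ transiently grows; one must show, using (\ref{R1-sign}) and the boundedness of orbits from Proposition \ref{bounded-orbits}, that every orbit leaves any neighborhood of $\mathcal M^-$ in finite fast time and enters the attracting region, after which the contraction applies — this is the source of the qualifier ``for sufficiently large $t$'' in Part 1. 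Uniform control of the slow-drift term $\mathcal A_4\,\partial_x\Psi\,\mathcal R_2$ over the invariant region then completes the estimate.
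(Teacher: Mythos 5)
Your proposal is correct, and for Part 2 it takes a genuinely different route from the paper's. For Part 1 the two arguments share the same core: the paper disposes of it in one line (``follows from the first equation in (\ref{R2}), the properties of $H$ and $R$, and (\ref{dphi})''), i.e.\ the sign structure recorded in (\ref{R1-sign}); your explicit computation of $d\Psi/dt$ is the quantitative version of that idea, and in fact your contraction coefficient $(1-\phi)\phi^2\bigl[-\partial_\phi\Psi\bigr]$ is precisely the decay rate $M(\Phi,\Lambda)$ of (\ref{M1})--(\ref{M2}) that the paper only introduces later, in section \ref{multiscale} (your extra $\mathcal A_5$ term is a correction the paper drops). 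Your version is also more honest about the slow-drift forcing $\mathcal A_4\,\partial_x\Psi\,\mathcal R_2$, which shows that ``strictly decreasing for all $t>0$'' can only hold while $|\Psi|$ exceeds an $O(\epsilon)$ floor. For Part 2 the paper does not invoke Tikhonov/Fenichel theory at all: it constructs $(\hat\phi,\hat x,\hat z)$ as \emph{weak} solutions of the planar system in the sense of Definition \ref{weak-solution} (admitting the jumps $[\phi]=\phi_s^2-\phi_s^1$) and then cites Theorem \ref{convergence}, whose proof proceeds by the matched expansion (\ref{ilphi})--(\ref{ilz}) with initial-layer decay $O(e^{-M_0\bar t})$, made rigorous by linearizing about the two-dimensional solution, converting to integral equations, and applying Schauder's fixed point theorem. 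Your route --- a Lyapunov/Gronwall estimate on $\Psi^2$ plus off-the-shelf geometric singular perturbation theory --- buys a shorter, more transparent argument with an explicit contraction rate and correctly localizes the failure modes (loss of normal hyperbolicity at the folds $\phi_s^1,\phi_s^2$, transient growth near $\mathcal M^-$), while the paper's construction buys something you defer to citation: a definition of the comparison trajectory $(\hat\phi,\hat x,\hat z)$ \emph{through} the jumps, which matters because (\ref{estimate}) is asserted for orbits that repeatedly traverse the folds, where standard Fenichel theory does not apply. Two caveats afflict both arguments equally and deserve explicit mention: the estimate cannot hold uniformly on the $O(\epsilon\log(1/\epsilon))$ time windows surrounding each jump, and your claim that every orbit leaves a neighborhood of $\mathcal M^-$ in finite fast time fails on the one-dimensional stable manifold of the hyperbolic equilibrium (which lies in $\mathcal M^-$), a measure-zero exceptional set that the proposition itself silently excludes.
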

\begin{proof}Part 1 follows directly from the first equation in (\ref{R2}), the properties of the functions $H(\phi)$ and $R(\lambda)$ and (\ref{dphi}).  
The functions $(\hat\phi(t), \hat x(t), \hat z(t))$ are obtained as weak solutions of the two dimensional system. Their construction applies Definition \ref{weak-solution}, and estimate (\ref{estimate}) follows from Theorem \ref{convergence} on multiscale analysis of the system. 
\end{proof}

We  characterize weak solutions of the two dimensional system. These correspond to hysteresis loops  on the left graph of figure \ref{Hfig}.  One main ingredient in the construction is the theorem on continuation and finite time  blow-up of solutions of ordinary differential equations together with the existence of unique, unstable, equilibrium point.   This, together with the Poincar{\'e}-Bendixon theorem for two dimensional systems, leads to the existence of  a  limit cycle for such a system. The latter is also the $\omega$-limit set of the positive semi-orbits of the three dimensional system. 
\begin{definition} \label{weak-solution}
A  weak solution  of the two dimensional system corresponding to initial data $(\phi^0, x^0, z^0)\in\mathcal M$ has the following properties:
\begin{enumerate}
\item There exists $0<\hat t$ such that $(\phi(t), x(t), z(t))$ is a classical solution for $t\in(0, \hat t)$.
\item $\phi(\hat t)=\phi_s^1$ (or $\phi_s^2$).
\item $\phi(t)$ is discontinuous at $\hat t$ experiencing a jump
$[\phi(\hat t)]=\phi_s^2-\phi_s^1.$ (alternatively, $[\phi(\hat t)]=\phi_s^1-\phi_s^2.$)
\item The time derivatives of $\phi$ experience finite time blow up, that is,  $\lim_{t\to \hat t}\frac{d\phi}{dt}(t)=+\infty$ (alternatively, $\lim_{t\to \hat t}\frac{d\phi}{dt}(t)=-\infty$).
\item $x(t)$ and $z(t)$ are continuous at $\hat t$ and their derivatives experience a jump discontinuity.
\item The solutions can be continued for $t>\hat t$ and are  bounded. 
\end{enumerate}
\end{definition}
\begin{remark} We point out that  $\mathcal M^{-}$ is also an invariant manifold of the two-dimensional system.  In particular, it contains the unique stationary point, with eigenvalues forming a complex conjugate pair, with positive real part.  The limit cycle of the 2 dimensional system found next is also the $\omega$-limit set of solutions with initial data in $\mathcal M^-$.

We point out that values of $\phi\in(\phi^1_s, \phi_s^2)$ are excluded from the range of  solutions of the two-dimensional system. Indeed, in the next section,  we construct weak solutions of the system such that $\phi$  experiences jump discontinuities,  so as to avoid this interval. However, this interval is accessible to solutions of the full system, and, in particular, it contains the unique unstable equilibrium point. We will also show that this interval is covered in the fast time scale. 
\end{remark}
The following result follows from the uniqueness of classical solutions of the initial value problem of the system (\ref{dphi})-(\ref{Rlambda}). 

We denote the weak solution as  $\boldsymbol \psi_{\mathcal M, t}= (\phi_{\mathcal M}, x_{\mathcal M}(\phi), z_{\mathcal M}) $, where $\boldsymbol \psi_{\mathcal M, t} $ is the flow. 
Let $\boldsymbol\varphi_t$ denote the flow of the three dimensional system with initial data in $\mathcal I$.
%Let us define the projection operator $Q:\mathcal I \longrightarrow \mathcal M$,  so that 
%\begin{equation}
%Q(\boldsymbol\varphi_t(\bx_I)= \boldsymbol\psi_t(Q(\bx_I)), \label{projection}
%\end{equation}
%for $\bx_I\in\mathcal I$.
%As we show next, a weak solution correspond to the hysteresis loop shown  in the graph of $x=x^+(\phi)  $ in figure ---. 
\begin{theorem}  For each set of initial data $(\phi^0, x^0, z^0)\in\mathcal M$, there is a a unique weak solution $(\phi_{\mathcal M}(t), x_{\mathcal M}(t), z_{\mathcal M}(t)) $ of the two dimensional system, that exists for all $t>0$. Moreover,  the $\omega$-limit set  $\omega(\pi^+)$ of a semi-orbit $\pi^+$ of the three dimensional  system with initial data in $\mathcal I$ satisfies
$
\omega(\pi^+)=\omega(\pi^+_{\mathcal M}),
$  where $\omega(\pi^+_{\mathcal M})$ is the $\omega$-limit set  of the  trajectories of the two dimensional system. 
\end{theorem}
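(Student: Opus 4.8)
The plan is to build the weak solution piece by piece along the two monotone branches of $\mathcal M$, splice them at the folds according to Definition \ref{weak-solution}, and then identify the two $\omega$-limit sets using the exponential attraction of $\mathcal M$ from estimate (\ref{estimate}). On a single monotone branch $\mathcal M_1$ or $\mathcal M_2$ the reduced system (\ref{z-3d}) is an autonomous planar system in $(x,z)$ with $\phi=\phi^+(x)$ and $\lambda=\lambda^+(\phi)$ from (\ref{lambdapm}); since its vector field is Lipschitz on $\mathcal M$ (the Proposition preceding (\ref{R1-sign})), Picard--Lindel\"of gives a unique classical solution on a maximal interval $(0,\hat t)$. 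By the continuation-and-finite-time-blow-up theorem together with the boundedness from Proposition \ref{bounded-orbits}, the only obstruction to continuation is $\phi(t)$ approaching a fold value $\phi_s^1$ or $\phi_s^2$. At a fold $\frac{dx^+}{d\phi}=0$, so from $\frac{d\phi}{dt}=\left(\frac{dx^+}{d\phi}\right)^{-1}\frac{dx}{dt}$ with $\frac{dx}{dt}\neq 0$ there, I obtain the finite-time blow-up $\frac{d\phi}{dt}\to\pm\infty$ of part 4 of Definition \ref{weak-solution}, while $x$ and $z$ stay continuous.

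Next I would implement the jump prescribed by Definition \ref{weak-solution}, carrying the solution across to the opposite monotone branch with $x$ and $z$ continuous. Because $x^+(\phi)$ is strictly monotone on each of $(0,\phi_s^1)$ and $(\phi_s^2,\phi^*)$, the continued solution on the target branch is uniquely determined; this, together with the exclusion of the intermediate branch $\mathcal M^-$ and the sign of $\mathcal R_1$ in (\ref{R1-sign}) forbidding a return along the branch just left, is the source of the determinism and hence of uniqueness of the weak solution. Restarting the classical construction from the landing point and repeating, I obtain a solution defined for all $t>0$: each classical leg occupies a positive time bounded below, since the slow field $x$ must traverse a fixed $x$-interval at bounded speed, and boundedness confines the orbit to a compact subset of $\mathcal I$. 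Uniqueness of the unstable equilibrium together with the Poincar\'e--Bendixson theorem then shows the spliced orbit is a well-defined global trajectory with a limit cycle.

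For the $\omega$-limit identification I would use invariance of $\mathcal M\cup\mathcal M^-$ under $\boldsymbol\varphi_t$ (the Proposition on the invariant manifold), so that the restriction of $\boldsymbol\varphi_t$ to $\mathcal M$ coincides with the two-dimensional flow $\boldsymbol\psi_{\mathcal M,t}$. The asymptotic-stability estimate (\ref{estimate}) gives $\mathrm{dist}(\boldsymbol\varphi_t(\bx^0),\mathcal M)=O(e^{-t/\epsilon})$ away from the fold times, so $\omega(\pi^+)$ lies in the closure of $\mathcal M$ together with the fast jump fibres through the folds; on this set the recurrent dynamics are governed by the weak two-dimensional flow, giving $\omega(\pi^+)\subseteq\omega(\pi^+_{\mathcal M})$. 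The reverse inclusion holds because the attracting trajectory on $\mathcal M$ is the limit of the three-dimensional orbit, and combining with the Poincar\'e--Bendixson limit cycle yields the asserted equality $\omega(\pi^+)=\omega(\pi^+_{\mathcal M})$.

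The hard part will be the $\omega$-limit equality near the folds, where the two-dimensional weak solution is discontinuous in $\phi$ while the three-dimensional trajectory is continuous and crosses the excluded interval $(\phi_s^1,\phi_s^2)$ in the fast time scale. There estimate (\ref{estimate}) degenerates, and it must be replaced by the boundary-layer and canard analysis of Theorem \ref{convergence} to guarantee that these brief fast excursions contribute no points to $\omega(\pi^+)$ beyond the jump fibres already present in $\omega(\pi^+_{\mathcal M})$. Controlling the fast transitions uniformly across the infinitely many periods, so that the set equality holds and not merely closeness on finite time windows, is the crux of the argument.
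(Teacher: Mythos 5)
Your proposal is correct and follows essentially the same route as the paper: classical legs on the monotone branches $\mathcal M_1,\mathcal M_2$ with uniqueness from the Lipschitz reduced field, fold detection via $\frac{dx^+}{d\phi}=0$ forcing $\frac{d\phi}{dt}\to\pm\infty$, a jump with $x$ and $z$ continuous as prescribed by Definition \ref{weak-solution}, continuation using boundedness and the absence of equilibria on $\mathcal M$, and the identification $\omega(\pi^+)=\omega(\pi^+_{\mathcal M})$ through the exponential closeness estimate (\ref{estimate1}) supplied by the multiscale analysis. The only detail the paper treats that you skip is the tangency subcase $\frac{1}{2}\lambda z-x=0$ at the fold time (the paper's case 2a), where $\frac{dx}{dt}(\hat t)=0$ and the solution continues classically on the same branch without jumping; your blow-up step presumes $\frac{dx}{dt}\neq 0$ at the fold, so this degenerate alternative should be noted, though it does not alter the structure of the argument.
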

\begin{proof}
Without loss of generality let us assume that   $(\phi(0), x(0), z(0)) \in \mathcal M_2$ and with 
$ \frac{1}{2}{\lambda(x(0), \phi(0))}z(0)-x(0)<0.$ Then, there is $\hat t>0$ such that $(0, \hat t)$ gives the maximal interval of existence of the classical solution of the 2-dimensional system. Two different situations may occur, according to  the choice of initial data:
\begin{enumerate} 
\item $\hat t=\infty$ in which case the solution is bounded and such that $x(t)\in \mathcal M_2$ for all time, or
\item $0<\hat t<\infty$, in which case $x(\hat t)= x_s^2$, and $\phi(\hat t)=\phi_s^2$.
\end{enumerate}
In case (2), we further distinguish two cases, also according to initial data:

\noindent
{\bf {\small{2a}}.\,} $ \frac{1}{2}{\lambda(x(\hat t), \phi(\hat t))}z(t)-x(\hat t)=0$, in which case $\frac{dx}{dt}(\hat t)=0$. The solution can then be continued as in case 1, that is, as classical solution in $\mathcal M_2$, and, so it never reaches $\mathcal M_1$.

\noindent
{\bf {\small{2b}}.\,}
 $ \frac{1}{2}{\lambda(x(\hat t), \phi(\hat t))}z(t)-x(\hat t)<0$, and so  $\frac{dx}{dt}(\hat t)<0$. Since $\lim_{t\to\hat t}\frac{dx}{d\phi}(t)=0$, then $\lim_{t\to\hat t}\frac{d\phi}{dt}(t)= -\infty$. 
% In order to continue the solution for  $t>\hat t $, 
We set $\phi(\hat t^+)=\phi_s^1$, so  the jump condition  $[\phi]= \phi_s^1-\phi_s^2$ is satisfied,  $x(\hat t^-)= x(\hat t^+)=x_s^2$  and   $z(\hat t^-)= z(\hat t^+)$. Note that at $ t=\hat t^+$,  $(\phi, x, z)\in \mathcal M_1$. 

To continue the solution for $t>\hat t$,  we solve the initial value problem  with initial data $(\phi(\hat t^+), x(\hat t^+), z(\hat t^+))$. It is easy to see that there is a time $\tilde t>\hat t $ such that 
$\lim_{t\to \tilde t^-} \phi(t) = \phi_s^1$ and   $ \lim_{t\to \tilde t^-} x(t)=x_s^1$.  In fact, it follows from the fact that $\phi$ and $x$ remain bounded away from 0, so that the values $\phi_s^1$ and $x_s^1$, respectively, can be reached, allowing the solution to be continued in $\mathcal M_2$.  The fact that $\phi<\phi^*$ and  that there are no equilibrium points in $\mathcal M_2$ guarantees the existence of a point of return on the trajectory $(\phi(t), x(t), z(t))$, and so the process can be continued.

Since the orbits $\pi^+$ and $\pi^+_{\mathcal M}$ are both bounded, the scaling with respect to $\mathcal A_4$ holds and so does  the estimate
\begin{equation} |\phi(t)-\phi_{\mathcal M}(t)|=O(e^{-\frac{t}{\epsilon}}), \label{estimate1}\end{equation}
together with the analogous estimates for $x(t), z(t)$ which completes the proof of the theorem.
\end{proof} 

Since $\mathcal M$ is an invariant set  that does not contain equilibrium points, the existence of a limit cycle for the two dimensional system follows from  the Poincar{\'e}-Bendixon theorem stated next. Its corollary gives the stability of the limit cycle with respect to, both, the two-dimensional and the three-dimensional dynamics. 
%Let $\omega(\pi^+) $ denote the $\omega$-limit set of the orbit $\pi^+$.
\begin{theorem}[Poincar{\'e}-Bendixon\cite{Hale1980}] If $\pi^+$ is a bounded semiorbit and $\omega(\pi^+)$ does not contain any critical point, then either
 $ \pi^+=\omega(\pi^+),$ or
 $ \omega(\pi^+)={\bar\pi^+}/\pi^+$.
In either case, the $\omega-limit$ set is a periodic orbit. Moreover,
 for a periodic orbit $\pi^0$ to be asymptotically stable it is necessary and sufficient that there is a neighborhood $G$ of $\pi^0$ such that $\omega(\pi(p))=\pi^0$ for any $p\in G$. 
 \end{theorem}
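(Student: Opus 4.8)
The statement is the classical Poincar\'e--Bendixson theorem together with its standard asymptotic-stability characterization, so the plan is to reproduce the textbook argument (as in \cite{Hale1980}) in the present notation rather than to devise anything new. The essential ingredients are purely two-dimensional: the structural properties of $\omega$-limit sets of bounded semiorbits, the existence of local transversal segments to a regular flow, and the Jordan curve theorem, which is exactly what constrains the topology of planar orbits and makes the conclusion fail in higher dimensions.

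First I would record the standard properties of the $\omega$-limit set of a bounded semiorbit $\pi^+$: that $\omega(\pi^+)$ is nonempty, compact, connected, and invariant under the flow. Nonemptiness and compactness follow from boundedness together with the closedness of $\omega(\pi^+)$; invariance follows from continuous dependence on initial data; connectedness follows from the connectedness of the tail segments of $\pi^+$ whose closures nest down to $\omega(\pi^+)$. Since by hypothesis $\omega(\pi^+)$ contains no critical point, every point of $\omega(\pi^+)$ is a regular point, so near each such point the flow admits a transversal segment $\Sigma$ that the orbit crosses monotonically in a fixed direction.

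The technical core is the monotonicity lemma: if a single orbit meets a transversal $\Sigma$ at successive times $t_1<t_2<\cdots$, then the intersection points are ordered monotonically along $\Sigma$. The proof invokes the Jordan curve theorem, since the orbit arc between two consecutive crossings, closed off by the subsegment of $\Sigma$ joining them, bounds a region that the flow can only enter or only exit, which forces the monotone ordering. From this one deduces that an $\omega$-limit set meets any transversal in at most one point. I would then pick a point $p\in\omega(\pi^+)$; since $\omega(\pi^+)$ is compact and invariant, the orbit through $p$ has its own $\omega$-limit set contained in $\omega(\pi^+)$, and choosing a regular point $r$ of that inner set with a transversal $\Sigma_r$, the orbit through $p$ accumulates at $r$ and hence crosses $\Sigma_r$; but $\omega(\pi^+)$ meets $\Sigma_r$ in at most one point, so the orbit through $p$ crosses $\Sigma_r$ exactly once, which forces it to be a closed (periodic) orbit. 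A final connectedness argument then shows $\omega(\pi^+)$ coincides with this periodic orbit, yielding the stated dichotomy: either $\pi^+=\omega(\pi^+)$ (the semiorbit is already periodic) or $\omega(\pi^+)=\bar\pi^+\setminus\pi^+$ (the semiorbit spirals onto a disjoint limit cycle).

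For the stability clause, asymptotic stability of a periodic orbit $\pi^0$ means Lyapunov stability plus attraction of nearby points; the attraction requirement is precisely the condition that some neighborhood $G$ of $\pi^0$ satisfy $\omega(\pi(p))=\pi^0$ for all $p\in G$, while Lyapunov stability near a planar limit cycle is obtained from the monotone transversal dynamics through a Poincar\'e return map, whose contraction controls convergence to $\pi^0$. The hard part will be the monotonicity lemma and its dependence on the Jordan curve theorem: this is the genuinely two-dimensional step, and every global conclusion rests on it, whereas the remaining steps amount to bookkeeping with the invariance and connectedness of $\omega$-limit sets.
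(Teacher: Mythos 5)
The paper does not prove this statement at all: it is quoted as a known result from Hale \cite{Hale1980} and used as an imported tool, so the only ``paper proof'' to compare against is the citation itself. Your sketch correctly reproduces the standard textbook argument from that source (structure of $\omega$-limit sets of bounded semiorbits, transversal segments at regular points, the monotone-crossing lemma via the Jordan curve theorem, the at-most-one-intersection property forcing the orbit through a limit point to be closed, and the planar equivalence of attraction and asymptotic stability via the monotone Poincar\'e return map --- where ``monotone convergence of iterates'' rather than ``contraction'' is the accurate mechanism), so it matches the cited approach and is correct in outline.
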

\begin{figure}
\begin{minipage}[t]{\textwidth}
\centerline{
\includegraphics[scale=.40]{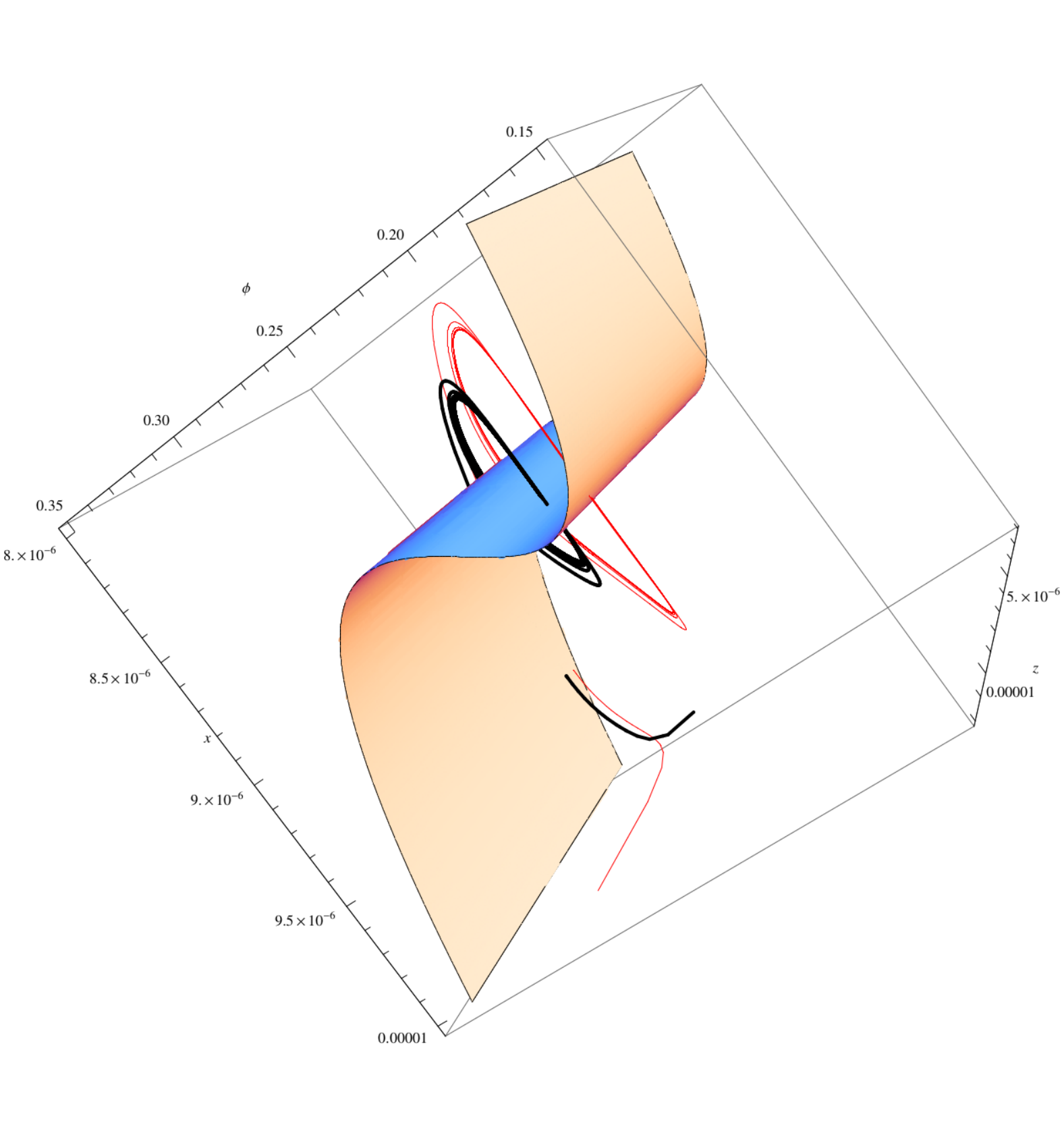}
\hspace{.60cm}
\includegraphics[scale=.40]{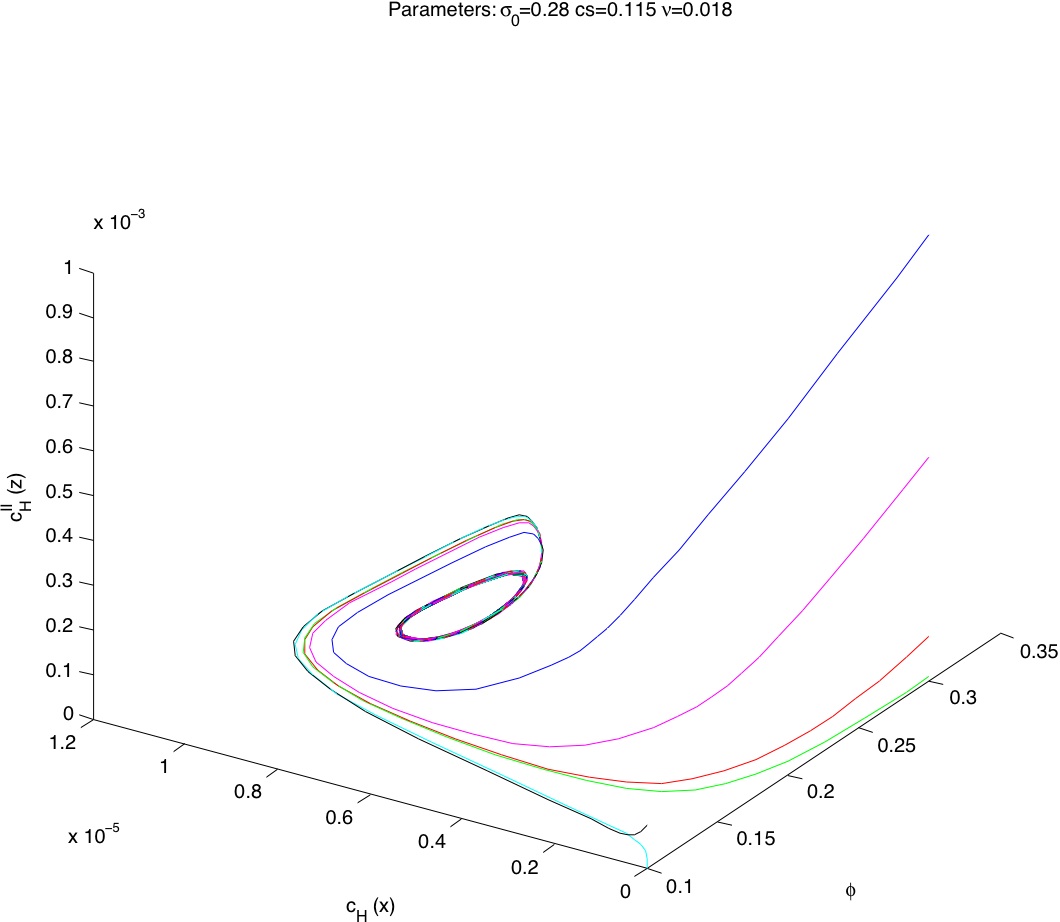}
%\hspace{.4cm}
%\includegraphics[scale=.49]{Ronfig1c.pdf}
}
\end{minipage}
\caption{The left figure shows orbits of the three dimensional system  with the manifold $\mathcal M$.  The figure on the right shows plots of orbits of the three dimensional system approaching a plane closed curve.
It is the $\omega$-limit set of  both systems and the limit cycle of the two dimensional system in the manifold $\mathcal M$. }\label{3d-curves}
\end{figure}
\begin{remark} 
The orbits of the two-dimensional system analyzed in this section, and in particular the limit cycle, correspond to the projection to the $x-z$ plane of the three-dimensional orbits shown in
figure \ref{3d-curves}(b).   %(\ref{3d_orbits+manifold}). 
Moreover, these two-dimensional orbits correspond to the leading terms in expansions (\ref{ilx})-(\ref{ilz}). 
\end{remark}
 Next theorem shows that the limit cycle of the two dimensional system is also the limit cycle of the three dimensional one. 
 \begin{theorem}   Suppose that the parameters of the system belong to the class $\mathcal P$. Then the  two-dimensional system  has an asymptotically stable  limit cycle $\pi^+$ in $\mathcal M$.  
 Moreover $\pi^+$ is also the $\omega$-limit set of   positive semi-orbits of the three-dimensional system with initial data  in $\mathcal I$.   
  \end{theorem}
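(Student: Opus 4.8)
The plan is to apply the Poincar{\'e}-Bendixon theorem to the planar weak-solution flow on $\mathcal M$ and then transfer the conclusion to the full three-dimensional system. First I would record the two ingredients already in hand: by the preceding theorem every semi-orbit $\pi^+_{\mathcal M}$ of the two-dimensional system with initial data in $\mathcal M$ is a unique weak solution that exists for all $t>0$ and is bounded (Proposition \ref{bounded-orbits}), and moreover $\omega(\pi^+)=\omega(\pi^+_{\mathcal M})$ for the associated three-dimensional semi-orbit. Second, I would observe that the phase space $\mathcal M=\mathcal M_1\cup\mathcal M_2$ in (\ref{Mminus}) contains no critical point: the unique equilibrium lies in the excluded middle branch $\mathcal M^-$, namely the negative-slope portion $\phi\in(\phi_s^1,\phi_s^2)$, and by the Hopf analysis it is a spiral source. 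Since $\omega(\pi^+_{\mathcal M})$ lies in the closure of $\mathcal M$ and contains no critical point, the Poincar{\'e}-Bendixon theorem stated above yields that $\omega(\pi^+_{\mathcal M})$ is a periodic orbit $\pi^+$, establishing the existence of the limit cycle.

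The more delicate step is the asymptotic stability and uniqueness of $\pi^+$, which I would obtain through the return-map criterion of the same theorem by exploiting the relaxation structure of the weak solution. On each monotone branch $x=x^+(\phi)$ the state is coordinatized by $(x,z)$, with $x$ the slow variable, since $\dot x=O(\mathcal A_4)$ in (\ref{z-3d}), and $z$ the fast one; a trajectory drifts along a branch until $x$ reaches a fold value $x_s^i$, whereupon $\phi$ jumps between $\phi_s^1$ and $\phi_s^2$ while $x$ and $z$ remain continuous (Definition \ref{weak-solution}). Fixing a transversal $\Sigma$ and forming the first-return map $P:\Sigma\to\Sigma$, I would argue that $P$ is a strong contraction: the jump at each fold collapses the spread in the slow variable $x$, because all incoming trajectories arrive at the same value $x=x_s^i$, while the fast $z$-relaxation on the following branch contracts the spread in $z$ toward its quasi-steady value. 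Monotonicity of $P$, which follows from the non-crossing of weak solutions guaranteed by uniqueness, together with this contraction forces a unique attracting fixed point; hence $\pi^+$ is the unique, asymptotically stable limit cycle, and the stability criterion of the theorem supplies a neighborhood $G$ with $\omega(\pi(p))=\pi^+$ for every $p\in G$.

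Finally I would transfer the result to three dimensions. The identity $\omega(\pi^+)=\omega(\pi^+_{\mathcal M})$ from the preceding theorem, combined with the asymptotic stability of the manifold $\mathcal M$ and the exponential estimate (\ref{estimate}), shows that every three-dimensional semi-orbit with data in $\mathcal I$ first collapses onto $\mathcal M$ at rate $O(e^{-t/\epsilon})$ and thereafter tracks the planar flow, so that its $\omega$-limit set is exactly $\pi^+$.

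I expect the main obstacle to be the rigorous application of Poincar{\'e}-Bendixon and of the contraction estimate to the weak-solution flow, which is non-smooth: the vector field switches branches at the folds and the velocities jump there. One must verify that the topological hypotheses underpinning Poincar{\'e}-Bendixon, namely continuity of the trajectories in the $(x,z)$ plane, absence of self-intersections coming from weak-solution uniqueness, and boundedness, all persist across the switching, and one must render the heuristic fold-contraction quantitative for small but finite $\mathcal A_4$, presumably through geometric singular perturbation theory rather than the purely singular limit.
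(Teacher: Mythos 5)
Your proposal is correct and, in its skeleton, coincides with the paper's own argument: existence of $\pi^+$ comes from Poincar\'e--Bendixson applied to the bounded, globally defined weak-solution flow on $\mathcal M$, which contains no equilibrium (the unique steady state lies in the excluded branch $\mathcal M^-$), and the three-dimensional claim is transferred via the exponential collapse onto $\mathcal M$ --- the paper uses precisely the estimate $|x(t)-x_{\mathcal M}(t)|=O(e^{-Mt})$ with $M>0$ as in (\ref{M1}), combined with the instability of the unique hyperbolic equilibrium, exactly as you propose. Where you genuinely diverge is the asymptotic-stability step. The paper dispatches it in one sentence (stability ``follows from the boundedness of solutions and the absence of a stationary state in $\mathcal M$''), i.e.\ it appeals to the neighborhood criterion appended to its statement of the Poincar\'e--Bendixson theorem without exhibiting the neighborhood; your first-return-map argument replaces this with substance: the jump at each fold pins the slow variable at $x_s^i$, annihilating the spread in $x$, while the linear-in-$z$ structure of the second equation in (\ref{z-3d}), whose $z$-coefficient $-(\mathcal A_2(1-\phi)\lambda+\mathcal A_3)$ is negative and large in magnitude compared with the $O(\mathcal A_4)$ drift of $x$, contracts the spread in $z$; monotonicity plus contraction then force a unique attracting fixed point of the return map. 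This route buys more than the paper delivers --- uniqueness of the limit cycle (never claimed in the paper) and a quantitative attraction rate --- at the cost of the verification you yourself flag: that Poincar\'e--Bendixson and the contraction estimates survive the non-smooth switching at the folds. The paper passes over this point silently (its weak-solution uniqueness theorem does supply the non-crossing property your return map needs, but continuity of the flow across the jumps, required by the Poincar\'e--Bendixson hypotheses, is not checked there either), so your closing caveat identifies a genuine gap in the paper's exposition rather than a defect of your plan; carrying it out at small finite $\mathcal A_4$ via geometric singular perturbation theory, as you suggest, would in fact strengthen the published argument.
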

 \begin{proof} Let us consider solutions with initial data in $\mathcal M$. For these initial data, we previously showed   that the two-dimensional system admits bounded weak solutions that exist for all time. Moreover, the orbits of these solutions do not contain any equilibrium point. So, the existence of a limit cycle $ \pi^+$  in $\mathcal M$ follows directly from the Poincar{\`e}-Bendixon theorem. The asymptotic stability of $\pi^+$ follows from the boundedness of solutions and the absence of a stationary state in $\mathcal M$. 
 To prove the last statement, let us consider solutions with initial data $(\phi^0, x^0, z^0)\in \mathcal I$ and such that $x^0\neq x^+(\phi^0)$. By an estimate as (\ref{estimate1}), we can assert that sufficiently large $t$, 
$
  |x(t)-x_M(t)|= O(e^{-Mt}), $
 with $M>0$ as in (\ref{M1}), and $x_M(t):=x_M(\phi^0, x^+(\phi^0), z^0)(t)$. This indicates that the solution of the three dimensional system approaches the two-dimensional manifold, 
 for sufficiently large $t$. Since the only equilibrium point in the three-dimensional space is unstable, then the three-dimensional solution also approaches the limit cycle (figure \ref{3d-curves}(b)). 
  \end{proof} 
  
  %  \subsection{ Asymptotic Limit Cycle of the Three-Dimensional System}
%This is an alternate approach to establishing the existence of limit cycle  for the three-dimensional system: as a perturbation of the one in the plane case. 
%We now show that the oscillations of the problem correspond to the loop  $\{A \to B \to C \to D \to A\}$ of the diagram. In fact, for $\eps>0$ sufficiently small, the points of this diagram belong to a limit cycle.
%The main idea is the construction of  $\mathcal M^{\perp}$connections between  $\mathcal M^+_0$ and $\mathcal M^-_0$ (or between different branches of $\mathcal M^+_0$).

%\begin{proposition}[\bf Existence of limit cycle]Suppose that $\gamma>\gamma_2$ in Lemma 1.  Let $\epsilon>0$ be sufficiently small. Then there exists a limit cycle corresponding to the loop  $\{A \to B \to C \to D \to A\}$ of the diagram. 
%Moreover, for prescribed initial $\phi(0)=\phi_A$, and $x(0)>0$ and $z(0)>0$, the corresponding solution of the three-dimensional system is oscillatory. 
%\end{proposition}
%The proof uses the implicit function theorem on the 2-dimensional limit cycle. 
\begin{remark}%{\bf Regularization of  weak solutions.}
  We observe that the solutions of the plane system, and in particular the limit cycle, are discontinuous with respect to $\phi$. In particular, $\phi$ is discontinuous at $t=\hat t$, with $[\phi(\hat t)]\neq 0$.
  The regularization of the solutions is done by {\it connecting} the discontinuity values of $\phi$ by a function $\bar\phi(\frac{t}{\mathcal A_4})$  that evolves according to the {\it  fast } dynamics presented  in the next section. 
 % That is, if we name $(\Phi, X, Z)\in\mathcal M$ the weak solutions obtained here, we modify them according to (\ref{ilphi})-(\ref{ilz}).  
   For sufficiently small $\mathcal A_4$, the solutions of the three-dimensional system remain in $\mathcal M$ for most of the time,  emerging  to the third dimension in order to connect
   the separate branches  $\mathcal M_1$ and $\mathcal M_2$, in the fast time scale. This is developed in section \ref{multiscale}. 
  \end{remark}
  %    \begin{figure}\label{3d_orbits+manifold}
% \centerline{
% {\includegraphics[scale=0.5]{Aug27_14Osc-1}}}
% \caption{Orbits of the three dimensional system plotted with the nullcline surface $H(\phi)=R(\lambda).$}
% \end{figure}
\section {Competitive systems and three-dimensional limit cycle}{\label{3D-limit-cycle}}
  It is well known that for three dimensional systems,  the compactness of a steady state free $\omega-$limit set  of an orbit is not sufficient to prove the existence of a periodic orbit. 
  That is, the Poincar{\'e}-Bendixon theorem in its original form does not apply. However, a three-dimensional generalization is available for {\it competitive }
  (and {\it cooperative}) systems (\cite{Hirsh1},  \cite{Hirsh2}, \cite{Hirsh4}, \cite{Smith-survey}).
 This  concept is framed in terms of monotonicity properties   of the vector field of the system with respect to a convex subspace of ${\mathbb R}^3$. 
  The theorem is due to Hirsch and it is stated as follows:
 \begin{theorem} \label{PB-theorem-3d}
  A compact limit set of a competitive or cooperative  system in $ {\mathbf R^3}$  that contains no equilibrium points is a periodic orbit.
 \end{theorem}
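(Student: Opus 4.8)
The statement is Hirsch's classical theorem on three-dimensional monotone flows, so my plan is to reconstruct its proof via the standard reduction to the planar Poincar\'e--Bendixon theorem quoted above. First I would fix a single sign convention: a competitive system $\dot{\bx}=\boldsymbol f(\bx)$, whose off-diagonal Jacobian entries are $\le 0$, becomes cooperative under the time reversal $t\mapsto -t$ (the off-diagonal entries of $-D\boldsymbol f$ are $\ge 0$), so that the $\omega$-limit set of the competitive flow appears as the $\alpha$-limit set of the cooperative one. Hence it suffices to treat the cooperative case, where the flow $\Phi_t$ is monotone with respect to the partial order $\le_K$ induced by an orthant cone $K\subset\RR^3$, i.e.\ $\bx\le_K\yy$ implies $\Phi_t(\bx)\le_K\Phi_t(\yy)$ for $t\ge 0$. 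This monotonicity is the only structural hypothesis used below.

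The core is the \emph{nonordering of limit sets}: I would prove that a compact limit set $L$ (either $\alpha$ or $\omega$) of a monotone flow contains no two distinct points related by the cone order, i.e.\ no $\bp,\bq\in L$ with $\bq-\bp\in K\setminus\{0\}$. The argument is by contradiction: assuming $\bp<_K\bq$ both lie in $L$, one propagates this order along the flow via monotonicity and exploits the recurrence of orbit segments near $\bp$ and near $\bq$ to manufacture a strictly ordered sequence of points of $L$; a bounded monotone sequence in $\RR^3$ converges, and the resulting limit is incompatible with $L$ being a limit set. Thus $L$ is \emph{unordered}.

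Next comes the geometric reduction to the plane. An unordered compact subset of $\RR^3$ with respect to an orthant is a Lipschitz graph over the two-dimensional plane $H$ orthogonal to the cone axis $(1,1,1)$; equivalently the orthogonal projection $P\colon\RR^3\to H$ restricts to a homeomorphism (indeed bi-Lipschitz) of $L$ onto the planar compact set $\Sigma:=P(L)$. Conjugating the dynamics gives a continuous, fixed-point-free flow $\psi_t:=P\circ\Phi_t\circ P^{-1}$ on $\Sigma$ whose trajectories are unique because $\boldsymbol f$ is locally Lipschitz. Since $L$ contains no equilibria, neither does $\Sigma$, which is now a compact invariant limit set of a planar flow with uniqueness of solutions. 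The planar Poincar\'e--Bendixon theorem then forces $\Sigma$, and hence $L=P^{-1}(\Sigma)$, to be a single periodic orbit.

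The main obstacle is the nonordering lemma together with a careful execution of the planar reduction: one must check that $P$ is genuinely injective on $L$ (this is exactly unorderedness), that the conjugated dynamics form a bona fide continuous flow on the non-open set $\Sigma$, and that Poincar\'e--Bendixon can be invoked for a limit set not a priori sitting inside an open planar region. The cleanest route around the last difficulty is to pass to a minimal subset of $L$, on which $\Sigma$ becomes the $\omega$-limit set of each of its own points under $\psi$, so the planar theorem applies verbatim and the resulting periodic orbit is then shown to exhaust $L$. For the detailed monotone-systems machinery I would cite Hirsch and Smith rather than reprove it.
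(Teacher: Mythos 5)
The paper offers no proof of this statement to compare against: it is quoted verbatim as Hirsch's theorem, with the argument deferred to the cited works of Hirsch and to Smith's survey, and the only ingredient the paper itself later invokes is the nonordering result (``a compact limit set of a competitive or cooperative system cannot contain two points related by $\ll$'', Smith's survey, Theorem 3.2). Your reconstruction therefore has to be measured against the standard Hirsch--Smith proof, and it does follow that route faithfully: time reversal to pass between competitive and cooperative, nonordering of limit sets, injective projection along $(1,1,1)$ onto the orthogonal plane, conjugated planar dynamics, and Poincar\'e--Bendixson. As a sketch that explicitly defers the monotone-systems machinery to Hirsch and Smith, it is essentially sound.

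Two points deserve correction, since they are exactly where a referee would push. First, your nonordering lemma is stated too strongly: for a merely cooperative (not irreducible, hence not strongly monotone) system, the theorem asserts that a limit set contains no two points related by $\ll$, i.e.\ with difference in $\operatorname{int}K$ --- not that it contains no two points with difference in $K\setminus\{0\}$. Fortunately the weaker statement is all you need: two points with the same projection differ by a nonzero multiple of $(1,1,1)\in\operatorname{int}K$ and so are related by $\ll$, giving injectivity of $P$ on $L$; the Lipschitz-graph estimate also survives, because a difference vector avoiding $\pm\operatorname{int}K$ has both a nonnegative and a nonpositive coordinate, which bounds its component along $(1,1,1)$ by a multiple of its planar projection. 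Second, the endgame is thinner than your wording suggests: the minimal-set detour produces a periodic orbit $\gamma\subset L$, but the theorem claims $L=\gamma$, and ``the resulting periodic orbit is then shown to exhaust $L$'' is precisely the nontrivial residue. In the literature this step requires its own monotonicity argument (a compact limit set of a competitive or cooperative system in $\RR^3$ that contains a periodic orbit coincides with it), typically carried out after extending the induced planar vector field --- Lipschitz on $\Sigma$ by the graph property, hence extendable to all of $\RR^2$ --- so that the full transversal and Jordan-curve machinery of planar Poincar\'e--Bendixson theory is available; your remark that uniqueness of trajectories on the non-open set $\Sigma$ follows ``because $\boldsymbol f$ is locally Lipschitz'' elides this extension. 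Neither issue invalidates the outline, but a self-contained proof would have to supply both.
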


 A dynamical system  is called {\it competitive} in the positive cone  ${\mathbb{R}^+}^{n} $
if
$ \frac{\partial f_i}{\partial x_j}(\bx)\leq 0, \  i\neq j, \, \bx\in \mathcal I.$ %\label{off-diagonal-competitive}
This property holds in a general cone $\mathcal K$ (intersection of half-spaces), by requiring the following two properties: 
 \begin{definition}\label{sign-symmetry} A $n\times n$ Jacobian matrix $(\nabla\boldsymbol f)$ is {\it{sign-stable}} in $\mathcal I$ if for each $i\neq j$, either $\partderiv{f_i}{x_j}\geq 0 $ or $\partderiv{f_i}{x_j}\leq 0 $, for all $\bx\in\mathcal I$.
It is {\it{sign-symmetric}}
 if $\partderiv{f_i}{x_j}(\bx)\partderiv{f_j}{x_i}(\by)\geq 0 $ for all $i\neq j$ and for all $\bx, \by \in\mathcal I$. 
 \end{definition}
% We now prove the following.
 \begin{proposition} The three dimensional governing system (\ref{dphi})-(\ref{electroneutrality-soln}) is competitive with respect to the cone
$
 \mathcal K=\{(\phi, x, z):  \phi<0, \, \, x>0, \,\, z<0\}.$
  \end{proposition}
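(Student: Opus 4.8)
The plan is to recognize $\mathcal{K}$ as the orthant with signature matrix $S=\mathrm{diag}(-1,1,-1)$, so that by Definition \ref{sign-symmetry} and the orthant characterization of competitive flows it suffices to show that $\nabla\boldsymbol f$ is sign-stable and sign-symmetric with the off-diagonal pattern prescribed by $S$; equivalently, writing $(x_1,x_2,x_3)=(\phi,x,z)$ and $\sigma=(-1,1,-1)$, to verify the six inequalities $\sigma_i\sigma_j\,\partial f_i/\partial x_j\le 0$ for $i\neq j$ on the invariant region furnished by Proposition \ref{bounded-orbits}. Explicitly these read $\partial_x f_1\ge 0$, $\partial_z f_1\le 0$, $\partial_\phi f_2\ge 0$, $\partial_z f_2\ge 0$, $\partial_\phi f_3\le 0$ and $\partial_x f_3\ge 0$. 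First I would record the two monotonicity facts that drive everything: from the electroneutrality relation (\ref{electroneutrality-soln}), since $f=(1+x)^{-1}$ and $p(\phi)$ in (\ref{Rlambda}) is increasing, $\lambda$ is strictly increasing in $\phi$ and strictly decreasing in $x$; and from (\ref{Rlambda}), $R'(\lambda)>0$ for $\lambda>1$.

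With these, four of the six conditions are immediate. Indeed $\partial_z f_1=0$ since $\mathcal{R}_1$ is independent of $z$; $\partial_z f_2=\mathcal{A}_4(1-\phi)(1+x)^2\lambda/2>0$; $\partial_x f_1=-(1-\phi)\phi^2R'(\lambda)\,\partial_x\lambda\ge 0$ because $R'\ge 0$ and $\partial_x\lambda<0$; and $\partial_x f_3=-\mathcal{A}_2(1-\phi)(z\,\partial_x\lambda-1)>0$ for the same reason. This disposes of every entry except the two in the $\phi$-column.

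The main obstacle is the $\phi$-column, and the trouble appears already in $\partial_\phi f_2$: the coupling term $\mathcal{A}_5(1+x)^2\mathcal{R}_1$ in (\ref{dx}) is sign-indefinite in $\phi$, since $\mathcal{R}_1$ vanishes on and changes sign across the nullcline $\mathcal{N}$ while $H(\phi)$ is non-monotone, so $\partial_\phi\mathcal{R}_1$ has no fixed sign on $\mathcal{I}$. To remove this term I would pass to the variable $V=x/(1+x)-\mathcal{A}_5\phi$; using $\dot\phi=\mathcal{R}_1$ a short computation gives the clean equation $\dot V=\mathcal{A}_4(1-\phi)(\lambda z/2-x)$, with $x=v/(1-v)$, $v=V+\mathcal{A}_5\phi$ recovered on the physical range. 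The map $(\phi,x,z)\mapsto(\phi,V,z)$ has $\partial V/\partial x=(1+x)^{-2}>0$, $\partial V/\partial\phi=-\mathcal{A}_5<0$, and is the identity in $\phi$ and $z$, hence it is an order-isomorphism from the $\mathcal{K}$-order onto the order of the same-signature orthant in $(\phi,V,z)$. Competitiveness therefore transfers between the two coordinate systems, and it remains only to check $\partial_\phi\dot V\ge 0$ and $\partial_\phi f_3\le 0$.

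These last two inequalities are where the genuine estimates live, and I expect them to be the crux of the argument. Both reduce to dominating the flux terms $\tfrac{\lambda}{2}z-x$ and $\lambda z-x$, which are not single-signed on the limit set, by the monotonicity contribution $\partial_\phi\lambda>0$ together with (for $f_3$) the glucose-transport term $-\mathcal{A}_1\beta e^{-\beta\phi}$. Concretely, $\partial_\phi\dot V\ge 0$ amounts to $(1-\phi)\tfrac{z}{2}\,\partial_\phi\lambda\ge\tfrac{\lambda}{2}z-x$ up to an $O(\mathcal{A}_5)$ correction, and $\partial_\phi f_3\le 0$ to $\mathcal{A}_2(\lambda z-x)\le\mathcal{A}_1\beta e^{-\beta\phi}+\mathcal{A}_2(1-\phi)z\,\partial_\phi\lambda$. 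I would close both using the uniform bounds on $\phi,x,z$ from Proposition \ref{bounded-orbits}, the growth of $\partial_\phi\lambda$ with $\gamma=\sigma_0/(\cnacl\phi_0)$, and the parameter hierarchy $\mathcal{A}_2\ll\mathcal{A}_1$ recorded in (\ref{parameter-values}). This domination of the sign-varying flux terms on the absorbing region is precisely the step that uses that the parameters lie in $\mathcal{P}$, rather than being a formal sign check, and is the hard part of the proposition.
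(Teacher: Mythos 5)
Your reduction of $\mathcal K$ to the orthant with signature $(-1,+1,-1)$ and the four easy sign checks ($\partial_z f_1=0$, $\partial_z f_2>0$, $\partial_x f_1\ge 0$, $\partial_x f_3>0$ via $\partial\lambda/\partial x<0$ and $R'(\lambda)>0$) coincide exactly with the paper's computation of $a_{13}, a_{23}, a_{12}, a_{32}$. You also correctly notice something the paper glosses over: the paper's Jacobian is that of the rescaled rows $(\mathcal R_1,\mathcal R_2,\mathcal R_3)$ and silently drops the coupling term $\mathcal A_5(1+x)^2\mathcal R_1$ in (\ref{dx}), whose $\phi$-derivative is sign-indefinite; and your substitution $V=x/(1+x)-\mathcal A_5\phi$, with $\dot V=\mathcal A_4(1-\phi)(\tfrac{\lambda}{2}z-x)$, is a correct and clever way to cancel it. However, your transfer claim is wrong: the map $(\phi,x,z)\mapsto(\phi,V,z)$ is order-preserving into the sheared orthant but its inverse is not (decrease $\phi$ at fixed $V$ and $z$: this lies in the closed target cone, yet the preimage has $x$ \emph{decreasing}, violating the $\mathcal K$-order), so it is not an order-isomorphism, and since it is nonlinear in $x$ it does not carry cone orders to cone orders. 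Competitiveness of the sheared system would let you run Hirsch's theorem and recover the limit cycle downstream, but it does not prove the proposition as literally stated, namely competitiveness of the original system with respect to $\mathcal K$.

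The more serious gap is that the two decisive inequalities — your $\partial_\phi\dot V\ge 0$ and $\partial_\phi f_3\le 0$, i.e.\ the paper's $a_{21}\ge 0$ and $a_{31}\le 0$ — are never proven: you defer them to uniform bounds from Proposition \ref{bounded-orbits} (which are qualitative, with no explicit constants) plus the hierarchy $\mathcal A_2\ll\mathcal A_1$ from (\ref{parameter-values}), and you explicitly label this "the hard part." The paper closes both pointwise through Lemma \ref{inequalities1-2}: differentiating (\ref{electroneutrality-soln}) gives $\phi(1-\phi)\,\partial\lambda/\partial\phi=\lambda\, pf/\sqrt{1+p^2f^2}$, and the asserted lower bound by $\lambda$ along trajectories yields $(1-\phi)z\,\partial\lambda/\partial\phi\ge \lambda z$, whence $a_{21}\ge (1+x)^2x>0$ and $a_{31}<-\beta e^{-\beta\phi}-(\mathcal A_2/\mathcal A_1)x<0$ with no appeal to parameter smallness or to compactness of the absorbing set. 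This quantitative estimate on $\partial\lambda/\partial\phi$ is the key idea your write-up is missing; without it (or a genuinely completed substitute), the proposal establishes only the four trivial entries and a strategy for the remaining two. (In fairness, the paper's bound is delicate — algebraically $pf/\sqrt{1+p^2f^2}<1$, which is why the lemma is restricted to trajectories with parameters in $\mathcal P$ — so a carefully executed version of your absorbing-set estimate could be a legitimate alternative route; but as written it is a plan, not a proof, of precisely the step on which the proposition turns.)
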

  
  The proof involves identifying $\mathcal K$ and verifying definition (\ref{sign-symmetry}), which follows from these  two lemmas.
\begin{lemma}  \label{inequalities1-2} The inequalities
$
\phi(1-\phi)\frac{\partial\lambda}{\partial \phi}=\lambda \frac{fp}{\sqrt{1+p^2f^2}}\geq \lambda  \,\, \textrm{and}\,\, \frac{\partial\lambda}{\partial x} <0
$
hold on 
 trajectories $\boldsymbol \varphi_t$ corresponding to initial data in $\mathcal I$. 
\end{lemma}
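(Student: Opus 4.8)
The plan is to regard $\lambda$ as a function of the single combination $u:=p(\phi)\,f$, since equation~(\ref{electroneutrality-soln}) reads $\lambda = u + \sqrt{u^2+1}$ with $u = p(\phi)(1+x)^{-1}$ and $f=(1+x)^{-1}$. Two elementary identities will drive everything. First, $\frac{d\lambda}{du} = 1 + \frac{u}{\sqrt{u^2+1}} = \frac{\lambda}{\sqrt{u^2+1}}$. Second, because $\lambda^{-1} = \sqrt{u^2+1}-u$, one has $\lambda-\lambda^{-1}=2u$ and $\lambda+\lambda^{-1}=2\sqrt{u^2+1}$, so that $\frac{u}{\sqrt{u^2+1}} = \frac{\lambda^2-1}{\lambda^2+1}$. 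These let me convert every expression in $p,f$ into an expression in $\lambda$ alone.

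For the equality I would differentiate in $\phi$ holding $x$ (hence $f$) fixed. Since $p(\phi)=\frac{\gamma\phi}{2(1-\phi)}$ gives $p'(\phi)=\frac{\gamma}{2(1-\phi)^2}$, a direct check yields the crucial algebraic fact $\phi(1-\phi)p'(\phi)=p(\phi)$. The chain rule then gives $\partial_\phi\lambda = \frac{\lambda}{\sqrt{u^2+1}}\,f\,p'(\phi)$, and multiplying by $\phi(1-\phi)$ collapses $f\,\phi(1-\phi)p'$ into $fp=u$, producing exactly $\phi(1-\phi)\partial_\phi\lambda = \frac{\lambda\,fp}{\sqrt{1+p^2f^2}} = \lambda\,\frac{\lambda^2-1}{\lambda^2+1}$.

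The sign $\partial_\phi\lambda\,\partial x$ inequality is the easy half: differentiating $\lambda=u+\sqrt{u^2+1}$ in $x$ gives $\partial_x\lambda = \frac{\lambda}{\sqrt{u^2+1}}\,p(\phi)\,\partial_x f$ with $\partial_x f = -(1+x)^{-2}=-f^2<0$. Since $\lambda$, $p(\phi)$ and $\sqrt{u^2+1}$ are all strictly positive throughout $\mathcal I$, I obtain $\partial_x\lambda<0$ immediately, which is the second assertion.

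The step I expect to be the genuine obstacle is the comparison of $\frac{\lambda\,fp}{\sqrt{1+p^2f^2}}$ with $\lambda$. By the closed form just derived this reduces to comparing $\frac{\lambda^2-1}{\lambda^2+1}$ with $1$. On every trajectory in $\mathcal I$ we have $p>0$ and $f>0$, hence $u>0$ and $\lambda>1$, so this factor lies strictly in $(0,1)$; the quantity is therefore strictly positive (giving $\partial_\phi\lambda>0$, the monotonicity actually invoked in verifying sign-stability and sign-symmetry, Definition~\ref{sign-symmetry}) but is bounded \emph{above}, not below, by $\lambda$. Thus the content I can rigorously secure is the sharp two-sided estimate $0<\phi(1-\phi)\partial_\phi\lambda<\lambda$ on $\mathcal I$, and it is the sign together with this control by $\lambda$ that feeds the competitive-cone argument; the displayed lower bound $\geq\lambda$ cannot hold as written, since $\frac{\lambda^2-1}{\lambda^2+1}<1$, so I would carry through the corrected comparison $\phi(1-\phi)\partial_\phi\lambda<\lambda$ (with the strict positivity retained) in place of it.
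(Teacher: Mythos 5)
Your derivation of the equality and of $\partial\lambda/\partial x<0$ is correct and is, in substance, the paper's own proof: the authors compute exactly $\frac{2}{\gamma}(1-\phi)^2\partial_\phi\lambda=\frac{f\lambda}{\sqrt{1+p^2f^2}}$ and then trade $f$ for $\frac{2(1-\phi)}{\gamma\phi}pf$, which is your identity $\phi(1-\phi)p'(\phi)=p(\phi)$ in disguise, and they get $\partial_x\lambda<0$ from $f'=-f^2<0$ just as you do. Your reformulation $\frac{pf}{\sqrt{1+p^2f^2}}=\frac{\lambda^2-1}{\lambda^2+1}$ is a clean addition, and you are right that the displayed lower bound is impossible as literally written: since this factor is identically $<1$, the chain $\lambda\frac{fp}{\sqrt{1+p^2f^2}}\geq\lambda$ fails pointwise, and restricting to trajectories cannot rescue the display as printed.

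The genuine gap is in your final paragraph: your proposed repair points the inequality the wrong way for the role the lemma actually plays, and your claim that only positivity of $\partial_\phi\lambda$ is invoked downstream is incorrect. In the sign-stability lemma the paper explicitly uses ``the first inequality'' to conclude $a_{31}<0$, and the same bound is what makes $a_{21}=(1+x)^2\bigl((1-\phi)\partial_\phi\lambda\frac{z}{2}-\frac{\lambda}{2}z+x\bigr)$ positive; what both estimates require is the \emph{lower} bound $(1-\phi)\partial_\phi\lambda\geq\lambda$. Note the factor of $\phi$: by your own equality, $(1-\phi)\partial_\phi\lambda=\frac{\lambda}{\phi}\cdot\frac{pf}{\sqrt{1+p^2f^2}}$, and since $\phi<1$ this \emph{can} exceed $\lambda$ — it does precisely when $p^2f^2(1-\phi^2)\geq\phi^2$, i.e. $\gamma f\sqrt{(1+\phi)/(1-\phi)}\geq 2$, which is plausible along orbits for parameters in $\mathcal P$ (where $\gamma=\sigma_0/(\cnacl\phi_0)$ is of order $10$ to $10^2$) because $x$, hence $1/f=1+x$, remains bounded by Proposition \ref{bounded-orbits}; this is why the lemma is stated ``on trajectories'' rather than on all of $\mathcal I$. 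So the display should be read as containing a typo (the extraneous $\phi$ on the left, carried over from the equality), with intended content $(1-\phi)\partial_\phi\lambda\geq\lambda$ along orbits. Your substitute $0<\phi(1-\phi)\partial_\phi\lambda<\lambda$, while true, cannot feed the competitive-cone argument: with only positivity and an upper bound, the sign of $a_{21}$ is indeterminate wherever $\frac{\lambda}{2}z>x$ (which occurs along the oscillation), and likewise for $a_{31}$, so sign-stability — and with it the three-dimensional limit-cycle theorem — would be left unproven. The repair you should carry out is the trajectory-wise lower bound $\gamma f\sqrt{(1+\phi)/(1-\phi)}\geq 2$ via the a priori bounds on $x$, not the reversed pointwise inequality.
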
 
\begin{proof} For $\lambda, p $ and $f$ as in (\ref{electroneutrality-soln}) and (\ref{gamma-gamma0}),   simple calculations show that 
\begin{eqnarray}
%&&\frac{\partial\lambda}{\partial\phi}= \frac{\gamma}{2}(1-\phi)^{-2}f\big(1+\frac{pf}{\sqrt{1+p^2f^2}}\big), \quad \textrm{and} \nonumber\\
&& \frac{2}{\gamma}(1-\phi)^2\frac{\partial\lambda}{\partial \phi}=\frac{f}{\sqrt{1+p^2f^2}}=\frac{f\lambda}{\sqrt{1+p^2f^2}}= \frac{2}{\gamma}\frac{(1-\phi)}{\phi}\frac{f p\lambda}{\sqrt{1+p^2f^2}},\nonumber\\
&&\frac{\partial\lambda}{\partial x}=\frac{p f'}{\sqrt{1+p^2f^2}}\lambda= -\frac{p}{(1+x)^2}\frac{\lambda}{\sqrt{1+p^2f^2}}> -\frac{p}{(1+x)^2}\lambda,
 \end{eqnarray}
 from which the stated inequalities follow. 
%Hence, $$ (1+x)^2\frac{\partial\lambda}{\partial x}>-p\lambda$$
%which yields the second inequality. 
%which, taking in account that  $f'(x)=\frac{-1}{(1+x)^2}$, yields the second inequality. 
\end{proof}
 \begin{lemma} The  Jacobian matrix $\mathcal J$ of the three-dimensional system is sign-stable and sign-symmetric. \end{lemma}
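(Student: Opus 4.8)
The plan is to compute the six off–diagonal entries of $\mathcal J=(\partial f_i/\partial x_j)$ directly, where $(x_1,x_2,x_3)=(\phi,x,z)$ and the vector field is $f_1=\mathcal R_1$, $f_2=\mathcal A_4\mathcal R_2+\mathcal A_5(1+x)^2\mathcal R_1$, $f_3=\mathcal A_1\mathcal R_3$, and then to read off the sign of each. Because $\lambda=\lambda(\phi,x)$ is fixed through the electroneutrality relation (\ref{electroneutrality-soln}), every $\phi$– and $x$–derivative of $\mathcal R_1,\mathcal R_2,\mathcal R_3$ carries a chain–rule contribution through $\lambda$, so the whole computation reduces to combining the explicit dependence of $H$, $R$ and the advective terms with the two sign facts $\partial\lambda/\partial x<0$ and $\partial\lambda/\partial\phi>0$ from Lemma \ref{inequalities1-2}, together with the elementary observations that $R'(\lambda)>0$ for $\lambda\ge1$ (since $R(\lambda)=\nu_w\cnacl(\sqrt\lambda-1/\sqrt\lambda)^2$ is increasing), $\beta>0$, and — by Proposition \ref{bounded-orbits} — that along orbits $\phi$ stays in a compact subinterval of $(0,1)$ while $x,z$ stay bounded away from $0$.

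First I would dispose of the four entries whose sign is immediate. The entry $\partial f_1/\partial z$ vanishes identically, since $\mathcal R_1$ is independent of $z$. The entry $\partial f_1/\partial x=(\partial\mathcal R_1/\partial\lambda)(\partial\lambda/\partial x)=-(1-\phi)\phi^2R'(\lambda)\,\partial\lambda/\partial x$ is a product of two negative factors, hence strictly positive. The entry $\partial f_2/\partial z=\mathcal A_4(1-\phi)(1+x)^2\lambda/2$ is strictly positive, the $\mathcal R_1$ piece of $f_2$ being $z$–free. Finally $\partial f_3/\partial x=\mathcal A_2(1-\phi)\big(1-z\,\partial\lambda/\partial x\big)$ is strictly positive because $\partial\lambda/\partial x<0$. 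Each of these has a constant sign on all of $\mathcal I$, which is all that sign–stability asks of those entries.

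The two remaining entries, $\partial f_2/\partial\phi$ and $\partial f_3/\partial\phi$, are the crux and I expect them to be the main obstacle. Writing $\partial f_2/\partial\phi=\mathcal A_4\,d\mathcal R_2/d\phi+\mathcal A_5(1+x)^2\,d\mathcal R_1/d\phi$, one sees that it contains the total $\phi$–derivative of $\mathcal R_1$ — the \emph{sign–indefinite} diagonal term — dragged in through the coupling $\mathcal A_5(1+x)^2\mathcal R_1$; this derivative carries the factor $H(\phi)-R(\lambda)$, which changes sign across the nullcline (\ref{HR}), and the factor $H'(\phi)$, which changes sign since $H$ is non–monotone. Similarly $\partial f_3/\partial\phi=-\mathcal A_1\beta e^{-\beta\phi}+\mathcal A_2(\lambda z-x)-\mathcal A_2(1-\phi)z\,\partial\lambda/\partial\phi$ pits the advective term $\mathcal A_2(\lambda z-x)$ against two manifestly negative terms. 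The plan is to restrict, via Proposition \ref{bounded-orbits}, to the invariant region actually swept by the orbits, and there to use the quantitative estimate for $\phi(1-\phi)\,\partial\lambda/\partial\phi$ from Lemma \ref{inequalities1-2} to show that the monotone ionic contribution $-R'(\lambda)\,\partial\lambda/\partial\phi$ (respectively the term $-\mathcal A_2(1-\phi)z\,\partial\lambda/\partial\phi$) dominates the sign–changing pieces, so that $\partial f_2/\partial\phi$ and $\partial f_3/\partial\phi$ keep a constant sign — which the cone $\mathcal K$ forces to be $\ge0$ and $\le0$, respectively. The relative sizes of $\mathcal A_4$ and $\mathcal A_5$ and the geometry of the nullcline, where $H-R=0$ kills the troublesome factor, are what make this domination hold; pinning these two signs is the real content of the lemma.

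With all six signs fixed, sign–symmetry follows by inspection of the three pairs. For $(1,3)$ it is automatic, since $\partial f_1/\partial z\equiv0$ makes the product $\partial f_1/\partial z\cdot\partial f_3/\partial\phi$ vanish; for $(2,3)$ both $\partial f_2/\partial z$ and $\partial f_3/\partial x$ are positive; and for $(1,2)$ the positivity of $\partial f_1/\partial x$ matches the sign $\partial f_2/\partial\phi\ge0$ established above. Hence $\dfrac{\partial f_i}{\partial x_j}(\bx)\,\dfrac{\partial f_j}{\partial x_i}(\by)\ge0$ for all $i\ne j$ and all $\bx,\by\in\mathcal I$, and together with the constant signs this gives both sign–stability and sign–symmetry in the sense of Definition \ref{sign-symmetry}; combined with Lemma \ref{inequalities1-2}, this is exactly what feeds the competitiveness of the system with respect to $\mathcal K$.
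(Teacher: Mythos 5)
Your treatment of the four easy entries coincides exactly with the paper's proof, which lists $a_{12}=-(1-\phi)\phi^2R'(\lambda)\,\partial\lambda/\partial x>0$, $a_{13}=0$, $a_{23}=(1-\phi)(1+x)^2\lambda/2>0$ and $a_{32}=-(\cA_2/\cA_1)(1-\phi)\bigl(z\,\partial\lambda/\partial x-1\bigr)>0$, and your plan for $\partial f_3/\partial\phi$ is also the paper's one-line estimate $a_{31}\leq-\beta e^{-\beta\phi}-(\cA_2/\cA_1)x<0$, obtained from the inequality $\phi(1-\phi)\,\partial\lambda/\partial\phi\geq\lambda$ of Lemma \ref{inequalities1-2}; you announce that domination but never write the inequality, which is a minor omission since it is immediate once the lemma is invoked. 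The concluding sign-symmetry check by inspection of the three pairs is likewise the paper's.

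The genuine gap is at $\partial f_2/\partial\phi$. You are right that, for $f_2=\cA_4\cR_2+\cA_5(1+x)^2\cR_1$ as in (\ref{dx}), this entry contains $\cA_5(1+x)^2\,\partial\cR_1/\partial\phi$, carrying the sign-changing factors $(2\phi-3\phi^2)\bigl(H(\phi)-R(\lambda)\bigr)$ and $(1-\phi)\phi^2\bigl(H'(\phi)-R'(\lambda)\,\partial\lambda/\partial\phi\bigr)$; but your proposed resolution --- domination by the monotone ionic term on the region swept by orbits --- is only asserted, with no estimate, and it cannot work as stated. Sign-stability in Definition \ref{sign-symmetry} is a pointwise requirement on all of $\mathcal I$ in (\ref{I}), not along orbits, and on the nullcline $H=R$ (where, as you note, the factor $H-R$ is killed) the coupling term equals $-\cA_5(1+x)^2M$ with $M$ as in (\ref{M1})--(\ref{M2}); $M$ takes both signs there (this is exactly the content of the slow/fast decomposition $\mathcal S\bigoplus\mathcal F$ and the can{\`a}rd remark of section \ref{multiscale}), so the coupling contributes with both signs, and since $\cA_5/\cA_4=\Ka\phi_0/(\sigma_0\cA_4)=10$ for parameters in $\mathcal P$ (cf.\ (\ref{x333})) it is not uniformly dominated by $\cA_4\,\partial\cR_2/\partial\phi\sim\cA_4(1+x)^2x$ where $x$ is small. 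The paper's proof never meets this obstruction because its displayed Jacobian has $a_{21}=\partial\cR_2/\partial\phi$ only: the $\cA_5(1+x)^2\cR_1$ coupling is simply absent from row 2 (in effect the Jacobian is taken for the reduced form of the $x$-equation, as in (\ref{z-3d}), with rows rescaled by the positive constants $\cA_4$ and $\cA_1$, which does not affect off-diagonal signs), whereupon $a_{21}=(1+x)^2\bigl((1-\phi)\frac{\partial\lambda}{\partial\phi}\frac{z}{2}-\frac{\lambda}{2}z+x\bigr)>0$ follows at once from Lemma \ref{inequalities1-2}. So to complete your argument you must either pass to a state variable in which the hydrogen balance (\ref{d2}) carries no $\cR_1$-coupling, or explicitly justify discarding that term as the paper implicitly does; as written, the step you yourself identify as ``the real content of the lemma'' remains a conjecture, and for the stated parameter values it fails on all of $\mathcal I$.
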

 \begin{proof}  
Let us calculate the Jacobian matrix $J:=\{a_{ij}\} $ at an arbitrary state $(\phi, x,z)$:
\begin{eqnarray}
&&a_{12}=  -(1-\phi)\phi^2R'(\lambda)\frac{\partial\lambda}{\partial x}, \quad
a_{21}= (1-\phi)(1+x)^2\frac{z}{2}\frac{\partial\lambda}{\partial\phi}-(1+x)^2(\frac{\lambda}{2}z-x),\\
&&a_{31}=  -\beta e^{-\beta\phi}-\frac{\cA_2}{\cA_1}\big((1-\phi)z\frac{\partial\lambda}{\partial\phi}-(\lambda z-x)\big), \quad a_{13}=0,\\
&& a_{23}= (1-\phi)(1+x)^2\frac{\lambda}{2}, \quad a_{32}= -\frac{\cA_2}{\cA_1}(1-\phi)(z\frac{\partial\lambda}{\partial x}-1).
%&&a_{33}=  -\frac{\cA_2}{\cA_1}(1-\phi)\lambda-\frac{\cA_3}{\cA_1}.
\end{eqnarray}
Note that  the off-diagonal elements of the matrix $J$   have the signs:
\begin{eqnarray}
 &&a_{12}>0, \, a_{13}=0, \, a_{23}>0,  \, a_{32}>0,   \,  a_{21}>0,
\, a_{31}<0.\nonumber
%\item $a_{22}<0, \, a_{33}<0$.
\end{eqnarray}
from which, sign-symmetry and sign-stability immediately follow.
For this, let us write
\begin{equation}
a_{21}=(1+x)^2\big((1-\phi)\frac{\partial\lambda}{\partial \phi}\frac{z}{2}-\frac{\lambda}{2}z+x\big)> \frac{x}{(1+x)^2}>0. 
\end{equation}
Note that $a_{23}>0$   and 
$ a_{31} <-\beta e^{-\beta\phi}-\frac{\cA_2}{\cA_1} x<0, $
where we have used the first inequality in lemma \ref{inequalities1-2}.  Finally, to check the sign of $a_{12}$,  we differentiate $R(\lambda) $ in (\ref{Rlambda}) to find
$ R'(\lambda)= (\sqrt{\lambda}-\frac{1}{\sqrt{\lambda}})(\lambda^{-\frac{1}{2}}+ \lambda^{-\frac{3}{2}})$, from which  the sign of $a_{12}$ follows, taking into account the second inequality in lemma \ref{inequalities1-2}. The identification of the corresponding cone $\mathcal K$ is done in the Supplementary Material section. 
\end{proof}

%The sign structure of the system is symbolically represented by the  diagram:
%%\begin{equation}
%$\left|\begin{matrix} * & + & 0\\
%+ &* & +\\
%-& +& *\end{matrix}\right|.$ % \label{sign-matrix} \end{equation}

%It is easy to check that for the matrix with sign structure (\ref{sign-matrix}) 
%%\begin{equation}
%$s_{12}=1=s_{23}\,  \textrm {and }\, s_{13}=0. $
%%\end{equation} 
%%For the cone identification problem, we study the solvability of the linear system in the three unknowns $m_i\in\{0,1\}$,  $i=1, 2, 3$,
%%\begin{equation}
%%m_i+m_j=s_{ij} \, \textrm {(mod 2)}, \,\,\, i<j.
%%\end{equation}
% For the Jacobian matrix of the problem, the system (\ref{m_i-m_j}) admits the solution
%% \begin{equation}
%$ m_1=m_3=1 \, \textrm {and} \,m_2=0.$ From these relations,  existence of a competitive cone for the system follows. 
% \end{equation}
 %\subsection{Existence of a periodic orbit} 
  One of the practical difficulties in the application of the Poincar{\`e}-Bendixon theorem to three-dimensional systems as compared to the two-dimensional counterpart is the verification that 
 the $\omega$-limit set does not contain equilibrium points.  The case that the system has  a single unstable equilibrium point is the simplest one  to treat.
% To apply Theorem 2 to our system, that is, with 
% $\ff$ given as in ( ), we  need to verify the hypotheses of  of such theorem.  We will proceed according to the following steps.
% 
% \begin{enumerate}
% \item  Flow fields $\varphi_t(\bx), \bx\in\mathcal I$ are bounded, according to Proposition\ref{bounded-orbits}.
% \item The $\omega$-limit sets of flows $\varphi_t(\bx)$ do not contain equilibrium points.
% \item The system is competitive. 
% \end{enumerate}
 % In addition to Theorem\ref{PB-theorem-3d} on
  Existence of a three dimensional limit cycle, together with additional properties of competitive systems  relevant to the current analysis are summarized next.
 \begin{theorem}  Let $\mathcal P$, $\mathcal M$ and $\mathcal I$ be as in definition (\ref{parameter-space}), (\ref{Mminus}) and (\ref{I}), respectively.  Then
 \begin{enumerate}
 \item The flow  on the $\omega$-limit set of orbits of the three-dimensional system in $\mathcal I$ is topologically equivalent to the flow on the $\omega$-limit set of orbits of the two-dimensional system in $\mathcal M$.
 \item 
  Let $\bp=(\Phi, X, Z)$ be the unique equilibrium point of the system. If $\bp\neq \mathbf q \in \mathcal I$, then 
$\bp\notin\omega(\mathbf q). $ 
 \item The  three-dimensional system has a limit cycle. 
 \end{enumerate}
 \end{theorem}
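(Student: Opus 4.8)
The plan is to dispatch the three assertions in order, using the second as the linchpin for the third, and to rely throughout on the competitiveness of the system with respect to the cone $\mathcal K$ together with the boundedness and invariance of $\mathcal I$ furnished by Proposition \ref{bounded-orbits}. For the first assertion I would invoke Hirsch's structural theorem for limit sets of three-dimensional competitive systems \cite{Hirsh1,Hirsh2,Smith-survey}: since $\boldsymbol f$ is $C^1$ and competitive with respect to $\mathcal K$, the flow on any compact $\omega$-limit set is topologically equivalent, with preserved time orientation, to the flow on a compact invariant set of a Lipschitz planar system. Because every positive semi-orbit with data in $\mathcal I$ is bounded and stays away from $\partial\mathcal I$ (Proposition \ref{bounded-orbits}), each such $\omega$-limit set is a compact subset of $\mathcal I$ and the theorem applies. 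It then remains only to identify this planar flow with the two-dimensional system on $\mathcal M$, which is precisely the earlier equality $\omega(\pi^+)=\omega(\pi^+_{\mathcal M})$; the first assertion follows.

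The second assertion is the main obstacle, and it is what makes Theorem \ref{PB-theorem-3d} applicable. Recall from the Hopf analysis of Section \ref{numerical-simulations} that $\bp$ is hyperbolic with a two-dimensional unstable manifold $W^u(\bp)$ (the complex pair with positive real part) and a one-dimensional stable manifold $W^s(\bp)$ (the negative eigenvalue). I would argue by contradiction: suppose $\bp\in\omega(\mathbf q)$ for some $\mathbf q\neq\bp$ with $\mathbf q\notin W^s(\bp)$. Then $\omega(\mathbf q)\neq\{\bp\}$, since otherwise $\mathbf q$ would converge to $\bp$ and hence lie on $W^s(\bp)$. The Butler--McGehee lemma then forces $\omega(\mathbf q)$ to contain a point $r\neq\bp$ of $W^s(\bp)$, and by invariance the full orbit of $r$ lies in the compact set $\omega(\mathbf q)\subset\mathcal I$. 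The delicate step is to derive a contradiction from this: tracing the orbit of $r$ backward along the one-dimensional $W^s(\bp)$, it departs every neighborhood of $\bp$, so its $\alpha$-limit set is a nonempty compact invariant subset of $\omega(\mathbf q)$; since $\bp$ is the unique equilibrium, this would force either a periodic orbit inside $\omega(\mathbf q)$ besides $\bp$ or a homoclinic connection to $\bp$. Both are incompatible with the planar topological picture supplied by the first assertion (a planar compact limit set containing a single spiral-type equilibrium admits no such recurrent or homoclinic structure), and this contradiction yields $\bp\notin\omega(\mathbf q)$. The one-dimensional locus $W^s(\bp)$ is the only exceptional set, of measure zero, on which instead $\omega(\mathbf q)=\{\bp\}$; verifying that the backward stable fiber cannot remain trapped in a compact subset of $\mathcal I$ is the point that must be handled with care.

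With the first two parts established, the third is immediate and mirrors the argument used for the analogous virus model \cite{virus}. Choose $\mathbf q\in W^u(\bp)\setminus\{\bp\}$ close to $\bp$; such a point exists because $W^u(\bp)$ is two-dimensional and is not contained in the one-dimensional $W^s(\bp)$. By Proposition \ref{bounded-orbits}, the semi-orbit through $\mathbf q$ is bounded, so $\omega(\mathbf q)$ is nonempty, compact, and invariant in $\mathcal I$, while the second assertion guarantees it contains no equilibrium, the unique candidate $\bp$ being excluded. Hirsch's Poincar\'e--Bendixon theorem for competitive systems, Theorem \ref{PB-theorem-3d}, then asserts that $\omega(\mathbf q)$ is a periodic orbit, that is, a three-dimensional limit cycle, completing the proof.
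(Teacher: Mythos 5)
Your items 1 and 3 follow essentially the paper's own route: item 1 via Hirsch's structural theorem for limit sets of three-dimensional competitive systems, boundedness of orbits (Proposition \ref{bounded-orbits}), and the earlier identification $\omega(\pi^+)=\omega(\pi^+_{\mathcal M})$; item 3 by combining item 2 with compactness of the limit set and Theorem \ref{PB-theorem-3d}. The divergence, and the genuine gap, is in item 2. The paper's argument there is order-theoretic, not topological: it invokes the theorem that a compact limit set of a competitive or cooperative system cannot contain two points related by $\ll$ (\cite{Smith-survey}, Theorem 3.2). After your (correct) Butler--McGehee step places some $r\in W^s(\bp)\setminus\{\bp\}$, and by invariance its whole orbit, inside $\omega(\bq)$, the clean contradiction is: the stable manifold is one-dimensional and tangent at $\bp$ to the eigenvector of the negative eigenvalue, which the sign-stable/sign-symmetric structure of the Jacobian places in the interior of the order cone $\mathcal K$; hence $r(t)\ll\bp$ or $\bp\ll r(t)$ for large $t$, and since $\bp\in\omega(\bq)$ by hypothesis, the compact limit set contains two $\ll$-related points, which is forbidden. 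Your substitute for this step --- declaring the resulting periodic orbit or homoclinic connection ``incompatible with the planar topological picture supplied by the first assertion'' --- does not go through. The equivalence furnished by item 1 is only a homeomorphism of the limit set itself onto a compact invariant set of some planar Lipschitz flow: it does not preserve the local type of the equilibrium (so ``spiral-type'' is not available on the planar side), planar flows do admit homoclinic loops, and the planar theorem that a limit set containing a periodic orbit must equal it applies to $\omega$-limit sets of the planar flow, not to arbitrary invariant sets obtained through such an equivalence. The ordering theorem is precisely the missing ingredient, and using it also requires checking the strong ordering of the stable eigenvector, which your sketch never addresses.

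A secondary discrepancy: you concede that $\omega(\bq)=\{\bp\}$ for $\bq\in W^s(\bp)\setminus\{\bp\}$ and leave ``to be handled with care'' whether the backward stable fiber can remain in a compact subset of $\mathcal I$; as written you therefore prove item 2 only for $\bq\notin W^s(\bp)$. This restricted form is in fact how the analogous statement is formulated in the virus-dynamics analysis \cite{virus} that the paper points to, and the paper's blanket statement glosses over the same case; but a proof of the theorem exactly as stated must either show $W^s(\bp)\cap\mathcal I=\{\bp\}$ or state the restriction explicitly. Note that this exception is harmless for item 3, since your choice of $\bq\in W^u(\bp)\setminus W^s(\bp)$ avoids it, so once the ordering argument repairs item 2 your deduction of the limit cycle stands as in the paper.
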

 \begin{proof}
 The first item follows from the fact that the two-dimensional system is Lipschitz, the competitiveness of the three dimensional system,  and from the  compactness of the $\omega$-limit set of both systems. 
 %The projection operator $Q$ is as in (\ref{projection}).
    The second statement follows from competitiveness and the fact that the equilibrium point is unique and hyperbolic. 
In fact, it is a consequence of the theorem that states that {\sl a compact limit set of a competitive or cooperative system cannot contain two points related by $<<$}  (\cite{Smith-survey},  Theorem 3.2). Finally,  property (3) is a consequence of (2),  the compactness of the  $\omega$-limit set and Theorem \ref{PB-theorem-3d}. 
 \end{proof}
 
\begin{remark} In the case that the unique equilibrium point $\bp$  is hyperbolic, to prove that $\bp\notin\omega(\mathbf q)$, $\mathbf q\in \mathcal D$, it is necessary to assume that the system is competitive. This is the case encountered in some applications such as in virus dynamics \cite{virus}.
\end{remark} 
\section{Multiscale Analysis}{\label{multiscale}}
The analysis carried out in   the previous sections is mostly  based on the time scale separation between the mechanical and chemical evolution components of the system. In turn, this is based on   the relative sizes of the dimensionless parameters $\mathcal A_i$. The goal of this section is to find estimates for the solutions of the three dimensional system with respect to the fast time scale.  Let us consider the time scales $\bar t$ and 
$\tau$ as in (\ref{time-scales}), representing the  {\it fast} and {\it slow}  times, respectively.  We consider the system of equations (\ref{phi333})-(\ref{z333}), % (\ref{f}), (\ref{electroneutrality0}) together with (\ref{R1})-(\ref{R3}).
  together with (\ref{electroneutrality-soln})-(\ref{R3}).
%\begin{eqnarray}
%\cA_4\frac{d\phi}{d\tau}=&& \cR_1(\phi, \lambda)\\
%\frac{d\bar x}{d\tau}=&&\cR_2(\phi,x,\lambda) ,\\
%\frac{\cA_4}{\cA_1}\frac{d\bar z}{d\tau}=&& \cR_3(\phi,x,z,\lambda), \label{d333}\\
%&&2(1-\phi)\lambda = \gamma\phi f+
%\sqrt{(\gamma{\phi}f)^2+ 4(1-\phi)^2}, \label{electroneutrality} \\
%&&f=(1+{x})^{-1}, \label{f00}
%\end{eqnarray}
%where 
%\begin{eqnarray}
%\cR_1(\phi,\lambda)=&& (1-\phi)\phi^2\big( H(\phi) -R(\lambda)\big)\\
%\cR_2(\phi, x, \lambda)=&&(1-\phi)(1+x)^2(\lambda\bar z-\bar x),\\
%\cR_3(\phi, x, z, \lambda)=&&  e^{-\beta\phi} -\mathcal A_2\sA_1^{-1}{(1-\phi)}(\lambda \bar z-\bar x)- \mathcal A_3\sA_1^{-1} \bar z, \label{}\\
%%\Gamma=&&\phi(1+\frac{\bar\sigma_0\phi}{\phi_0(1+\bar x)^2})^{-1}(1-\phi)
%\end{eqnarray}
%Note that for the parameters $\cA_i$ from section {\bf D}, $\cA_4\approx 10^{-5}, \,\, \frac{\cA_4}{\cA_1}\approx 10^{-3}$. 
We propose the following solution ansatz
\begin{eqnarray}
&&\phi(\bar t,\tau, \eps)=\Phi(\tau,\eps) + \tphi(\bar t,\eps)=\big(\sum_{j=0}^N\Phi_j(\tau)+\mathcal E_0^1\big)+ \,\big(\sum_{j=0}^N\tphi_j(\bt)+\mathcal E_I^1\big), \label{ilphi}  \\
&&x(\bar t,\tau,\epsilon)= X(\tau,\epsilon) + \tilde x(\bar t,\epsilon)= \big(\sum_{j=0}^NX_j(\tau)+\mathcal E_0^2\big)+ \,\big(\sum_{j=0}^N\tx_j(\bt)+\mathcal E_I^2\big) \label{ilx}\\
&&z(\bar t,\tau,\eps) =Z(\tau,\eps) +\tz(\bar t,\eps) + \big(\sum_{j=0}^NZ_j(\tau)+\mathcal E_0^3\big)+ \,\big(\sum_{j=0}^N\bar z_j(\bt)+\mathcal E_I^3\big) , \label{ilz}
\end{eqnarray}
with 
% \begin{equation}\eps=\mathcal A_4, \,\, \bar t=\frac{\tau}{\eps},\label{small-p}\end{equation}
 $\eps=\mathcal A_4, \,\, \bar t=\frac{\tau}{\eps},$
and $\mathcal E_0^j, \mathcal E_I^j$, $j=1 ,2 ,3,$ denote error terms.
We now consider $\frac{\cA_4}{\cA_1}$ fixed and $0<\eps<<1$.  Equations for the terms in (\ref{ilphi})-(\ref{ilz}) are obtained as follows:
%POSSIBLE NOTATION CONFLICT $\bar t$ and $\tau$.
%\begin{enumerate}

\noindent {\bf {\small 1}.\,}
Holding $\tau$ fixed, and letting $\eps\to 0$ yields equations for $\Phi, X, Z$. These  are studied in section (\ref{inertial-manifold}). 

\noindent{\bf {\small 2.}\,}
 Holding $\bar t$ fixed, and  letting $\eps\to 0$   yields equations for  the initial layer term $\tphi, \tx $ and $\tz$. Moreover, we seek solutions with the following asymptotic property
$\{\tphi(t,\eps),  \, \tx(t,\eps), \, \tz(t,\eps)\}=O(\exp(-ct)) \,  \textrm{as} \, t\to \infty,$
where $c>0$ is a material dependent parameter. Note that since $X$ and $Z$ satisfy initial conditions at $\tau=0$, then 
$\tx=0=\tz,$ % \label{0il}
and it is necessary to calculate only the initial layer for $\tilde\phi$.
\subsection{Solutions in the  fast time scale $\bar t$}
 To obtain equations for the initial layer terms, we substitute expressions (\ref{ilphi})-(\ref{ilz}) into the governing equations,  fix $\bt>0$ and take the limit $\eps\to 0$. In particular, this yields the limit $\tau=0$. 
Moreover, taking into account that $\cR_1(\Phi, \Lambda)=-(1-\Phi)\Phi^2\big(R(\Phi)-H(\Lambda)\big)=0$,  and linearizing about $(\Phi, \Lambda)$ gives the following 
 equation for  $\tphi(\bar t,\tau,\eps):$
  \begin{equation}\label{M1}
\frac{d\tphi}{d\bar t}= -M(\Phi,\Lambda)\tphi(\bar t,\eps)+ o(\tphi), %\label{phi-il}
\quad
 M(\Phi, \Lambda)
 := \frac{\partial \cR_1}{\partial \phi}(\Phi, \Lambda)+ \frac{\partial \cR_1}{\partial \lambda}(\Phi, \Lambda)\frac{\partial\lambda}{\partial\phi}. 
\end{equation}
We point out that  $(\Phi, \Lambda, Z)$ solve the two-dimensional system.
To calculate the right hand side of (\ref{M1}), we  take  derivatives in equation (\ref{lambda-p-q}),
%\begin{eqnarray}&& 
$\frac{\partial\lambda}{\partial \phi}(\Phi, X)= p'(\Phi) f(X) + q'(\Phi), \,\, R'(\Lambda)=\gamma_0(1-\frac{1}{\Lambda^2}).$
%\end{eqnarray}
%Note that  in differentiating $\lambda$, we use the fact that  $\phi$ and $x$ (or $\phi$ and $\lambda$) are now independent fields. 
 So,
\begin{eqnarray}
%&& \frac{\partial \cR_1}{\partial \phi}(\Phi(0,\eps), \Lambda(0, \eps))+ \frac{\partial \cR_1}{\partial \lambda}(\Phi(0,\eps), \Lambda(0, \eps))\frac{\partial\lambda}{\partial\phi}\\
M(\Phi, \Lambda)&&= -(1-\Phi)\Phi^2\big(H'(\Phi)-R'(\Lambda)(p'f(X) +q'(\Phi))\big)\nonumber\\ 
&&= -\frac{1}{2}(1-\Phi)\Phi^2\big(H'(\Phi)(1+\frac{1}{\Lambda^2}) -\frac{\gamma\gamma_0}{2(1-\Phi^2)}f(X)(1-\frac{1}{\Lambda^2})\big). \label{M2}
\end{eqnarray}
We point out that according to the earlier observation that $(\Phi, \Lambda)$ are evaluated at $\tau=0$, it follows that the rate of decay of the fast component of the solution depends on  the projection of the initial data on the slow manifold. 
\begin{lemma} Consider initial data $(\phi^0, \lambda^0, z^0)\in\mathcal I$ and let  $M^0:=M(\phi^0, \lambda^+(\phi^0), z^0)$.
If $M_0>0$, then the solution of the linear equation (\ref{M1}) satisfies $\tilde\phi(\bt)=O(e^{-M^0 \bt}), $ for $\bt>0$ large. Otherwise,  
 $\tilde\phi(\bt)=O(e^{|M_0| \bt}), $ for $\bt>0$ large. 
\end{lemma}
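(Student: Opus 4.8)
The plan is to read (\ref{M1}) as a scalar, constant-coefficient linear ordinary differential equation in the fast variable $\bar t$ and to integrate it explicitly, the only substantive points being the justification that the coefficient is frozen and the control of the neglected remainder. I would first make precise the observation recorded immediately before the statement: in the fast-time limit one holds $\bar t$ fixed and sends $\eps\to 0$, so that $\tau=\eps\bar t\to 0$ and the slow fields collapse to their initial projections, $\Phi\to\phi^0$ and $\Lambda\to\lambda^+(\phi^0)$. Hence the coefficient in (\ref{M1}) is the constant $M^0:=M(\phi^0,\lambda^+(\phi^0),z^0)$, given explicitly by (\ref{M2}), and (\ref{M1}) reads
\begin{equation}
\frac{d\tphi}{d\bar t}=-M^0\,\tphi+o(\tphi). \label{plan-lin}
\end{equation}

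Next I would integrate the leading part. Discarding the remainder, the solution of $\frac{d\tphi}{d\bar t}=-M^0\tphi$ is exactly $\tphi(\bar t)=\tphi(0)\,e^{-M^0\bar t}$, and reading off the sign of $M^0$ yields the two regimes of the statement: if $M^0>0$ the initial-layer correction decays, $\tphi(\bar t)=O(e^{-M^0\bar t})$, whereas if $M^0\le 0$ the linear flow grows like $e^{-M^0\bar t}=e^{|M^0|\bar t}$, giving the upper bound $\tphi(\bar t)=O(e^{|M^0|\bar t})$. The hypothesis $M^0>0$ is exactly the condition that the branch of $\mathcal M$ carrying the initial projection is attracting along the fast fibres, which reconciles the lemma with the decay estimate (\ref{estimate}); the explicit expression (\ref{M2}), together with $\Lambda>1$ and the sign of $H'(\Phi)$, furnishes the corresponding sign criterion and shows $M^0\neq 0$ away from the turning points $\phi_s^i$.

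To promote this leading-order computation to a genuine bound on solutions of (\ref{plan-lin}), I would use variation of constants and a Gronwall argument. Writing the remainder as $r(\bar t)=o(\tphi)$, so that $|r|\le\delta|\tphi|$ with $\delta$ arbitrarily small for $\bar t$ large (equivalently, for $\tphi$ near $\mathcal M$), the representation $\tphi(\bar t)=e^{-M^0\bar t}\tphi(0)+\int_0^{\bar t}e^{-M^0(\bar t-s)}r(s)\,ds$ lets one absorb the integral into the exponential and preserve the rate. I expect the control of the $o(\tphi)$ term to be the main obstacle. In the decay case one must check that the quadratic Taylor remainder of $\cR_1$ about $(\Phi,\Lambda)$ does not reintroduce a slower rate; this follows from the smallness of the initial layer together with the a priori bounds on $\phi$ of Proposition \ref{boundedness}, which keep $(1-\phi)\phi^2$, and hence $M$, bounded away from degeneracy. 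In the growth case the conclusion is only an upper bound, valid while the linearization (\ref{M1}) remains accurate, that is, while $\tphi$ stays within the regime in which the ansatz (\ref{ilphi})--(\ref{ilz}) is justified.
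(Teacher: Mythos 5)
Your proposal is correct and follows essentially the same route the paper intends: the paper states this lemma without proof, treating it as immediate from integrating the frozen-coefficient linear equation (\ref{M1}) with $(\Phi,\Lambda)$ evaluated at $\tau=0$ (hence rate $M^0$ determined by the projection of the initial data onto the slow manifold), which is precisely your leading-order computation. Your variation-of-constants/Gronwall absorption of the $o(\tphi)$ remainder, and the caveat that the growth case is only an upper bound while the linearization remains valid, merely make explicit what the paper leaves implicit.
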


This lemma motivates the definition of {\it slow} and {\it fast}  manifolds  of the (three-dimensional) system. Let 
\begin{eqnarray}
&& \mathcal S= \{(\phi, \lambda, z): \lambda=\lambda^+(\phi),  \, M>0\}, \quad \mathcal F=  \mathcal I/\mathcal S, %\{(\phi, \lambda, z): \lambda^\perp:=\lambda-\lambda^+(\phi)\neq 0\},
\end{eqnarray}
with $\lambda^+$ as in (\ref{lambdapm}). Note that the decomposition $\mathcal S\bigoplus\mathcal F$ corresponds to that in (\ref{ilphi})-(\ref{ilz}).
\begin{remark}
Note that a sufficient condition for $M^0>0$ is that $H'(\phi^0)\leq0$, that is $\phi^0\in \mathcal N^{-}$. So, the first inequality may still hold in the case that $H'(\phi^0)>0$, that is, for $\phi_0>\phi_{\textrm{\tiny{min}}}$ or $\phi_0<\phi_{\textrm{\tiny{max}}}$ in figure \ref{Hfig}. This property is referred to as the reduced system having a {\it can{\`a}rd} structure \cite{Bold-canards2003, Showalter-canards1991}.
The case $M=0$ cannot be characterized in terms of linear stability. 
%(Comment on positivity of $M^0$ in terms of initial data belonging to $\mathcal M$.)
\end{remark}
%Their properties are  summarized in the following proposition. 
%\begin{proposition} The manifold $\mathcal F$ is invariant.  Let $\eps>0$ be sufficiently small and let $(\phi(t), \lambda(t), z(t))$ denote a solution of the three-dimensional system, with initial data in $\mathcal I$. %and let $\lambda=\lambda(x)$.
%Then  $(\phi, \lambda, z)\in \mathcal S\bigoplus\mathcal F$, that is,   the solution admits a unique decomposition into  interior and  initial layer  components. 
%\end{proposition}
\begin{theorem}\label{convergence}
Let $(\phi^0, x^0, z^0)\in\mathcal I$ and suppose that the parameters of the system belong to the class $\mathcal P$. 
Then there exists $\epsilon_0>0$ such that for $\epsilon\in(0,\epsilon_0) $ the governing system has a unique $C^1$-solution $(\phi(t; \epsilon), x(t; \epsilon), z(t; \epsilon))\in\mathcal I$, for all $t>0$. Moreover $(\phi(\cdot), \lambda(\cdot), z(\cdot))\in  \mathcal S\bigoplus\mathcal F  $  and  it admits the asymptotic expansions (\ref{ilphi}), (\ref{ilx}) and (\ref{ilz}), with the property that 
%\begin{eqnarray}
$\mathcal E_0^k= O(\eps^N), \,\, \mathcal E_I^k=O(e^{-M_0\bt}), \,\, k=1,2,3,$
%\end{eqnarray}
for sufficiently large $\bt$.
\end{theorem}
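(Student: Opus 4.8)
The plan is to read Theorem~\ref{convergence} as a Tikhonov--Fenichel singular perturbation statement for the fast--slow system (\ref{phi333})--(\ref{z333}) with small parameter $\eps=\cA_4$, the fast variable being $\phi$, and to establish it in three stages: global well-posedness, construction of the formal two-scale expansion, and rigorous error control via an exponential-dichotomy/Gronwall estimate. First I would dispose of existence and uniqueness. Since $\boldsymbol f$ is $C^1$ on the convex set $\mathcal I$, the Picard--Lindel\"of theorem gives a unique local $C^1$-solution for each initial datum; Proposition~\ref{bounded-orbits} confines the orbit to a compact set $K\subset\mathcal I$ bounded away from $\partial\mathcal I$, with bounds uniform for $\eps$ small, and standard continuation upgrades this to a unique global $C^1$-solution on $[0,\infty)$, thereby fixing an admissible $\eps_0$.

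Next I would construct the formal expansion (\ref{ilphi})--(\ref{ilz}) by separating outer and inner contributions. For the outer (slow) part, substituting $\Phi=\sum_{j=0}^N\eps^j\Phi_j$, $X=\sum_{j=0}^N\eps^jX_j$, $Z=\sum_{j=0}^N\eps^jZ_j$ and collecting powers of $\eps$ yields, at leading order, the algebraic constraint $\cR_1(\Phi_0,\Lambda_0)=0$ --- i.e.\ the critical manifold $\mathcal M$ of section~\ref{inertial-manifold} --- together with the reduced flow (\ref{z-3d}) for $(X_0,Z_0)$. At each higher order the term $\Phi_j$ solves a \emph{linear} algebraic equation with coefficient $\partial\cR_1/\partial\phi$, invertible precisely where $M\neq0$, after which $X_j,Z_j$ follow from linear ODEs; this produces the outer remainder $\mathcal E_0^k=O(\eps^N)$. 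For the inner (fast) part, one linearises the $\phi$-equation about the slow manifold, which is exactly (\ref{M1}); because the parameters lie in $\mathcal P$ and the trajectory lies in $\mathcal S$ we have $M_0>0$, so the initial-layer correction decays as $\tphi=O(e^{-M_0\bt})$, while the matching of initial data forces $\tx=\tz=0$, giving $\mathcal E_I^k=O(e^{-M_0\bt})$.

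The final stage is rigorous error control. Inserting the truncated ansatz $\phi_N=\Phi+\tphi$, $x_N=X$, $z_N=Z$ into the full system leaves a residual that is $O(\eps^{N})$ once the initial layer has decayed. Writing the deviation of the true solution from $(\phi_N,x_N,z_N)$ and linearising, the fast $\phi$-direction is contracted at rate $M_0>0$ (an exponential dichotomy), while the slow $(x,z)$-directions are controlled by Gronwall's inequality over the compact orbit $K$; this is the quantitative content of Tikhonov's theorem and yields the claimed bounds on $\mathcal E_0^k$ and $\mathcal E_I^k$. The main obstacle is that normal hyperbolicity \emph{fails} at the fold points $\phi_s^1,\phi_s^2$ of (\ref{crit-points}), where $M=0$ and (\ref{M1}) degenerates: there the spectral gap underlying the dichotomy vanishes and the estimates break down. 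I would therefore establish the expansion on each normally hyperbolic sub-branch of $\mathcal S$ on which $M\geq\delta>0$, and treat the passage across the folds separately --- this is precisely the canard/jump regime flagged in the preceding remark, where the two branches $\mathcal M_1,\mathcal M_2$ are reconnected in the fast time $\bt$ through the weak-solution construction of Definition~\ref{weak-solution}.
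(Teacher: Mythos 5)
Your proposal is correct in substance and, on the decisive step, takes a genuinely different route from the paper. The paper's proof (stated only as a sketch) proceeds by linearizing the three-dimensional system about the solution of the two-dimensional reduced system, rewriting the resulting problem as a system of integral equations, and applying \emph{Schauder's fixed point theorem} to obtain the remainder bounds $\mathcal E_0^k=O(\eps^N)$, $\mathcal E_I^k=O(e^{-M_0\bt})$. You instead close the argument Tikhonov-style: after the same outer/inner construction (your leading-order constraint $\cR_1(\Phi_0,\Lambda_0)=0$ recovering $\mathcal M$, the linear algebraic solvability of higher-order terms where $M\neq 0$, and the layer equation (\ref{M1}) with $\tilde x=\tilde z=0$ all match the paper's section on multiscale analysis), you control the error by an exponential dichotomy in the fast $\phi$-direction at rate $M_0>0$ combined with Gronwall in the slow $(x,z)$-directions over the compact invariant set from Proposition \ref{bounded-orbits}. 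The trade-off: your Gronwall/dichotomy argument is constructive and yields explicit constants tied to $M_0$ and to the lower bound $M\geq\delta>0$ on each normally hyperbolic branch, whereas the paper's Schauder route solves the nonlinear remainder equation in one stroke without exhibiting a contraction rate, at the price of being non-quantitative; for this system either works, since the compactness needed for Schauder and the uniform bounds needed for Gronwall come from the same a priori estimates. One point where your write-up is actually \emph{more} careful than the paper: you note explicitly that normal hyperbolicity fails at the fold points $\phi_s^1,\phi_s^2$ of (\ref{crit-points}), where $M=0$ and (\ref{M1}) degenerates, and you restrict the expansion to sub-branches with $M\geq\delta>0$, reconnecting across the folds through the fast dynamics and the weak-solution construction of Definition \ref{weak-solution}. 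The paper leaves this implicit (it surfaces only in the remark on the can{\`a}rd structure and the case $M=0$), and indeed the theorem's ``for all $t>0$'' must be read modulo these jump passages; your formulation makes that limitation precise rather than introducing a gap.
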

The proof  follows from linearization of the three-dimensional system with respect to solutions of the two dimensional one, subsequently transforming it into a system of integral equations, and applying Schauder's fixed point theorem to it. 
\begin{remark}
%{Further Scale Reduction: One-dimensional System}
We consider the  model  obtained  by further reducing the time scale, that is, setting the limiting problem in the {\it slow } scale as
%\begin{equation}
$0= \cR_1(\phi, \lambda), \,\,
\frac{d\bar x}{d\tau}=\cR_2(\phi,\bar x,\lambda), \,\,
0=\cR_3(\phi,\bar x, \bar z,\lambda),$ %\label{d333}%\\
%\end{equation}
together with equations (\ref{electroneutrality-soln}).
This is consistent with taking $\frac{\cA_4}{\cA_1}=O(\eps^{-3})$. 
 This model yields that proposed by Siegel and Li \cite{li-siegel99}, that sets a rely equation for the {\it product} of the reaction, in this case, the hydrogen ion concentration in the membrane. 
\end{remark}
\begin{figure}\label{oscillations-lingxing-april3}
\centerline{
\includegraphics[scale=.4]{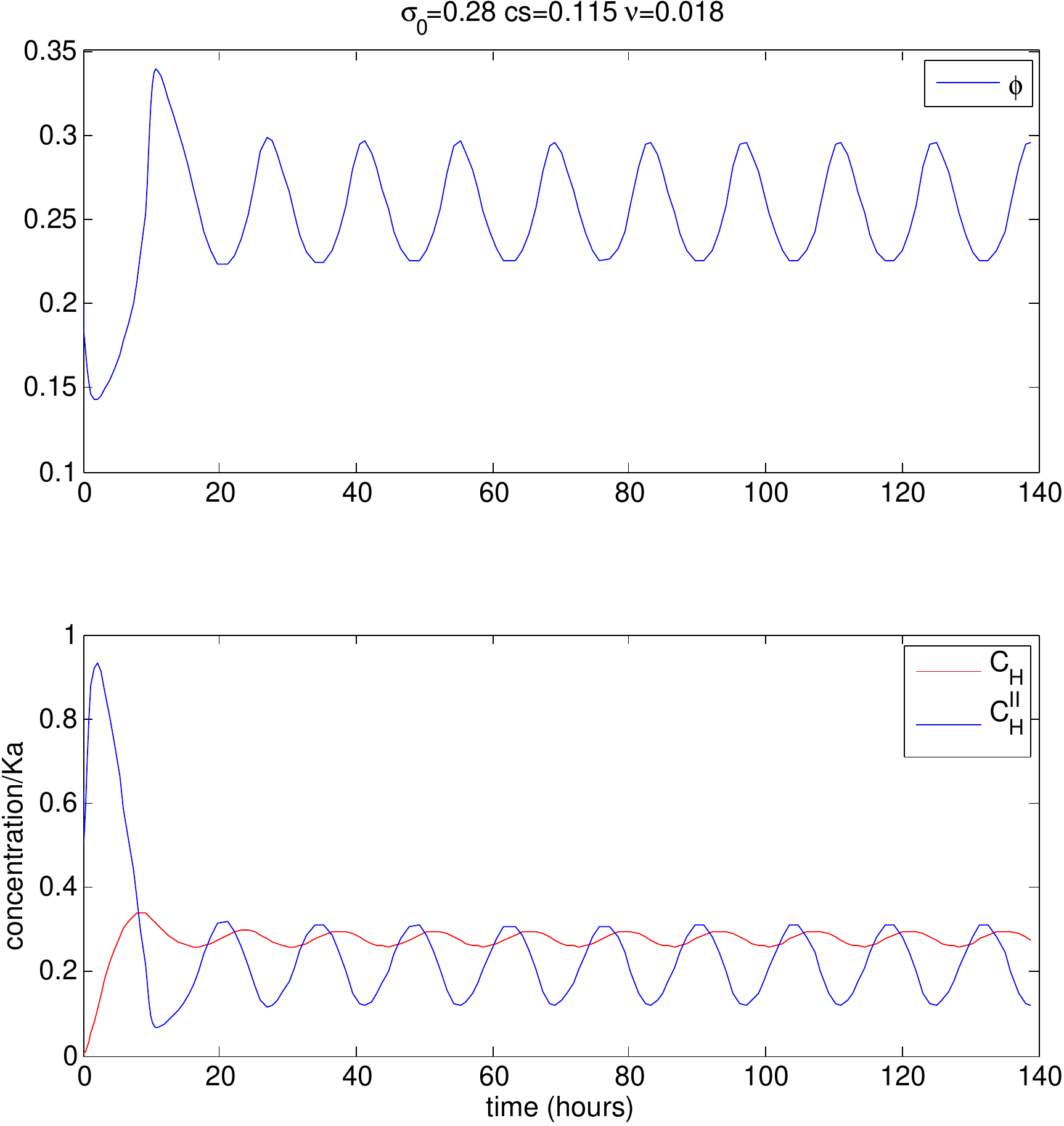}} 
\caption{The top plot represents the swelling dynamics of the membrane in terms of the polymer volume fraction. The graphs of the bottom plot refer to the evolution of ${\textrm H}^+$ in
the membrane (red) and in cell II (blue), respectively. The oscillatory behavior is compatible with the GnRH pulse release.}%{oscillations-lingxing-april3}}
\end{figure}
%\pagebreak
\section {Concluding Remarks}

We have analyzed a {\it lumped} model for a chemomechanical oscillator suitable for rhythmic drug delivery.  The model consists of a system of ordinary differential equations for the chemo-mechanical fields. For this system, we showed existence of periodic solutions, which correspond  to experimentally and numerically observed oscillations. The tools of the analysis involve multiscale  and dynamical systems methods,  including the theory of competitive dynamical systems \cite{Hirsh1}. 

The  membrane model, which ignores gradients of solute concentrations and swelling within the membrane, can be replaced by a distributed, PDE based system which, in addition to more accurately portraying the physical situation, can include self consistent, natural boundary conditions at the interfaces between the membrane and the two chambers. The results presented here will not be altered qualitatively, though there will be quantitative differences.

%The
%justification
%for the parameter selection procedures is that there are known shortcomings in
%the model which would preclude its ability to make quantitative predictions
%based on more accurate parameter specifications. These shortcomings, along
%with future remedies, will be discussed in future work.

Overall, the model presented here dealt with the fundamental mechanisms underlying oscillatory behaviour of a table-top experimental device. It must be regarded as a first step, since complications associated with the buildup of gluconate ion in the system, which buffers and affects the dynamics of pH oscillations, have not been included.  Also the effects of endogenous phosphate and bicarbonate buffering species would need to be included in a more comprehensive model, which would be of higher dimensionality, even in the lumped framework.  These buffering effects are currently the main hurdle to develop an in-vivo device. 
%Although the predictions of the model studied in this article show a very good agreement with the results of the table-top device, there are  significant hurdles  for its in-vivo application.
% These are due to the pH balancing role of the  bicarbonate groups present in the human body  that could eventually suppress the gradient in pH, responsible for  oscillatory behavior in the present model.  
% 
%\renewcommand{\thesection}{\Alph{section}}
%\renewcommand{\thesection}{S}
\bibliography{gel}
%  \bibstyle{gel}
\bibstyle{siam} 
\end{document}